\newcommand{\np}{{\em NP}\xspace}
\newcommand{\ztime}{{\em ZTIME}\xspace}
\newcommand{\E}[2][{}]{\ensuremath{\mathrm{E}_{#1}\bigl[#2\bigr]}}   
\newtheorem{theorem}{Theorem}[section]
\newtheorem{lemma}[theorem]{Lemma}
\newtheorem{claim}[theorem]{Claim}
\newtheorem{corollary}[theorem]{Corollary}
\newtheorem{algorithm}{Algorithm}}
\newcommand{\lp}[1]{\ensuremath{\eqref{mlufllp}_{#1}}}
\def\blksquare{\rule{2mm}{2mm}}
\def\qedsymbol{\blksquare}
\newcommand{\bg}[1]{\medskip\noindent{\bf #1}}
\newcommand{\ed}{{\hfill\qedsymbol}\medskip}
\newenvironment{proof}{\bg{Proof : }}{\ed}
\newenvironment{proofof}[1]{\bg{Proof of #1 : }}{\ed}
\newenvironment{labellist}[1]{\begin{list}{{#1}\arabic{enumi}.}{\usecounter{enumi}
\addtolength{\leftmargin}{-1ex}}}{\end{list}}
\newcommand{\A}{\ensuremath{\mathcal{A}}}
\newcommand{\C}{\ensuremath{\mathcal{C}}}
\newcommand{\I}{\ensuremath{\mathcal I}}
\newcommand{\F}{\ensuremath{\mathcal F}}
\newcommand{\D}{\ensuremath{\mathcal D}}
\newcommand{\T}{\ensuremath{\mathcal T}}
\newcommand{\Nc}{\ensuremath{\mathcal N}}
\newcommand{\Pc}{\ensuremath{\mathcal P}}
\newcommand{\Qc}{\ensuremath{\mathcal Q}}
\newcommand{\OPT}{\ensuremath{\mathit{OPT}}}
\newcommand{\frall}{\ensuremath{\text{ for all }}}
\newcommand{\sm}{\ensuremath{\setminus}}
\newcommand{\es}{\ensuremath{\emptyset}}
\newcommand{\assign}{\ensuremath{\leftarrow}}
\newcommand{\ceil}[1]{\ensuremath{\left\lceil#1\right\rceil}}
\newcommand{\floor}[1]{\ensuremath{\left\lfloor#1\right\rfloor}}
\newcommand{\poly}{\operatorname{\mathsf{poly}}}
\newcommand{\polylog}{\operatorname{polylog}}
\newcommand{\e}{\ensuremath{\epsilon}}
\newcommand{\gm}{\ensuremath{\gamma}}
\newcommand{\sse}{\subseteq}
\newcommand{\ignore}[1]{}
\def\z{\mathfrak{z}}
\def\Time{\mathsf{T}}
\def\T{{\mathcal T}}
\def\br#1{{{(#1)}}}
\newcommand{\ufl}{{\small \textsf{UFL}}\xspace}
\newcommand{\ml}{{\small \textsf{ML}}\xspace}
\newcommand{\tsp}{{\small \textsf{TSP}}\xspace}
\newcommand{\mlufl}{{\small \textsf{MLUFL}}\xspace}
\newcommand{\mssc}{{\small \textsf{MSSC}}\xspace}
\newcommand{\mgl}{{\small \textsf{MGL}}\xspace}
\newcommand{\gst}{{\small \textsf{GST}}\xspace}
\newcommand{\zfc}{{\small \textsf{ZFC}}\xspace}
\newcommand{\mufl}{\ensuremath{\mathsf{UFL}}}
\newcommand{\mzfc}{\ensuremath{\mathsf{ZFC}}}
\newcommand{\kmed}{\ensuremath{k\mathsf{Med}}}
\newcommand{\fcost}{\ensuremath{F^*}}
\newcommand{\ccost}{\ensuremath{C^*}}
\newcommand{\lcost}{\ensuremath{L^*}}
\newcommand{\iopt}{\ensuremath{O^*}}
\newcommand{\tx}{\ensuremath{\tilde x}}
\newcommand{\ty}{\ensuremath{\tilde y}}
\newcommand{\hx}{\ensuremath{\hat x}}
\newcommand{\hy}{\ensuremath{\hat y}}
\newcommand{\bx}{\ensuremath{\bar x}}
\newcommand{\ld}{\ensuremath{\lambda}}
\newcommand{\kp}{\ensuremath{\kappa}}
\newcommand{\al}{\ensuremath{\alpha}}
\newcommand{\tht}{\ensuremath{\theta}}
\newcommand{\dt}{\ensuremath{\delta}}
\newcommand{\sg}{\ensuremath{\sigma}}
\newcommand{\bC}{\ensuremath{C^*}}
\newcommand{\bL}{\ensuremath{L^*}}
\newcommand{\nt}{\ensuremath{S}}
\newcommand{\nbr}{\ensuremath{\mathsf{nbr}}}
\newcommand{\TS}{\ensuremath{\mathsf{TS}}}
\newcommand{\bbL}{\ensuremath{\overline L}}
\newcommand{\bdE}{\ensuremath{\mathbf{E}}}
\newcommand{\bdF}{\ensuremath{\mathbf{F}}}
\newcommand{\bdD}{\ensuremath{\mathbf{D}}}
\newcommand{\LC}{\ensuremath{\mathsf{Lcost}}}
\newcommand{\Lat}{\ensuremath{\mathsf{Lat}}}
\newcommand{\Lf}{\ensuremath{\mathcal{L}}}
\newcommand{\Tour}{\ensuremath{\mathsf{Tour}}}
\newcommand{\Pcol}{\ensuremath{\Pc}}
\newcommand{\Tcol}{\ensuremath{\T}}
\newcommand{\Pfeas}{\ensuremath{\mathsf{Pfeas}}}
\newcommand{\Tfeas}{\ensuremath{\mathsf{Tfeas}}}
\title{Facility Location with Client Latencies: Linear-Programming based Techniques for
Minimum-Latency Problems} 
\author{
         Deeparnab Chakrabarty\thanks{{\tt deepc@seas.upenn.edu.} 
	 Dept. of Computer and Information Science, Univ. of Pennsylvania, Philadelphia,
	 PA 19104. Most of the work was done as a postdoctoral fellow at the Dept. of Combinatorics
	 and Optimization, Univ. Waterloo.} 
\and
         Chaitanya Swamy\thanks{{\tt cswamy@math.uwaterloo.ca}.
         Dept. of Combinatorics and Optimization, Univ. Waterloo, Waterloo, ON N2L 3G1.
         Supported in part by NSERC grant 327620-09 and an Ontario Early Researcher Award.} 
}
\date{} 
\begin{document}

\maketitle

\begin{abstract}
We introduce a problem that is a common generalization of the 
{\em uncapacitated facility location} (\ufl) and 
{\em minimum latency} (\ml) problems, where facilities not only need to be opened to serve
clients, but also need to be sequentially activated before they can provide service.
This abstracts a setting where inventory demanded by customers needs to be stocked or
replenished at facilities from a depot or warehouse. 
Formally, we are given a set $\F$ of $n$ facilities with facility-opening costs $\{f_i\}$,
a set $\D$ of $m$ clients, and connection costs $\{c_{ij}\}$ specifying the cost of
assigning a client $j$ to a facility $i$, 
a root node $r$ denoting the depot, and a {\em time metric} $d$ on
$\F\cup\{r\}$. Our goal is to open a subset $F$ of facilities,   
find a path $P$ starting at $r$ and spanning $F$ to activate the open facilities, and 
connecting each client $j$ to a facility $\phi(j)\in F$, so as to minimize 
$\sum_{i\in F} f_i + \sum_{j\in \D} (c_{\phi(j),j} + t_j)$,
where $t_j$ is the time taken to reach  $\phi(j)$ along path $P$.
We call this the {\em minimum latency uncapacitated facility location} (\mlufl) problem.  

Our main result is an $O\bigl(\log n\max\{\log n, \log m\}\bigr)$-approximation for
\mlufl. Via a reduction to the group Steiner tree (\gst) problem, we show this result is  
tight in the sense that 
any improvement in the approximation guarantee for \mlufl, implies an improvement in the
(currently known) approximation factor for \gst.
We obtain significantly improved constant approximation guarantees for two natural special
cases of the problem: (a) {\em related \mlufl}, where the connection costs form a metric
that is a scalar multiple of the time metric; 
(b) {\em metric uniform \mlufl}, where we have metric connection costs and 
the time-metric is uniform.  
Our LP-based methods are fairly versatile and are easily adapted with minor changes
to yield approximation guarantees for \mlufl (and \ml) in various more general settings,
such as (i) the setting where the latency-cost of a client is a function (of bounded growth) of
the delay faced by the facility to which it is connected; and 
(ii) the $k$-route version, where we can dispatch $k$ vehicles in parallel to activate the
open facilities. 

Our LP-based understanding of \mlufl also offers some LP-based insights into \ml.  
We obtain two natural LP-relaxations for \ml with constant integrality gap, which we
believe shed new light upon the problem and offer a promising direction for obtaining
improvements for \ml. 
\end{abstract}

\section{Introduction} \label{intro}
Facility location and vehicle routing problems are two broad classes of
combinatorial optimization problems that have been widely studied in the Operations
Research community (see, e.g.,~\cite{MirchandaniF90,TothV02}), and have a wide range of
applications. Both problems can be described in terms of an underlying set of
clients that need to be serviced. In facility location problems, there is a candidate set 
of facilities that provide service, and the goal is to open some facilities and
connect each client to an open facility so as to minimize some combination of the
facility-opening and client-connection costs.
Vehicle routing problems consider the setting where a vehicle
(delivery-man or repairman) provides service, and the goal is to plan a
route that visits (and hence services) the clients as quickly as possible. Two common
objectives considered are: (i) minimize the total length of the vehicle's route, 
giving rise to the traveling salesman problem (\tsp), and 
(ii) (adopting a client-oriented approach) 
minimize the sum of the client delays, giving rise to minimum latency (\ml) problems.    

These two classes of problems have mostly been considered separately. However, various 
logistics problems involve both facility-location and vehicle-routing   
components. For example, consider the 
following oft-cited prototypical example of a facility location problem: 
a company wants to determine where to open its retail outlets so as to serve
its customers effectively. 
Now, inventory at the outlets needs to be replenished or ordered (e.g., from a depot);
naturally, a customer cannot be served by an outlet unless the outlet has the inventory
demanded by it, and  
delays incurred in procuring inventory might adversely impact customers. Hence, it makes
sense for the company to {\em also} keep in mind the latencies faced by the customers 
while making its decisions about where to open outlets,
how to connect customers to open outlets, and in what order to replenish the open outlets,
thereby adding a vehicle-routing component to the problem. 

We propose a mathematical model that is a common generalization of the 
{\em uncapacitated facility location} (\ufl) and {\em minimum latency} (\ml) problems, and  
abstracts a setting (such as above) where facilities 
need to be ``{\em activated}'' before they can provide service. Formally, as in \ufl, we
have a set $\F$ of $n$ facilities, and a set $\D$ of $m$ clients.   
Opening facility $i$ incurs a {\em facility-opening cost} $f_i$, and assigning a client
$j$ to a facility $i$ incurs {\em connection cost} $c_{ij}$. 
Taking a lead from minimum latency problems, we model activation delays as
follows. We have a root (depot) node $r$, and a {\em time metric} $d$ on $\F\cup\{r\}$. 
A feasible solution specifies a subset $F\sse\F$ of facilities to open, a path $P$
starting at $r$ and spanning $F$ along which the open facilities are activated, 
and assigns each client $j$ to an open facility $\phi(j)\in F$. 
The cost of such a solution is 
\vspace{-.75ex}
\begin{equation} 
\sum_{i\in F} f_i + \sum_{j\in\D} \big(c_{\phi(j)j} + t_j\big) \label{obj} 
\vspace{-.75ex} 
\end{equation} 
where $t_j=d_P(r,\phi(j))$ 
is the time taken to reach facility $\phi(j)$ along path $P$. 
We refer to $t_j$ as client $j$'s latency cost. 
The goal is to find a solution with
minimum total cost. 
We call this the {\em minimum-latency uncapacitated facility location} (\mlufl) problem.

Apart from being a natural problem of interest, we find \mlufl appealing since it
generalizes, or is closely-related to, various diverse problems of interest (in addition
to \ufl and \ml);  
our work yields new insights on some of these problems, most notably \ml (see ``Our
results''). 
One such problem, which captures much of the combinatorial core of \mlufl 
is what we call the {\em minimum group latency} (\mgl) problem. Here, we are given an
undirected graph with metric edge weights $\{d_e\}$, 
subsets $\{G_j\}$ of vertices called groups, and a root $r$; the goal is to find a path
starting at $r$ that minimizes the sum of the cover times of the groups, where the cover 
time of $G_j$ is the first time at which some $i\in G_j$ is visited on the path. 
Observe that \mgl can be cast as \mlufl with zero facility costs (where
$\F=\text{node-set}\sm\{r\}$), where for each group $G_j$, we create a client $j$ with 
$c_{ij}=0$ if $i\in G_j$ and $\infty$ otherwise. Note that we may assume that the groups
are disjoint (by creating multiple co-located copies of a node), in which case these
$c_{ij}$s form a metric.
\mgl itself captures various other problems. Clearly, when each $G_j$ is a singleton,
we obtain the minimum latency problem. Also, given a set-cover instance, if we consider a
graph whose nodes are ($r$ and) the sets, create a group $G_j$ for each element $j$
consisting of the sets containing it, and consider the uniform metric, then this \mgl
problem is simply the {\em min-sum set cover} (\mssc) problem~\cite{FeigeLT04}.

\vspace{-2ex}
\paragraph{Our results and techniques.} 
Our main result is an $O\bigl(\log n \max\{\log m,\log n\}\bigr)$-approximation algorithm
for \mlufl (Section~\ref{mlufl-gen}), which for the special case of \mgl, implies an
$O(\log^2 n)$ approximation.  
Complementing this result, we prove (Theorem~\ref{lbthm}) that a $\rho$-approximation
algorithm (even) for \mgl yields an $O(\rho\log m)$-approximation algorithm for the {\em group Steiner tree} (\gst)  problem 
\cite{GKR} 
on $n$ nodes and $m$ groups.  Thus, any improvement in our approximation ratio for \mlufl 
would yield a corresponding
improvement of \gst, whose approximation ratio has remained at $O(\log^2 n\log m)$ for a
decade~\cite{GKR}.  
Moreover, combined with the result of~\cite{HalperinK03} on the inapproximability of \gst,
this shows that \mgl, and hence \mlufl with metric connection costs cannot be approximated
to better than a $\Omega(\log m)$-factor unless \np $\sse$ \ztime$(n^{\polylog(n)})$.  

Given the above hardness result, we investigate certain well-motivated special cases of
\mlufl and obtain significantly improved performance guarantees. 
In Section~\ref{mlufl-related}, we consider the case where
the connection costs form a metric, which is a scalar multiple of the $d$-metric (i.e.,
$d_{uv}=c_{uv}/M$, where $M\geq 1$; the problem is trivial if $M<1$).
For example, in a supply-chain logistics problem, this 
models a natural setting where the connection of clients to facilities, and the activation
of facilities both proceed along the same transportation network. 
We obtain a {\em constant-factor} approximation algorithm for this problem.

In Section~\ref{mlufl-unif}, we consider the {\em uniform \mlufl} problem, which is the
special case where the time-metric is uniform. 
Uniform \mlufl already generalizes \mssc (and also \ufl). 
For uniform \mlufl with metric connection costs (i.e., metric uniform \mlufl), we devise a
$10.78$-approximation algorithm. (Without metricity, the problem becomes set-cover hard,
and we obtain a simple matching $O(\log m)$-approximation.)  
The chief novelty here lies in the technique used to obtain this result.
We give a simple generic reduction (Theorem~\ref{metricredn}) that shows how to reduce the
metric uniform \mlufl problem with facility costs to one {\em without} facility costs,
in conjunction with an algorithm for \ufl. 
This reduction is surprisingly robust and versatile and has other applications. 
For example, the same reduction yields an
$O(1)$-approximation for {\em metric uniform $k$-median} 
(i.e., metric uniform \mlufl where at most $k$ facilities may be opened), 
and the same ideas 
lead  to improved guarantees for the
$k$-median versions of connected facility location~\cite{SwamyK04}, and facility location
with service installation costs~\cite{ShmoysSL04}.

We obtain our approximation bounds 
by rounding the optimal solution to a suitable linear-programming (LP) relaxation
of the problem.  
This is interesting since we are not aware of any previous LP-based methods
to attack \ml (as a whole).
In Section~\ref{lpml}, we leverage this to obtain some interesting
insights about \ml, which we believe cast new light on the problem. 
In particular, we present two LP-relaxations for \ml, and
prove that these have (small) constant integrality gap. 
Our first LP is a specialization of our LP-relaxation for
\mlufl. Interestingly,    
the integrality-gap bound for this LP relies only on the fact that the natural LP relaxation for
\tsp has constant integrality gap (i.e., a $\rho$-integrality gap
for the natural TSP LP relaxation translates to an $O(\rho)$-integrality gap). 
In contrast, the various known algorithms for
\ml~\cite{BC+,CGRT,ArcherLW08} all utilize algorithms for the arguably harder $k$-MST problem or  
its variants. 
Our second LP has exponentially-many variables, one for every path (or tree)
of a given length bound, and  
the separation oracle for the dual problem 
is a rooted path (or tree) orienteering problem: given rewards on the nodes
and metric edge costs, find a (simple) path rooted at $r$ of length at most $B$ that
gathers maximum reward. We prove that even a bicriteria 
approximation for the orienteering problem yields an 
approximation for \ml while losing a constant factor. 
\ignore{This requires circumventing the technical difficulty posed by negative coefficients in the
dual objective, due to
which one cannot make the usual argument that scaling the solution computed via an
approximate separation oracle yields an approximate solution. }
This connection between orienteering and \ml is known~\cite{FHR}. 
But we feel that our alternate proof,  
where the orienteering problem 
appears as the separation oracle required to solve the dual LP, offers a more
illuminating 
explanation of the relation between the approximability of
the two problems. (In fact, the same relationship also holds between \mgl \nolinebreak 
\mbox{and ``group orienteering.'')} 

We believe that the use of LPs opens up \ml to new venues of attack. 
A good LP-relaxation is beneficial because it yields a concrete, tractable lower
bound and handle on the integer optimum, which one can exploit to design algorithms (a
point repeatedly driven home in the field of approximation algorithms).  
Also, LP-based techniques tend to be fairly versatile and can be adapted to handle more
general variants of the problem (more on this below).  
Our LP-rounding algorithms exploit various ideas developed for scheduling and
facility-location problems (e.g., $\al$-points) and polyhedral insights for \tsp, which 
suggests that the wealth of LP-based machinery developed for these problems can be
leveraged to obtain improvements for \ml. 
We suspect that our LP-relaxations are in fact better than what we have accounted for, 
and consider them to be a promising direction for making progress on \ml.  

Section~\ref{extn} showcases the flexibility afforded by our LP-based techniques, by
showing that our algorithms and analyses extend with 
little effort to handle various generalizations of \mlufl (and hence, \ml). 
For example, consider the setting where the latency-cost of a client $j$ is 
$\ld(\text{time taken to reach the facility serving $j$})$, for some non-decreasing 
function $\ld(.)$. 
When $\ld$ is convex and has ``growth'' at most $p$ (i.e.,
$\ld(cx)\leq c^p\ld(x)$), we derive an $O\bigl(\max\{(p\log^2 n)^p,p\log n\log m\}\bigr)$-approximation for \mlufl,  
an $O\bigl(2^{O(p)}\bigr)$-approximation for
related \mlufl and \ml, and an $O(1)$-approximation for metric uniform \mlufl. 
(Concave $\ld$s are even easier to handle.) This in turn leads to approximation guarantees
for the $\Lf_p$-norm generalization of \mlufl, where instead of the sum (i.e.,
$\Lf_1$-norm) of client latencies, the $\Lf_p$-norm of the client latencies appears in the
objective function. The spectrum of $\Lf_p$ norms tradeoff efficiency with fairness,
making the $\Lf_p$-norm problem an appealing problem to consider.  
We obtain an 
$O\bigl(p\log n\max\{\log n,\log m\}\bigr)$-approximation for \mlufl, and
an $O(1)$-approximation for the other special cases. 
Another notable extension is the $k$-route version of the problem, where we may use $k$
paths starting at $r$ to traverse the open facilities. With one simple modification to our
LPs (and algorithms), {\em all} our approximation guarantees translate to this setting. As
a corollary, we obtain a constant-factor approximation for the $\Lf_p$-norm 
version of the {\em $k$-traveling repairmen problem}~\cite{FHR} (the $k$-route version of
\ml). 
 
\vspace{-2ex}
\paragraph{Related work.} 
To the best of our knowledge, \mlufl and \mgl are new problems that have not been
studied previously. 
There is a great deal of literature on facility location and vehicle routing problems
(see, e.g.,~\cite{MirchandaniF90,TothV02}) in general, and \ufl and \ml, in particular,
which are special cases of our problem, and we limit ourselves to a sampling of some of
the relevant results. 
The first constant approximation guarantee for \ufl was obtained by Shmoys, 
Tardos, and Aardal~\cite{ShmoysTA97} via an LP-rounding algorithm, and the current
state-of-the-art is 1.5-approximation algorithm due to Byrka~\cite{Byrka07}. 
The minimum latency (\ml) problem seems to have been first introduced to the computer
science community by Blum et al.~\cite{BC+}, who gave a constant-factor approximation
algorithm for it. Goemans and Kleinberg~\cite{GK} improved the approximation factor to
$10.78$, using a ``tour-concatenation'' lemma, which has formed a component of all 
subsequent algorithms and improvements for \ml. 
The current-best approximation factor for \ml is 3.59 due to Chaudhuri, Godfrey, Rao and 
Talwar~\cite{CGRT}. 
As mentioned earlier, \mgl with a uniform time metric captures the min-sum set cover 
(\mssc) problem.
This problem was introduced by Feige, Lovasz and Tetali~\cite{FeigeLT04}, who gave a
$4$-approximation algorithm for the problem and a matching inapproximability
result. Recently, Azar et al.~\cite{AzarGY09} introduced a generalization of \mssc, for
which Bansal et al.~\cite{BansalGK10} obtained a constant-factor approximation.

If instead of adding up the latency cost of clients, we include the {\em maximum} latency
cost of a client in the objective function of \mlufl, then we obtain the $\min$-$\max$
versions of \mgl and \mlufl, which have been studied previously. 
The $\min$-$\max$ version of \mgl is equivalent to a
``path-variant'' of \gst: we seek a path starting at $r$ of minimum total length that
covers every group. Garg, Konjevod, and Ravi~\cite{GKR}
devised an LP-rounding based $O(\log^2 n\log m)$-approximation for \gst, where $n$ is the
number of nodes and $m$ is the number of groups.  
Charikar et al.~\cite{CCGG} gave a deterministic algorithm with the same guarantee, and 
Halperin and Krauthgamer \cite{HalperinK03} proved an $\Omega(\log^2 m)$-inapproximability
result. The rounding technique of~\cite{GKR} and the deterministic tree-embedding construction
of~\cite{CCGG} (rather its improvement by~\cite{FakcharoenpholRT03}) are key ingredients
of our algorithm for (general) \mlufl.  
The $\min$-$\max$ version of \mlufl can be viewed a path-variant of connected
facility location~\cite{RaviS99,GuptaKKRY01}.  
The connected facility location problem, in its full generality, is essentially equivalent
to \gst~\cite{RaviS99}; however, if the connection costs form a metric, and the
time- and the connection-cost metrics are scalar multiples of each other, then various
constant-factor approximations are known~\cite{GuptaKKRY01,SwamyK04,EisenbrandGRS08}. 

Very recently, we have learnt that, independent of, and concurrent
with, our work, Gupta et al.~\cite{N10}  
also propose the minimum group latency (\mgl) problem (which they arrive at in the
course of solving a different problem), and obtain results similar to ours for \mgl. They
also obtain an $O(\log^2 n)$-approximation for \mgl, and the reduction from \gst to \mgl
with a $\log m$-factor loss (see also~\cite{Nagarajan09}), and relate the approximability
of the \mgl and ``group orienteering'' problems. Their techniques are combinatorial and
not LP-based, and it is not clear how these can be extended to handle 
facility-opening costs.

\section{LP-rounding approximation algorithms for \textsf{MLUFL}} \label{approx}
We can express \mlufl as an integer program and relax the integrality constraints to
obtain a linear program  as follows.  
We may assume that $d_{ii'}$ is integral for all $i,i'\in\F\cup\{r\}$. 
Let $E$ denote the edge-set of the complete graph on $\F\cup\{r\}$ and 
let $d_{max} := \max_{e\in E}d_e$. 
Let $\Time\le\min\{n,m\}d_{\max}$ be a known upper bound on the maximum activation time of
an open facility in an optimal solution. 
For every facility $i$, client $j$, and time $t \le \Time$, 
we have a variable $y_{i,t}$ indicating if facility $i$ is opened at time $t$ or not, and
a variable $x_{ij,t}$ indicating whether client $j$ connects to facility $i$ at time $t$.
Also, for every edge $e\in E$ and time $t$, we introduce a variable $z_{e,t}$ which
denotes if edge $e$ has been traversed {\em by} time $t$. 
Throughout, we use $i$ to index the facilities in $\F$, $j$ to index the clients in $\D$,
$t$ to index the time units in $[\Time]:=\{1,\ldots,\Time\}$, \nolinebreak
\mbox{and $e$ to index the edges in $E$.}
\begin{alignat}{5}
\min & \quad & \sum_{i,t} f_iy_{i,t} & + 
\sum_{j,i,t} \big(c_{ij} + t \big) x_{ij,t}  \tag{P} \label{mlufllp} \\ 
\text{s.t.} && \sum_{i,t} x_{ij,t} & \ge 1 \quad && \frall j; 
\qquad & x_{ij,t} &  \le y_{i,t} \quad && \frall i,j,t  \notag \\
&& \sum_{e} d_ez_{e,t} & \le t \quad && \frall t \label{eq:3} \\
&& \sum_{e\in \delta(S)} z_{e,t} & \ge \sum_{i\in S, t'\le t} x_{ij,t'} 
\quad && \frall t, S\subseteq\F, j \label{eq:4} \\
&& x_{ij,t}, y_{i,t}, z_{e,t} & \ge 0 \quad && \frall i,j,t,e;
\qquad & y_{i,t} & = 0 \quad && \frall i,t\text{ with }d_{ir}>t. \notag 
\end{alignat}
The first two constraints encode that each client is connected to some facility at
some time, and that if a client is connected to a facility $i$ at time $t$, then $i$ 
must be open at time $t$.
Constraint \eqref{eq:3} ensures that by time $t$ no more than $t$ ``distance'' is covered
by the tour on facilities, and \eqref{eq:4} ensures that if a client is connected to $i$
by time $t$, then the tour must have visited $i$ by time $t$. 
We assume for now that $\Time=\poly(m)$, and show later how to remove this assumption
(Lemma~\ref{lpsolve}, Theorem~\ref{scalethm}).
Thus, \eqref{mlufllp} can be solved efficiently since
one can efficiently separate over the constraints \eqref{eq:4}. 
Let $(x,y,z)$ be an optimal solution to \eqref{mlufllp}, and $\OPT$ denote its objective 
value. For a client $j$, define $\bC_j=\sum_{i,t} c_{ij}x_{ij,t}$, 
and $\bL_j=\sum_{i,t}tx_{ij,t}$.  
We devise various approximation algorithms for \mlufl by rounding $(x,y,z)$ to
an integer solution. 

In Section~\ref{mlufl-gen}, we give a polylogarithmic
approximation algorithm for (general) \mlufl (where the $c_{ij}$s need not
even form a metric). Complementing this result, we prove (Theorem~\ref{lbthm}) that a
$\rho$-approximation algorithm (not necessarily LP-based) for \mlufl yields an 
$O(\rho\log m)$-approximation algorithm for the \gst problem on $n$ nodes and $m$ groups.  
In Sections~\ref{mlufl-related} and~\ref{mlufl-unif}, we obtain significantly-improved
approximation guarantees for various well-motivated special cases of \mlufl.    
Section~\ref{mlufl-related} obtains a {\em constant-factor} approximation algorithm in the
natural setting where the connection costs form a metric that is a scalar multiple of the 
time-metric. 
Section~\ref{mlufl-unif} considers the setting where the time-metric is the uniform
metric. 
Our main result here is a constant-factor approximation for metric connection costs, which
is obtained via a rather versatile reduction of this uniform \mlufl problem to \ufl and
uniform \mlufl with zero-facility costs.

\subsection{An \boldmath $O\bigl(\log n\cdot\max\{\log n,\log m\}\bigr)$-approximation
algorithm} \label{mlufl-gen}
We first give an overview. 
Let $N_j = \{i\in\F: c_{ij} \le 4\bC_j\}$ be the set of facilities ``close" to $j$, and define
$\tau_j$ as the earliest time $t$ such that 
$\sum_{i\in N_j,t'\leq t}x_{ij,t'}\geq\frac{2}{3}$. 
By Markov's inequality, we have $\sum_{i\in N_j}\sum_t x_{ij,t}\geq\frac{3}{4}$ and
$\tau_j\leq 12\bL_j$.
It is easiest to describe the algorithm assuming first that the time-metric
$d$ is a tree metric. 
Our algorithm runs in phases, with phase $\ell$ corresponding to time $t_\ell=2^\ell$.  
In each phase, we compute a random subtree 
rooted at $r$ of ``low'' cost 
such that for every client $j$ with $\tau_j\leq t_\ell$, with constant probability, 
this tree contains a facility in $N_j$. 
To compute this tree, we utilize the rounding procedure of Garg-Konjevod-Ravi (GKR) for 
the {\em group Steiner tree} (\gst) problem~\cite{GKR} (see Lemma~\ref{lem:gkr}), by
creating a group for each client $j$ with $\tau_j\leq t_\ell$ comprising of, roughly
speaking, the facilities in $N_j$. 
We open all the facilities included in the subtree, 
and obtain a tour via the standard trick of doubling all edges and performing  
an Eulerian tour with possible shortcutting. The overall tour is a concatenation of all
the tours obtained in the various phases. 
For each client $j$, we consider the first tree that contains a facility 
from $N_j$ (which must therefore be open), and connect $j$ to such a facility. 

Given the result for tree metrics, an oft-used idea to handle the case when $d$
is not a tree metric is to approximate it by a distribution of tree metrics with 
$O(\log n)$ distortion~\cite{FakcharoenpholRT03}. Our use of this idea is however
more subtle than the typical applications of probabilistic tree embeddings. Instead of 
moving to a distribution over tree metrics up front,
in each phase $\ell$, we use the results of~\cite{CCGG,FakcharoenpholRT03} to
{\em deterministically} obtain a tree $\T_\ell$ with edge weights $\{d_{\T_\ell}(e)\}$,
such that the resulting tree metric dominates $d$ and 
$\sum_{e=(i,i')} d_{\T_\ell}(i,i')z_{e,t_\ell}=O(\log n)\sum_{e} d_ez_{e,t_\ell}$.
As we show in Section~\ref{extn}, this
deterministic choice allows to extend our algorithm and analysis effortlessly to the
setting where the latency-cost in the objective function is measured by a more general
function (e.g., the $\Lf_p$-norm) of the client-latencies.  
The algorithm is described in detail as Algorithm~\ref{genalg}, and utilizes the following
results.  
Let $\tau_{\max}=\max_j\tau_j$. 

\begin{theorem}[\cite{CCGG,FakcharoenpholRT03}] \label{thm:frt}
Given any edge weights $\{\z_e\}_{e\in E}$, one can deterministically construct a
weighted tree $\T$ having leaf-set $\F\cup\{r\}$, leading to a tree metric, $d_{\T}(\cdot)$,
such that, for any $i,i'\in\F\cup\{r\}$, we have:  

\noindent
(i) $d_\T(i,i') \ge d_{i,i'}$, and 
(ii) $\sum_{e=(i,i')\in E}d_\T(i,i')\z_{i,i'} = O(\log n)\sum_{e}d_e\z_e$. 
\end{theorem}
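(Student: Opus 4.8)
The plan is to invoke the standard probabilistic tree-embedding machinery of Fakcharoenphol--Rao--Talwar \cite{FakcharoenpholRT03} together with the derandomization of Charikar et al. \cite{CCGG}, and then argue that the claimed deterministic guarantee follows by a simple averaging/expectation argument applied to the random tree. First I would recall that the FRT construction produces, from any finite metric on a set $V$ of $n$ points, a random hierarchically well-separated tree $\T$ whose leaf-set is $V$ and whose induced tree metric $d_\T$ dominates the original metric pointwise, i.e. $d_\T(i,i')\ge d_{i,i'}$ for all $i,i'$ (this is property (i), and it holds with probability $1$, not just in expectation). The key quantitative property is that for every pair $i,i'$, $\E{d_\T(i,i')}\le O(\log n)\,d_{i,i'}$. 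Since the metric on $\F\cup\{r\}$ is fixed, I apply this with $V=\F\cup\{r\}$.

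For property (ii), I would take any fixed nonnegative edge-weighting $\{\z_e\}_{e\in E}$ (in our application $\z_e=z_{e,t_\ell}$) and compute, by linearity of expectation,
\[
\E[\Bigl]{\textstyle\sum_{e=(i,i')\in E} d_\T(i,i')\,\z_{i,i'}}
\;=\;\sum_{e=(i,i')\in E}\z_{i,i'}\,\E{d_\T(i,i')}
\;\le\; O(\log n)\sum_{e=(i,i')\in E}\z_{i,i'}\,d_{i,i'}
\;=\; O(\log n)\sum_e d_e\z_e.
\]
Hence there \emph{exists} a tree $\T$ in the support of the FRT distribution that simultaneously satisfies (i) deterministically (every tree in the support dominates $d$) and achieves the bound in (ii) with the expectation replaced by its value (any tree that beats the mean). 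To make this \emph{constructive}, I would appeal to the deterministic algorithm of Charikar et al.~\cite{CCGG} (as improved using~\cite{FakcharoenpholRT03}), which given the metric and a specified ``demand'' weighting over pairs --- here the weighting $\{\z_e\}$ --- outputs in polynomial time a single dominating tree whose total weighted stretch $\sum_{e=(i,i')}\z_{i,i'} d_\T(i,i')$ is within an $O(\log n)$ factor of $\sum_e \z_e d_e$. This is exactly the content we need: the derandomization works against any single linear objective in the pairwise distances, and $\sum_e \z_e d_\T(e)$ is such an objective.

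The main obstacle --- really the only subtlety --- is making sure the derandomization is applied to the \emph{correct} objective. The generic derandomized tree embedding gives a low \emph{expected} stretch per pair, but we cannot afford a union bound over all $\binom{n}{2}$ pairs (that would cost another $\log n$); instead we exploit that we only need the single aggregate quantity $\sum_e d_\T(e)\z_e$ to be small, and for a single linear functional the method of conditional expectations (which is what~\cite{CCGG} implements) yields a deterministic tree matching the expectation bound. I would also remark that leaf-set issues are cosmetic: the FRT/CCGG trees may have internal Steiner nodes, but the points of $\F\cup\{r\}$ appear as leaves and only the leaf-to-leaf distances $d_\T(i,i')$ enter both (i) and (ii), so no extra argument is needed. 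With these pieces in place the theorem follows immediately; I would not belabor the internal details of the FRT padded-decomposition analysis, citing it as a black box.
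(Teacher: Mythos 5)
Your proposal is correct and matches the argument the paper is implicitly relying on: the paper does not prove Theorem~\ref{thm:frt} itself but cites it directly to~\cite{FakcharoenpholRT03} and~\cite{CCGG}, and your reconstruction---FRT's random dominating HST with $O(\log n)$ expected per-pair stretch, linearity of expectation over the single linear functional $\sum_e d_\T(e)\z_e$, and CCGG's method-of-conditional-expectations derandomization against that fixed objective---is exactly the standard route those references take. Your remarks that one should not union-bound over pairs and that internal Steiner nodes are immaterial (only leaf-to-leaf distances are used) are both the right points of care.
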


\begin{theorem}[\cite{GKR}] \label{thm:gkr}
Consider a tree $\T$ rooted at $r$ with $n$ leaves, subsets $G_1,\ldots,G_p$ of leaves,
and fractional values $\z_e$ on the edges of $\T$ satisfying
$\z(\delta(S)) \ge \nu_j$ for every group $G_j$ and node-set $S$ such that $G_j\sse S$, 
where $\nu_j\in\bigl[\frac{1}{2},1\bigr]$. There exists a randomized polytime algorithm,
henceforth called the GKR algorithm, that returns a rooted subtree $T'' \subseteq \T$
such that (i) $\Pr[e\in T'']\leq\z_e$ for every edge $e\in\T$; and
(ii) $\Pr[T''\cap G_j=\es]\leq\exp\bigl(-\frac{\nu_j}{64\log_2 n}\bigr)$ 
for every group $G_j$.
\end{theorem}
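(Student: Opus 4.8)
Since Theorem~\ref{thm:gkr} records the randomized-rounding result of Garg, Konjevod and Ravi~\cite{GKR}, the plan is to reconstruct their argument: a layered top-down rounding of $\T$, followed by a second-moment (Paley--Zygmund) estimate applied separately to each group.

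\emph{Normalization.} First I would arrange that every leaf of $\T$ lies at a common depth $L$ (subdivide edges), with $L=O(\log n)$ for the instances at hand, since the trees produced by Theorem~\ref{thm:frt} come from embeddings of a metric of polynomially bounded aspect ratio. Next I would make $\z$ non-increasing along every root-to-leaf path by replacing each $\z_e$ by the minimum of $\z$ over the ancestor edges of $e$ (itself included). This capping is harmless: in a tree every unit of flow through an edge also passes through each of its ancestor edges, so any $r$-to-$G_j$ flow respecting $\z$ still respects the capped capacities, and hence the hypothesis $\z(\delta(S))\ge\nu_j$ for all $G_j\sse S$ survives by max-flow/min-cut. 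For each $G_j$ I then fix such a flow $f^j$ of value $\nu_j$, so $f^j_e\le\z_e$ on every edge and, by flow conservation, $f^j_e$ equals the amount of $G_j$-flow terminating in the subtree below $e$. It is convenient to give $r$ a virtual parent edge $e_r$ with $\z_{e_r}=1$ (so $f^j_{e_r}=\nu_j\le1$).

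\emph{Rounding and marginals.} Process $\T$ from the root down: put $e_r$ into $T''$; for an edge $e$ with parent edge $e'$, put $e\notin T''$ if $e'\notin T''$, and otherwise put $e$ into $T''$ with probability $\z_e/\z_{e'}$ (a legal probability by monotonicity), these choices being mutually independent given the membership status of the parent edges. Then $T''$ is a subtree rooted at $r$, and telescoping the conditional probabilities along the path from $r$ to $e$ gives $\Pr[e\in T'']=\z_e$, which is assertion~(i). Writing $e_v$ for the parent edge of a leaf $v$, say $v$ is \emph{hit} if $e_v\in T''$.

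\emph{Second moment.} Fix a group $G=G_j$ with $\nu=\nu_j$, let $G'=\{v\in G:f^j_{e_v}>0\}$, and set $Z=\sum_{v\in G'}\tfrac{f^j_{e_v}}{\z_{e_v}}\mathbf{1}[v\text{ hit}]$, so $\{T''\cap G=\es\}\sse\{Z=0\}$ and $\mathbf{E}[Z]=\sum_{v\in G'}f^j_{e_v}=\nu$. The structural fact is that for distinct $u,v$ with $w=\mathrm{lca}(u,v)$, conditioned on $e_w\in T''$ the branch of the process below $w$ towards $u$ is independent of the branch towards $v$, so $\Pr[u\text{ and }v\text{ hit}]=\z_{e_w}\cdot\tfrac{\z_{e_u}}{\z_{e_w}}\cdot\tfrac{\z_{e_v}}{\z_{e_w}}=\tfrac{\z_{e_u}\z_{e_v}}{\z_{e_w}}$. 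Hence $\mathbf{E}[Z^2]=\sum_{v\in G'}\tfrac{(f^j_{e_v})^2}{\z_{e_v}}+\sum_{u\ne v}\tfrac{f^j_{e_u}f^j_{e_v}}{\z_{e_{\mathrm{lca}(u,v)}}}$: the diagonal sum is at most $\sum_v f^j_{e_v}=\nu$ since $f^j_{e_v}\le\z_{e_v}$, while grouping the off-diagonal sum by the value $w$ of the lca and using $\sum_{u\ne v:\,\mathrm{lca}(u,v)=w}f^j_{e_u}f^j_{e_v}\le\bigl(\sum_{v\in G'\text{ below }w}f^j_{e_v}\bigr)^2=(f^j_{e_w})^2$ bounds it by $\sum_{w\text{ internal}}\tfrac{(f^j_{e_w})^2}{\z_{e_w}}\le\sum_{w\text{ internal}}f^j_{e_w}$. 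Since the edges at any fixed depth separate $r$ from $G$ and so carry total $G$-flow exactly $\nu$, summing over the $L$ depths of the internal nodes gives $\sum_{w\text{ internal}}f^j_{e_w}\le L\nu$, whence $\mathbf{E}[Z^2]\le(L+1)\nu$. By Cauchy--Schwarz, $\Pr[T''\cap G=\es]\le\Pr[Z=0]\le1-\tfrac{\mathbf{E}[Z]^2}{\mathbf{E}[Z^2]}\le1-\tfrac{\nu}{L+1}\le\exp\bigl(-\tfrac{\nu}{L+1}\bigr)$, and since $L+1\le64\log_2 n$ for the tree at hand, this is assertion~(ii).

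\emph{Main obstacle.} Nothing in the second-moment bookkeeping is genuinely hard once the setup is in place; the points needing care are (a) certifying the depth bound $L=O(\log n)$ with a constant good enough to reach $64\log_2 n$ --- which really uses that $\T$ comes from a low-stretch embedding of a polynomially-bounded metric rather than being an arbitrary tree --- together with the check that the monotonization of $\z$ leaves the group-cut hypothesis intact; and (b) the two appeals to $f^j_e\le\z_e$ inside the cross-term estimate, which are exactly what let the $(f^j_{e_w})^2/\z_{e_w}$ terms collapse one depth level at a time.
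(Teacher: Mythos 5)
The paper states Theorem~\ref{thm:gkr} as a citation of Garg--Konjevod--Ravi rather than proving it, so the comparison is against the cited result. Your core analysis --- the top-down dependent rounding with exact marginals $\Pr[e\in T'']=\z_e$, the fixed $r$--$G_j$ flow $f^j$ with $f^j_e\le\z_e$, the lca-bucketed estimate $\mathbf{E}[Z^2]\le(L+1)\nu$, and the Paley--Zygmund / Cauchy--Schwarz finish --- is exactly the GKR second-moment argument, and the bookkeeping (both $\sum_{u\ne v:\,\mathrm{lca}(u,v)=w}f^j_{e_u}f^j_{e_v}\le(f^j_{e_w})^2$ and the per-level flow-conservation bound) checks out, as does the observation that monotonizing $\z$ preserves the cut condition.

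There is, however, a genuine gap, which you flag in item (a) and then wave away. Your argument gives $\Pr[T''\cap G_j=\emptyset]\le\exp\bigl(-\nu_j/(L+1)\bigr)$ with $L$ the depth of $\T$, but the theorem is stated for an \emph{arbitrary} rooted tree with $n$ leaves and asserts $\exp\bigl(-\nu_j/(64\log_2 n)\bigr)$. A tree on $n$ leaves can have depth $\Theta(n)$, so the claimed bound does not follow from yours. The missing ingredient is the depth-reduction preprocessing in GKR: round each $\z_e$ down to a power of $2$ and discard edges with $\z_e<1/\mathrm{poly}(n)$, so that the number of distinct $\z$-levels along any root-to-leaf path --- which is what really enters the $\mathbf{E}[Z^2]$ bound once the rounding skips over stretches of constant $\z$ --- becomes $O(\log n)$, at the cost of shrinking each $\nu_j$ by a constant factor that gets absorbed into the $64$. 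That is a substantive algorithmic step, not a normalization. Your attempted workaround --- invoking that $\T$ comes from Theorem~\ref{thm:frt} --- does not repair the proof: Theorem~\ref{thm:gkr} is a self-contained statement about an arbitrary tree, so its proof cannot presuppose the FRT provenance of $\T$; and even for those trees the depth is $O(\log\Delta)$ with $\Delta$ the aspect ratio, which the paper only controls as $\mathrm{poly}(m)$ via Lemma~\ref{lpsolve}, not as $\mathrm{poly}(n)$, so ``$L+1\le 64\log_2 n$'' is not something you get for free.
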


\vspace{-5pt}
\begin{figure}[ht!]
\small
\hrule \vspace{5pt}
\begin{algorithm} \label{genalg}
Given: a fractional solution $(x,y,z)$ to \eqref{mlufllp} (with $\bC_j, \bL_j, N_j$, and   
$\tau_j$ defined as above for each client $j$).
\end{algorithm}
\vspace{-14pt}
\begin{labellist}{A}
\item In each phase $\ell=0,1,\ldots,\Nc:=\ceil{\log_2 (2\tau_{\max})+4\log_2 m}$, we
do the following. Let $t_\ell=\min\{2^\ell,\Time\}$. 
\vspace{-1ex}
\begin{list}{A\arabic{enumi}.\arabic{enumii}.}{\usecounter{enumii} \addtolength{\leftmargin}{-3ex}}
\item Use Theorem~\ref{thm:frt} with edge weights $\{z_{e,t_\ell}\}$ to obtain a tree 
$\T_\ell=\bigl(V(\T_\ell),E(\T_\ell)\bigr)$. 
Extend $\T_\ell$ to a tree $\T'_\ell$ by 
adding a dummy leaf edge $(i,v_i)$ of cost $f_i$ to $\T_\ell$ for each facility $i$. 
Let $E'=\{(i,v_i): i\in\F\}$. 

\item Map the LP-assignment $\{z_{e,t_\ell}\}_{e\in E}$ 
to an assignment $\z$ on the edges of $\T'_\ell$ by setting 
$\z_e=\sum_{e\text{ lies on the unique $i$-$i'$ path in }\T_\ell}z_{ii',t_\ell}$ for 
all $e\in E(\T_\ell)$, and $\z_e = \sum_{t\leq t_\ell} y_{i,t}$ for all $e=(i,v_i)\in E'$.  
Note that $\sum_{e\in E(\T_\ell)}d_{\T_\ell}(e)\z_e=
\sum_{e=(i,i')}d_{\T_\ell}(i,i')z_{e,t_\ell}=O(\log n)\sum_e d_ez_{e,t_{\ell}}
=O(\log n)t_\ell$.

\item Define $D_\ell=\{j: \tau_j\leq t_\ell\}$. 
For each client $j\in D_\ell$, we define the group $N'_j=\{v_i: i\in N_j\}$. 
We now compute a subtree $T'_\ell$ of $\T'_\ell$ as follows. 
We obtain $N:=\log_2 m$ subtrees $T''_1,\ldots,T''_N$. Each tree $T''_r$ is obtained by 
executing the GKR algorithm 
$192\log_2 n$ times on the tree $\T'_\ell$ with groups $\{N'_j\}_{j\in D_\ell}$, and taking
the union of all the subtrees returned. Note that we may assume that $i\in T''_r$ iff
$(i,v_i)\in T''_r$.  
Set $T'_\ell$ to be any tree in $\{T''_1,\ldots,T''_N\}$ satisfying 
(i) $\sum_{(i,v_i)\in E(T'_\ell)}f_i\leq 40\cdot 192\log_2 n\sum_{(i,v_i)\in E'}f_i\z_{i,v_i}$ 
and 
(ii) $\sum_{e\in E(T'_\ell)\sm E'}d_{\T_\ell}(e)\leq 
40\cdot 192\log_2 n\sum_{e\in E(\T_\ell)}d_{\T_\ell}(e)\z_e$; if no such tree exists, the
algorithm fails.

\item Now remove all the dummy edges from $T'_\ell$, 
open all the facilities in the resulting tree, and convert the resulting tree into a tour 
$\Tour_\ell$ traversing all the opened facilies.  
For every unconnected client $j$, we connect $j$ to a facility in $N_j$ if some such
facility is open (and hence part of $\Tour_\ell$). 
\end{list}

\vspace{-0.5ex}
\item Return the concatenation of the tours $\Tour_\ell$ for $\ell=0,1,\ldots,\Nc$
shortcutting whenever possible. This induces an ordering of the open facilities. If some
client is left unconnected, we say that the algorithm has failed.
\end{labellist}
\hrule \vspace{-10pt}
\end{figure}

\vspace{-2ex}
\paragraph{Analysis.}
The algorithm may fail in steps A.1.3 and A2. Lemmas~\ref{lem:gkr} and~\ref{lbound} bound
the failure probability in each case by $1/\poly(m)$.
To bound the expected cost conditioned on success, it suffices to bound the expectation of
the random variable that equals the cost incurred if the algorithm succeeds, and is 0 otherwise.     

Since each client $j$ is connected to a facility in $N_j$, the total connection cost is
at most $4\sum_j\bC_j$. 
To bound the remaining components of the cost, we first show that 
in any phase $\ell$, the $\z$-assignment defined above ``covers'' each
group in $\{N'_j\}_{j\in D_\ell}$ to an extent of at least $\frac{2}{3}$
(Claim~\ref{zcover}). Next, we show in Lemma~\ref{lem:gkr} 
that for every client $j\in D_\ell$, the probability that a facility in $N_j$ is included
in the tree $T'_\ell$, and hence opened in phase $\ell$, is at least 
$\frac{5}{6}\cdot\frac{2}{3}=\frac{5}{9}$. 
The facility-cost incurred in a phase is $O(\log n)\sum_{i,t}f_iy_{i,t}$, and   
since $\tau_{\max}\leq\Time=\poly(m)$, the number of phases is $O(\log m)$, 
so this bounds the facility-opening cost incurred. 
Also, since the probability that $j$ is not connected (to a facility in $N_j$) in phase
$\ell$ decreases geometrically (at a rate less than $1/2$) with $\ell$ when $t_\ell\geq\tau_j$, one can argue that
(a) with very high probability (i.e., $1-1/\poly(m)$), each client $j$ is connected to some
facility in $N_j$, and (b)  
the expected latency-cost of $j$ is at most 
$O(\log n)\sum_{e\in E(\T_\ell)}d_{\T_\ell}(e)\z_e=O(\log^2 n)\tau_j$.

\begin{claim} \label{zcover}
Consider any phase $\ell$.
For any subset $S$ of nodes of the corresponding tree $\T'_\ell$ with $r\notin S$, and any   
$N'_j \subseteq S$ where $j\in D_\ell$, we have 
$\z(\delta(S)) \ge\sum_{i\in N_j,t\leq t_\ell}x_{ij,t}\geq 2/3$ (where $\dt(S)$
denotes $\dt_{\T'_\ell}(S)$).
\end{claim}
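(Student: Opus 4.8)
The plan is to prove the two inequalities in the claim separately, working from right to left. The second inequality, $\sum_{i\in N_j,\,t\le t_\ell}x_{ij,t}\ge 2/3$, should follow almost immediately from the definitions: $j\in D_\ell$ means $\tau_j\le t_\ell$, and $\tau_j$ was defined precisely as the earliest time $t$ with $\sum_{i\in N_j,\,t'\le t}x_{ij,t'}\ge\frac23$; since the partial sums are nondecreasing in $t$ and $t_\ell\ge\tau_j$, the bound at $t_\ell$ is at least the bound at $\tau_j$, which is $\ge 2/3$. (One should note in passing that $\tau_j$ is well-defined because Markov's inequality applied to $\bC_j=\sum_{i,t}c_{ij}x_{ij,t}$ gives $\sum_{i\in N_j}\sum_t x_{ij,t}\ge 3/4>2/3$, as already remarked in the text.)

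The substance is the first inequality, $\z(\delta(S))\ge\sum_{i\in N_j,\,t\le t_\ell}x_{ij,t}$, for any node-set $S$ of $\T'_\ell$ with $r\notin S$ and $N'_j\subseteq S$. First I would fix such an $S$ and split $\delta_{\T'_\ell}(S)$ into the dummy leaf edges $E'\cap\delta(S)$ and the ``real'' tree edges $E(\T_\ell)\cap\delta(S)$. The key structural observation is: for every $i\in N_j$, the leaf $v_i$ lies in $S$ (since $N'_j\subseteq S$), so the edge $(i,v_i)$ is cut by $S$ unless $i$ itself also lies in $S$. Partition $N_j = A\cup B$ where $A=\{i\in N_j: i\notin S\}$ and $B=\{i\in N_j: i\in S\}$. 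For $i\in A$, the dummy edge $(i,v_i)\in\delta(S)$ contributes $\z_{i,v_i}=\sum_{t\le t_\ell}y_{i,t}\ge\sum_{t\le t_\ell}x_{ij,t}$ to $\z(\delta(S))$, using the LP constraint $x_{ij,t}\le y_{i,t}$. For $i\in B$: since $i\in S$ but $r\notin S$, the unique $r$–$i$ path in $\T_\ell$ must cross $\delta(S)$, so by the way $\z$ on tree edges is defined (as the total $z_{\cdot,t_\ell}$-mass of $\F\cup\{r\}$-pairs whose connecting path uses that edge), the cut value $\sum_{e\in E(\T_\ell)\cap\delta(S)}\z_e$ is at least the total $z_{\cdot,t_\ell}$ mass separated from $r$ by $S$. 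I would make this precise by appealing to constraint~\eqref{eq:4} of \eqref{mlufllp}: for the set $S_\F=\{i\in\F: i\in S\}\supseteq B$, we have $\sum_{e\in\delta_E(S_\F)}z_{e,t_\ell}\ge\sum_{i\in S_\F,\,t'\le t_\ell}x_{ij,t'}\ge\sum_{i\in B,\,t'\le t_\ell}x_{ij,t'}$, and each such cut edge $e=(i,i')$ of the complete graph contributes its $z_{e,t_\ell}$ to $\z_{e'}$ for every tree-edge $e'$ on its $\T_\ell$-path, in particular for at least one $e'\in\delta_{\T_\ell}(S)$. Summing the $A$-contribution and the $B$-contribution gives $\z(\delta(S))\ge\sum_{i\in A,\,t\le t_\ell}x_{ij,t}+\sum_{i\in B,\,t\le t_\ell}x_{ij,t}=\sum_{i\in N_j,\,t\le t_\ell}x_{ij,t}$, as desired.

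The main obstacle I anticipate is the bookkeeping in the $B$-case: one must carefully argue that the tree-edge cut $\sum_{e'\in\delta_{\T_\ell}(S)}\z_{e'}$ dominates the complete-graph cut $\sum_{e\in\delta_E(S_\F)}z_{e,t_\ell}$ without double-counting or under-counting. The clean way is to observe that for any $i\in S_\F$, the $r$–$i$ path in $\T_\ell$ enters $S$ at some point, hence contains an edge of $\delta_{\T_\ell}(S)$; thus the complete-graph edge $(i,i')$ with $i\in S_\F$, $i'\notin S$ (i.e.\ $e\in\delta_E(S_\F)$, noting $r\notin S$ means one can route toward $r$) pushes its weight $z_{e,t_\ell}$ onto at least one edge of $\delta_{\T_\ell}(S)$ under the definition of $\z$. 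Some care is needed because the definition of $\z_e$ for tree edges sums over {\em all} $i$–$i'$ pairs, so the tree cut may be counting strictly more, which only helps. One subtlety to flag: the $A$- and $B$-contributions use disjoint edge sets ($E'$ vs.\ $E(\T_\ell)$), so adding them incurs no double counting. Modulo making the $B$-case routing argument airtight, the rest is a direct unwinding of definitions and two LP constraints.
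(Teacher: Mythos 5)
Your proposal is correct and follows essentially the same route as the paper: split $\delta_{\T'_\ell}(S)$ into dummy edges $E'$ versus tree edges of $\T_\ell$, use $x_{ij,t}\le y_{i,t}$ on the dummy edges for the facilities in $N_j\setminus S$, and use constraint~\eqref{eq:4} on $S\cap\F$ together with the fact that the tree-cut value of $\z$ dominates the complete-graph cut of $z_{\cdot,t_\ell}$ for the facilities in $N_j\cap S$. The one cosmetic difference is that the paper phrases this last domination via max-flow/min-cut (send $z_{ii',t_\ell}$ flow along $\T_\ell$-paths; capacities $\z$ are respected by construction, and any cut bounds the flow), whereas you argue directly that each cut complete-graph edge has its $\T_\ell$-path crossing $\delta_{\T_\ell}(S)$ at least once — the two are equivalent, and your version is airtight as stated.
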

\begin{proof}
Let $R=S\cap V(\T_\ell)$, and $Y\sse\F$ be the set of leaves in $R$. 
Then $\dt(S)=\dt_{\T_\ell}(R)\cup\bigl(\dt(S)\cap E'\bigr)$.
Let $\dt_E(Y)=\{(i,i')\in E: |\{i,i'\}|\cap Y|=1\}$.
Observe that $\z(\dt_{\T_\ell}(R))\geq\sum_{e\in\dt_E(Y)}z_{e,t_{\ell}}$. This is simply
because if we send $z_{ii',t_\ell}$ flow along the unique $(i,i')$ path in $\T_\ell$ for
every $(i,i')\in\dt_E(Y)$, then we obtain a flow between $Y$ and $\F\cup\{r\}\sm Y$
respecting the capacities $\{\z_e\}_{e\in E(\T_\ell)}$, and of value equal to the RHS
above. Thus, the inequality follows because the capacity of any cut containing $Y$ must be 
at least the value of the flow. Since $(x,y,z)$ satisfies \eqref{eq:4}, we further have
that $\sum_{e\in\dt_E(Y)}z_{e,t_{\ell}}\geq\sum_{i\in Y,t\leq t_{\ell}}x_{ij,t}$. 
So
$$
\z(\delta(S))\geq\sum_{e\in\dt_{\T_\ell}(R)}\z_{e,t_\ell}+\hspace*{-2ex}
\sum_{(i,v_i)\in\dt(S):i\notin Y}\hspace*{-1ex}\bigl(\sum_{t\leq t_\ell} y_{i,t}\bigr) 
\geq\sum_{i\in Y,t\leq t_\ell}x_{ij,t}+\hspace{-2ex}
\sum_{i\in N_j\setminus Y,t\le t_\ell}\hspace{-2ex}x_{ij,t}
\geq\sum_{i\in N_j,t\leq t_\ell}x_{ij,t}\geq\frac{2}{3}. \vspace{-3ex}
$$
\end{proof}

\begin{lemma}\label{lem:gkr}
In any phase $\ell$, with probability $1-1/\poly(m)$, we obtain the desired tree $T'_\ell$
in step A1.3. Moreover, $\Pr[T'_\ell\cap N'_j\neq\es]\geq 5/9$ for all $j\in D_\ell$.
\end{lemma}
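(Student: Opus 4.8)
The plan is to prove the two assertions of Lemma~\ref{lem:gkr} separately, both via union bounds over the $N=\log_2 m$ candidate trees $T''_1,\dots,T''_N$ and the $|D_\ell|\le m$ groups, using Theorem~\ref{thm:gkr} and Claim~\ref{zcover} as the two main inputs.

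First I would set up the hypotheses of Theorem~\ref{thm:gkr} for a single invocation of the GKR algorithm on $\T'_\ell$. Claim~\ref{zcover} gives exactly the required cut condition: for every group $N'_j$ with $j\in D_\ell$ and every node-set $S\supseteq N'_j$ with $r\notin S$, we have $\z(\dt(S))\ge 2/3$, so $\nu_j=2/3\in[\tfrac12,1]$ works. Then Theorem~\ref{thm:gkr}(ii) says a single GKR run misses $N'_j$ with probability at most $\exp\bigl(-\tfrac{2/3}{64\log_2 n}\bigr)$. Since each $T''_r$ is the union of $192\log_2 n$ independent GKR runs, $T''_r$ misses $N'_j$ with probability at most $\exp\bigl(-192\log_2 n\cdot\tfrac{2/3}{64\log_2 n}\bigr)=\exp(-2)<1/6$. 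Hence $\Pr[T''_r\cap N'_j\neq\es]\ge 5/6$. This holds for every $r$ and every $j\in D_\ell$; in particular it holds for whichever tree $T'_\ell$ the algorithm ultimately selects — but here I need to be slightly careful, since $T'_\ell$ is chosen as a function of the random trees (via the cost conditions), so I cannot directly quote the bound for the chosen tree. The clean way is to condition on the event $\E$ that step~A1.3 succeeds (some $T''_r$ satisfies both cost bounds), show $\Pr[\lnot\E]=1/\poly(m)$, and then argue that conditioning on $\E$ and on the choice of index $r$ does not hurt the per-group success probability — essentially because, by the pigeonhole/averaging argument below, $\E$ is a high-probability event, so $\Pr[T'_\ell\cap N'_j\neq\es]\ge\Pr[T''_r\cap N'_j\neq\es]-\Pr[\lnot\E]\ge 5/6-1/\poly(m)\ge 5/9$ for $m$ large (with small cases absorbed into the constant, or handled by noting $5/6-o(1)\ge 5/9$). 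Strictly, since $T'_\ell$ is one specific tree among those satisfying the cost bounds, and each fixed $T''_r$ has $\Pr[T''_r\cap N'_j=\es]\le\exp(-2)$, a union bound over $r\in[N]$ gives that with probability $\ge 1-N\exp(-2)$ all trees hit $N'_j$; I would instead observe that it suffices to argue about an arbitrary fixed-in-advance tie-breaking rule, and that for the selected tree the miss probability is at most that of any single $T''_r$ plus $\Pr[\lnot\E]$.

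For the first assertion — that A1.3 succeeds with probability $1-1/\poly(m)$ — I would use Markov's inequality on the expected cost bounds from Theorem~\ref{thm:gkr}(i). A single GKR run $T$ satisfies $\Pr[e\in T]\le\z_e$, so $\E[\sum_{(i,v_i)\in E(T)}f_i]\le\sum_{(i,v_i)\in E'}f_i\z_{i,v_i}$ and $\E[\sum_{e\in E(T)\sm E'}d_{\T_\ell}(e)]\le\sum_{e\in E(\T_\ell)}d_{\T_\ell}(e)\z_e$ (the latter using step~A1.2's identification of $\z$ on $\T_\ell$-edges). Summing over the $192\log_2 n$ runs making up a single $T''_r$ multiplies both expectations by $192\log_2 n$; then by Markov each of the two cost conditions in A1.3 (which allow a factor $40\cdot192\log_2 n$) is violated by a fixed $T''_r$ with probability at most $1/40$, so by a union bound $T''_r$ fails at least one condition with probability at most $1/20$. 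Since the $N=\log_2 m$ trees $T''_1,\dots,T''_N$ are independent, the probability that \emph{all} of them fail is at most $(1/20)^{\log_2 m}=m^{-\log_2 20}=1/\poly(m)$, which is exactly the claimed failure bound for step~A1.3.

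The step I expect to be the main obstacle is the bookkeeping around the selected tree $T'_\ell$: because the selection rule depends on the realized randomness, one must either (a) show the per-group hit probability survives this adaptive choice, or (b) decouple by observing that each individual $T''_r$ already has hit probability $\ge 5/6$ and the final guarantee needs only $5/9$, leaving ample slack to absorb the $\le\Pr[\text{A1.3 fails}]=1/\poly(m)$ loss from conditioning on success. I would go with route (b): fix any deterministic tie-breaking among trees satisfying the cost bounds; for the resulting $T'_\ell$ and any $j\in D_\ell$, $\Pr[T'_\ell\cap N'_j=\es]\le\Pr[T''_1\cap N'_j=\es]+\Pr[\text{A1.3 fails}]\le e^{-2}+1/\poly(m)$, and for all sufficiently large $m$ this is below $1/3$, giving $\Pr[T'_\ell\cap N'_j\neq\es]\ge 2/3\ge 5/9$; in fact the stated $5/9$ leaves room, so one can even route through a weaker per-tree bound if desired (e.g. replacing $e^{-2}$ by any constant below $5/12$ and keeping the $1/\poly(m)$ slack). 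The remaining ingredients are all routine: Markov, independence of the $N$ trees, and quoting Claim~\ref{zcover} and Theorem~\ref{thm:gkr} verbatim.
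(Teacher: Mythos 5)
Your handling of the failure-probability part is correct and matches the paper: Markov's inequality bounds $\Pr[(\bdF^r)^c],\ \Pr[(\bdD^r)^c]\le 1/40$ for each fixed $r$, so each $T''_r$ fails at least one cost condition with probability at most $1/20$, and by independence of the $N=\log_2 m$ trees the probability that all fail is at most $(1/20)^{\log_2 m}=1/\poly(m)$. Your per-tree hit bound $\Pr[T''_r\cap N'_j\neq\es]\ge 1-e^{-2}\ge 5/6$ is also correct (and in fact tighter than the paper's chain $1-e^{-3\nu_j}\ge(1-e^{-3})\nu_j\ge 11/18$, which uses concavity rather than plugging in $\nu_j=2/3$ directly).

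The gap is in the step passing from a fixed $T''_r$ to the \emph{selected} tree $T'_\ell$. You assert that for a fixed-in-advance tie-break, $\Pr[T'_\ell\cap N'_j=\es]\le\Pr[T''_1\cap N'_j=\es]+\Pr[\text{A1.3 fails}]$, but this does not follow: the event $\{T'_\ell=T''_r\}$ forces $\bdF^r\wedge\bdD^r$ to hold, and the cost events $\bdF^r,\bdD^r$ are a priori correlated with the hit event $\bdE^r_j$ (a smaller, cheaper tree is more likely to satisfy the cost bounds and \emph{also} more likely to miss $N'_j$, so conditioning on $\bdF^r\wedge\bdD^r$ can plausibly \emph{increase} the miss probability beyond $e^{-2}$). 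The concrete example: if $T''_1$ hits $N'_j$ but fails the cost bound while $T''_2$ satisfies the bounds but misses, then $T'_\ell=T''_2$ misses even though $T''_1$ did not, so comparing to $\Pr[T''_1\cap N'_j=\es]$ is not sound. The correct decoupling — which is what the paper does — is to bound the \emph{conditional} hit probability: $\Pr[\bdE^r_j\mid\bdF^r\wedge\bdD^r]\ge\Pr[\bdE^r_j\wedge\bdF^r\wedge\bdD^r]\ge 1-\Pr[(\bdE^r_j)^c]-\Pr[(\bdF^r)^c]-\Pr[(\bdD^r)^c]\ge 1-\tfrac16-\tfrac1{40}-\tfrac1{40}>\tfrac59$, and then observe that for any deterministic tie-break the event $\{r^*=r\}$ is $(\bdF^r\wedge\bdD^r)$ intersected with events on $T''_1,\dots,T''_{r-1}$ that are independent of $T''_r$, so $\Pr[T'_\ell\cap N'_j\neq\es\mid r^*=r]=\Pr[\bdE^r_j\mid\bdF^r\wedge\bdD^r]\ge 5/9$. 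Your instinct that there is slack (per-tree $5/6$ vs.\ needed $5/9$) is right, and that slack is exactly what the union-bound-then-condition argument consumes — but the subtraction bound you wrote does not realize this slack in a justified way.
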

\begin{proof}
Consider any tree $T''_r$ obtained by executing the GKR algorithm $192\log_2 n$ times and
taking the union of the resulting subtrees. For brevity, we denote 
$\sum_{(i,v_i)\in E(T''_r)}f_i$ by $F(T''_r)$, and 
$\sum_{e\in E(T''_r)\sm E'}d_{\T_\ell}(e)$ by $d_\ell(T''_r)$.
 
For $j\in D_\ell$, let ${\mathbf E}^r_j$
denote the event that $T''_r \cap N'_j$ is non-empty. 
By Theorem~\ref{thm:gkr}, we have 
$\Pr[\bdE^r_j]\geq 1-\exp\bigl(-\frac{\nu_j}{64\log_2 n}\cdot 192\log n\bigr)\geq
1-e^{-3\nu_j}\geq (1-e^{-3})\nu_j\geq 11/18$.
We also have that $\E{F(T''_r)}\leq 192\log_2 n\sum_{(i,v_i)\in E'}f_i\z_{i,v_i}$ and
$\E{d_\ell(T''_r)}\leq 192\log_2 n\sum_{e\in E(\T_\ell)}d_{\T_\ell}(e)\z_e$.
Let ${\mathbf F}^r$ and ${\mathbf D}^r$ denote respectively the events that 
$F(T''_r) \le 40\cdot 192\log_2 n \sum_{(i,v_i)\in E'} f_i\z_{i,v_i}$, and
$d_\ell(T''_r)\leq 40\cdot 192\log_2 n\sum_{e\in E(\T_\ell)}d_{\T_\ell}(e)\z_e$.
By Markov's inequality each event happens with probability at least $39/40$. Thus, for any
$j\in D_\ell$, we get that 
\begin{equation}\label{eq:10}
\Pr[{\mathbf E}^r_j|({\mathbf F}^r\wedge {\mathbf D}^r)] \ge 
\Pr[{\mathbf E}^r_j \wedge {\mathbf F}^r\wedge {\mathbf D}^r] \ge 
1 - (7/18 + 1/40 + 1/40) > 5/9.
\end{equation}

Now the probability that $(\bdF^r\wedge\bdD^r)^c$ happens for all $r=1,\ldots,N$
is at most $(2/40)^N\leq 1/m^4$. Hence, with high probability, there is
some tree $T'_\ell:=T_r$ such that both $\bdF^r$ and $\bdD^r$ hold, and \eqref{eq:10}
shows that $\Pr[T'_\ell\cap N'_j\neq\es]\geq 5/9$ for all $j\in D_\ell$.
\end{proof}

\begin{lemma} \label{lbound}
The probability that a client $j$ is not connected by the algorithm is at most $1/m^4$.
Let $L_j$ be the random variable equal to $j$'s latency-cost if the algorithm
succeeds and $0$ otherwise. 
Then $\E{L_j}=O(\log^2 n)t_{\ell_j}$, where
$\ell_j$ ($=\ceil{\log_2 \tau_j}$) is the smallest $\ell$ such that $t_\ell\geq\tau_j$.  
\end{lemma}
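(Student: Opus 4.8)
The plan is to analyze each phase $\ell$ independently and then sum geometrically over $\ell \geq \ell_j$. Fix a client $j$ and recall $\ell_j = \ceil{\log_2 \tau_j}$, so $t_\ell \geq \tau_j$ exactly when $\ell \geq \ell_j$, which means $j \in D_\ell$ for all $\ell \geq \ell_j$. By Lemma~\ref{lem:gkr}, in every such phase, conditioned on the phase succeeding (event $\bdF^r \wedge \bdD^r$ for the chosen tree, which holds with probability $\geq 1 - 1/m^4$), we have $\Pr[T'_\ell \cap N'_j \neq \es] \geq 5/9$ independently across phases (the GKR executions in different phases use fresh randomness). So the probability that $j$ is not connected in any of the phases $\ell_j, \ell_j+1, \ldots, \Nc$ is at most, roughly, $(1 - 5/9 + 1/m^4)^{\Nc - \ell_j + 1}$ plus a union-bound contribution for phase failures; since $\Nc - \ell_j = \Theta(\log m)$, this is $1/\poly(m)$. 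I would choose the constant $4\log_2 m$ in the definition of $\Nc$ precisely so that this product is at most $1/m^4$, as claimed.

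For the expected latency, let $A_\ell$ be the event that $j$ first gets connected in phase $\ell$ (so $T'_{\ell'} \cap N'_j = \es$ for $\ell_j \leq \ell' < \ell$ but $T'_\ell \cap N'_j \neq \es$). On $A_\ell$, client $j$'s latency cost is the time along the concatenated tour to reach its facility in phase $\ell$, which is bounded by $\sum_{\ell' \leq \ell} (\text{length of } \Tour_{\ell'})$. Since $\Tour_{\ell'}$ is obtained by doubling the tree $T'_{\ell'}$ (after removing dummy edges) and shortcutting, its length is at most $2 \sum_{e \in E(T'_{\ell'}) \setminus E'} d_{\T_{\ell'}}(e)$, and by the selection criterion (ii) in step A1.3 this is $O(\log n) \sum_{e \in E(\T_{\ell'})} d_{\T_{\ell'}}(e) \z_e = O(\log n) t_{\ell'}$, using the displayed identity in step A1.2 (here I need the $d_{\T_\ell}$-lengths to dominate the true $d$-lengths, which is Theorem~\ref{thm:frt}(i), so the tour length in the original metric is no larger). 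Summing the geometric series $\sum_{\ell' \leq \ell} O(\log n) 2^{\ell'} = O(\log n) 2^\ell = O(\log n) t_\ell$, the latency on $A_\ell$ is $O(\log n) t_\ell$.

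Finally I would assemble $\E{L_j} = \sum_{\ell \geq \ell_j} \Pr[A_\ell] \cdot O(\log n) t_\ell$. Using $\Pr[A_\ell] \leq \prod_{\ell_j \leq \ell' < \ell} \Pr[T'_{\ell'} \cap N'_j = \es \mid \text{successes}] + (\text{failure terms}) \leq (4/9 + o(1))^{\ell - \ell_j}$, and $t_\ell = 2^{\ell - \ell_j} t_{\ell_j}$, the sum becomes $O(\log n) t_{\ell_j} \sum_{k \geq 0} (4/9)^k 2^k = O(\log n) t_{\ell_j} \sum_k (8/9)^k = O(\log n) t_{\ell_j}$. Wait — this gives only $O(\log n)$, but the claim states $O(\log^2 n)$; the extra $\log n$ factor must come in because the per-phase failure probability $4/9$ is not quite small enough to beat the $t_\ell$ growth without slack, or rather because one should be more careful: the bound $\Pr[T'_{\ell'} \cap N'_j = \es] \le 4/9$ times the geometric factor $2$ gives ratio $8/9 < 1$, which does converge, so in fact the $O(\log^2 n)$ in the lemma is simply a clean (non-tight) bound, and the geometric sum already yields $O(\log n) t_{\ell_j}$; to be safe I would state the bound as $O(\log^2 n) t_{\ell_j}$ to match, noting the convergence of $\sum_k (8/9)^k$. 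The main obstacle I anticipate is handling the conditioning correctly: the events ``phase $\ell$ succeeds'' (i.e. a valid $T'_\ell$ exists) and ``$T'_\ell$ covers $N'_j$'' are not independent of each other within a phase, and one must be careful that conditioning on overall success of the algorithm does not skew the per-phase estimates — the clean way is to work with the random variable $L_j \cdot \mathbf{1}[\text{success}]$ as the lemma already suggests, bound it by $L_j$ assuming the idealized per-phase guarantees of Lemma~\ref{lem:gkr}, and absorb the $1/\poly(m)$ failure events into the tiny additive terms since latencies are at most $\Time \cdot \poly = \poly(m)$.
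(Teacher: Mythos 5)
Your overall structure is the same as the paper's: decompose by the phase $P_j$ at which $j$ first gets connected, bound $L_j$ by the sum of the lengths of $\Tour_0,\ldots,\Tour_{P_j}$, and sum geometrically using the fact that $\Pr[P_j\geq\ell]$ decays like $(4/9)^{\ell-\ell_j}$. The handling of the $1/\poly(m)$ failure events (absorbed via the $4\log_2 m$ extra phases) is also fine in spirit, if a bit informal about the conditioning.

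However, there is a concrete arithmetic error in the per-phase tour-length bound, and the subsequent attempt to diagnose the ``missing'' $\log n$ factor is wrong. You write
$d(\Tour_{\ell'}) \le 2\sum_{e\in E(T'_{\ell'})\setminus E'}d_{\T_{\ell'}}(e) = O(\log n)\sum_{e\in E(\T_{\ell'})}d_{\T_{\ell'}}(e)\z_e = O(\log n)\,t_{\ell'}$.
The last equality is false: step A1.2 of the algorithm explicitly records that $\sum_{e\in E(\T_{\ell'})}d_{\T_{\ell'}}(e)\z_e = O(\log n)\,t_{\ell'}$, \emph{not} $t_{\ell'}$. The $O(\log n)$ there is the distortion of the deterministic FRT/CCGG tree embedding (Theorem~\ref{thm:frt}) applied to the fractional $z$-mass, which is bounded only by $\sum_e d_e z_{e,t_{\ell'}}\le t_{\ell'}$ in the \emph{original} metric. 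So there are two distinct $O(\log n)$ factors: one from the GKR rounding (the selection criterion (ii) in step A1.3), and one from the tree-metric embedding. The correct per-phase bound is $d(\Tour_{\ell'}) = O(\log^2 n)\,t_{\ell'}$. With this fix your geometric sum immediately yields $\E{L_j}=O(\log^2 n)\,t_{\ell_j}$ exactly as the lemma claims, and the entire ``Wait---'' paragraph becomes unnecessary: the $\log^2 n$ is not slack, and it has nothing to do with the $4/9$ decay rate (which indeed only contributes a constant via $\sum_k (8/9)^k$).
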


\begin{proof}
Let $P_j$ be the random variable denoting the phase in which $j$ gets connected; let
$P_j:=\Nc+1$ if $j$ remains unconnected.
We have $\Pr[P_j\geq\ell]\leq\bigl(\frac{4}{9}\bigr)^{(\ell-\ell_j)}$ for $\ell\geq\ell_j$ 
The algorithm proceeds for at least $4\log_2 m$ phases after phase $\ell_j$,
so $\Pr[\text{$j$ is not connected after $\Nc$ phases}]\leq 1/m^4$. 
Now,
\begin{eqnarray*} 
\hspace*{-2ex}
L_j \leq \sum_{\ell\leq P_j}d(\Tour_\ell) \leq 
2\sum_{\ell\leq P_j}\sum_{e\in E(T'_\ell)\sm E'}d_{\T_\ell}(e)
\ =\ O(\log n)\sum_{\ell\leq P_j}\sum_{e\in E(\T_\ell)}d_{\T_\ell}(e)\z_e \ =\ O(\log^2 n)\sum_{\ell\leq P_j} t_\ell  \\
\text{so} \ \ \E{L_j} = O(\log^2 n)\sum_{\ell=0}^{\Nc}\Pr[P_j\geq\ell]\cdot t_\ell
\leq O(\log^2 n)
\biggl[\sum_{\ell=0}^{\ell_j} t_\ell + 
\sum_{\ell>\ell_j}t_\ell\cdot
\left(\frac{4}{9}\right)^{(\ell-\ell_j)}\biggr]
= O(\log^2 n)t_{\ell_j}. 
\end{eqnarray*}

\vspace{-6ex}
\end{proof}

\begin{theorem} \label{genthm}
Algorithm~\ref{genalg} succeeds with probability $1-1/\poly(m)$, and returns a
solution of expected cost $O\bigl(\log n\cdot\max\{\log n,\log m\}\bigr)\cdot\OPT$.  
\end{theorem}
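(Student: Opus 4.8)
The plan is to assemble the pieces already proved—Lemmas~\ref{lem:gkr} and~\ref{lbound} together with Claim~\ref{zcover}—into a bound on the total expected cost, tracking each of the three cost components (facility-opening, connection, latency) separately, and then to handle the two sources of failure (steps A1.3 and A2). First I would dispatch the success-probability claim: by Lemma~\ref{lem:gkr}, a given phase fails to produce the required tree $T'_\ell$ with probability at most $1/m^4$, and since there are $\Nc+1=O(\log m)$ phases, a union bound gives overall failure probability $O(\log m)/m^4$ in step A1.3; by Lemma~\ref{lbound}, each client remains unconnected after all phases with probability at most $1/m^4$, so a union bound over the $m$ clients adds another $1/m^3$. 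Hence the algorithm succeeds with probability $1-1/\poly(m)$.

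Next I would bound the expected cost conditioned on nothing—that is, bound the expectation of the random variable that equals the solution cost when the algorithm succeeds and $0$ otherwise, which (as the excerpt notes) suffices since a failed run contributes $0$. The connection cost is deterministically at most $4\sum_j\bC_j\le 4\cdot\OPT$, because every connected client $j$ is attached to a facility in $N_j$, whose connection cost is at most $4\bC_j$ by definition of $N_j$. For the facility-opening cost: in each phase $\ell$, the tree $T'_\ell$ chosen in step A1.3 satisfies condition~(i), so the facilities opened in phase $\ell$ cost at most $40\cdot192\log_2 n\sum_{(i,v_i)\in E'}f_i\z_{i,v_i}\le O(\log n)\sum_{i,t}f_i y_{i,t}$, using that $\z_{i,v_i}=\sum_{t\le t_\ell}y_{i,t}\le\sum_t y_{i,t}$; summing over the $O(\log m)$ phases gives facility cost $O(\log n\log m)\cdot\OPT$. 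For the latency cost, Lemma~\ref{lbound} already gives $\E{L_j}=O(\log^2 n)t_{\ell_j}$ where $t_{\ell_j}\le 2\tau_j\le 24\bL_j$, so the total expected latency cost is $O(\log^2 n)\sum_j\bL_j\le O(\log^2 n)\cdot\OPT$.

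Combining the three pieces, the expected cost is $O(1)\cdot\OPT+O(\log n\log m)\cdot\OPT+O(\log^2 n)\cdot\OPT=O\bigl(\log n\cdot\max\{\log n,\log m\}\bigr)\cdot\OPT$, which is the claimed bound. I would remark that the expected cost \emph{conditioned on success} is at most $\frac{1}{1-1/\poly(m)}$ times the unconditional expectation of the truncated random variable, hence also $O\bigl(\log n\cdot\max\{\log n,\log m\}\bigr)\cdot\OPT$.

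The main obstacle—and the only place requiring genuine care rather than bookkeeping—is already isolated in the two lemmas: verifying that the GKR rounding (Theorem~\ref{thm:gkr}) applies in each phase requires the covering bound $\z(\delta(S))\ge 2/3$ for every group $N'_j$ with $j\in D_\ell$, which is exactly Claim~\ref{zcover}, and then that a union-of-$192\log_2 n$-independent-runs amplifies the per-run success probability enough to beat the Markov losses on the cost conditions, which is the computation in \eqref{eq:10} inside Lemma~\ref{lem:gkr}. Given those, the theorem follows by linearity of expectation and union bounds; the one subtlety worth flagging is that the geometric decay rate $4/9<1/2$ in Lemma~\ref{lbound} is what makes $\sum_{\ell>\ell_j}t_\ell(4/9)^{\ell-\ell_j}=O(t_{\ell_j})$ converge, so the latency bound does not pick up an extra $\log m$ factor from the number of phases.
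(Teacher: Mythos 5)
Your proposal is correct and follows essentially the same route as the paper's own proof: bound the failure probability via Lemmas~\ref{lem:gkr} and~\ref{lbound} with union bounds, then bound the expectation of the truncated cost random variable by handling connection, facility-opening, and latency costs separately and invoking $t_{\ell_j}\le 2\tau_j=O(\bL_j)$. You have simply expanded the ``preceding arguments'' that the paper invokes implicitly, and the concluding remark on passing from the truncated expectation to the conditional expectation is a correct (if minor) addendum the paper leaves tacit.
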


\begin{proof} 
Lemmas~\ref{lem:gkr} and~\ref{lbound} show that the failure probability is $1/\poly(m)$.
Let $Y$ denote the cost incurred if the algorithm succeeds, and 0 otherwise. 
Since $t_{\ell_j}\leq 2\tau_j=O(\bL_j)$ for each $j$, we have 
$\E{Y}=O\bigl(\log n\cdot\max\{\log n,\log m\}\bigr)\cdot\OPT$ by Lemma~\ref{lbound} and
the preceding arguments. 
\end{proof}

\paragraph{Removing the assumption \boldmath $\Time=\poly(m)$.}
We first argue that although
\eqref{mlufllp} has a pseudopolynomial number of variables, one can compute a near-optimal
solution to it in polynomial time (Lemma~\ref{lpsolve}) by considering only (integer)
time-values that are powers of $(1+\e)$ (rouhgly speaking).  
Given $\e>0$, define $\Time_r=\ceil{(1+\e)^r}$, and let 
$\TS:=\{\Time_0,\Time_1,\ldots,\Time_k\}$ where $k$ is the smallest integer such that
$\Time_k\geq \min\{n,m\}d_{\max}$. Define $\Time_{-1}=0$.
Let \lp{\TS} denote \eqref{mlufllp} when $t$ ranges over $\TS$. 
\begin{lemma} \label{lpsolve}
For any $\e>0$, we can obtain a solution to \eqref{mlufllp} of cost
at most $(1+\e)\OPT$ in time $\poly(\text{input size},1/\e)$.
\end{lemma}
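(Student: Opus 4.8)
The plan is to show that restricting the time indices in \eqref{mlufllp} from the full range $[\Time]$ to the geometric grid $\TS=\{\Time_0,\ldots,\Time_k\}$ loses only a $(1+\e)$ factor, and that \lp{\TS} has polynomially many variables and constraints (since $|\TS|=O(\log_{1+\e}(\min\{n,m\}d_{\max}))=\poly(\text{input size},1/\e)$ and \eqref{eq:4} can still be separated efficiently by a min-cut computation for each $(t,j)$). The bulk of the argument is the rounding in the time coordinate. First I would take an optimal solution $(x,y,z)$ to \eqref{mlufllp} and map it to a feasible solution $(\hx,\hy,\hz)$ of \lp{\TS} by "rounding time up": whenever the original solution uses a time $t$ with $\Time_{r-1}<t\le\Time_r$, I reassign the corresponding mass to time $\Time_r$. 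Concretely, set $\hx_{ij,\Time_r}=\sum_{\Time_{r-1}<t\le\Time_r}x_{ij,t}$, $\hy_{i,\Time_r}=\max_{\Time_{r-1}<t\le\Time_r}\bigl(\text{partial sums of }y\bigr)$ — more precisely one wants $\sum_{t'\le\Time_r}\hy_{i,t'}\ge\sum_{t'\le t}y_{i,t'}$ for all $t\le\Time_r$, which is achieved by letting the cumulative $\hy$ at $\Time_r$ match the cumulative $y$ at $\Time_r$ — and similarly $\hz_{e,\Time_r}=z_{e,\Time_r}$ (keeping the "by time $t$" cumulative interpretation, so $\hz$ at grid point $\Time_r$ is just the old $z$ at $\Time_r$).

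Next I would verify feasibility of $(\hx,\hy,\hz)$ for \lp{\TS} constraint by constraint. The covering constraint $\sum_{i,t}\hx_{ij,t}\ge1$ is preserved since we only moved mass around in $t$. The constraint $\hx_{ij,t}\le\hy_{i,t}$ (now read cumulatively, or directly) holds because cumulative $\hx$ at $\Time_r$ equals cumulative $x$ at $\Time_r$, which is $\le$ cumulative $y$ at $\Time_r$ = cumulative $\hy$ at $\Time_r$. Constraint \eqref{eq:3}, $\sum_e d_e\hz_{e,\Time_r}\le\Time_r$, holds because $\hz_{e,\Time_r}=z_{e,\Time_r}$ and the original satisfied $\sum_e d_e z_{e,\Time_r}\le\Time_r$. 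Constraint \eqref{eq:4} at a grid time $\Time_r$ reads $\sum_{e\in\delta(S)}\hz_{e,\Time_r}\ge\sum_{i\in S,\,t'\le\Time_r}\hx_{ij,t'}$; the right-hand side equals $\sum_{i\in S,\,t'\le\Time_r}x_{ij,t'}$ (cumulative mass is unchanged at grid points), and the left side equals $\sum_{e\in\delta(S)}z_{e,\Time_r}$, so this is exactly the original \eqref{eq:4} at $t=\Time_r$. Finally the support condition $\hy_{i,t}=0$ when $d_{ir}>t$: if $y_{i,t'}>0$ forces $d_{ir}\le t'\le\Time_r$, so moving it to $\Time_r\ge t'$ only relaxes this. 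Hence $(\hx,\hy,\hz)$ is feasible for \lp{\TS}.

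For the cost, the facility term $\sum_{i,t}f_i\hy_{i,t}$ is unchanged (total $y$-mass on each facility is preserved), and the connection term $c_{ij}$ part is unchanged since $\sum_t\hx_{ij,t}=\sum_t x_{ij,t}$ for each $i,j$. The only increase is in the latency term: a client's contribution $\sum_{i,t}t\,x_{ij,t}$ becomes $\sum_{i,r}\Time_r\hx_{ij,\Time_r}$, and since each unit of mass originally at time $t\in(\Time_{r-1},\Time_r]$ now pays $\Time_r\le(1+\e)\Time_{r-1}+1\le(1+\e)t + 1$ — and since we may assume $t\ge1$, this is at most $(1+2\e)t$ (after a harmless rescaling of $\e$, or by noting the additive $1$ is absorbed since $t\ge 1$, giving $(1+\e)^2 t$ with the grid defined using $\Time_r=\ceil{(1+\e)^r}$). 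Summing over clients gives total cost at most $(1+O(\e))\OPT$; reparametrizing $\e$ yields the claimed $(1+\e)\OPT$. Since \lp{\TS} is a polynomial-size LP with an efficient separation oracle (via $\poly(m)\cdot|\TS|$ min-cut computations for \eqref{eq:4}), it can be solved in $\poly(\text{input size},1/\e)$ time, and its optimum is a feasible solution to \eqref{mlufllp} of cost at most $(1+\e)\OPT$, as required. The main thing to get right is the bookkeeping for the cumulative $y$- and $z$-variables so that \eqref{eq:4} transfers exactly at grid points; everything else is a direct check. One mild subtlety is the additive $+1$ from $\Time_r=\ceil{(1+\e)^r}$, handled by assuming (WLOG, by scaling $d$) that $d_{\max}\ge 1$ so all relevant times are $\ge1$, whence the ceiling costs at most another $(1+\e)$ factor.
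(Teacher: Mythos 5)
Your proposal is correct and follows essentially the same route as the paper: restrict \eqref{mlufllp} to the geometric time grid $\TS$, round mass ``up'' in time so that cumulative $x$-, $y$-, and $z$-values agree with the original solution at grid points (which makes feasibility a pointwise check of each constraint at $t\in\TS$), observe that the facility and connection costs are unchanged, and bound the latency increase by a $(1+\e)$-factor. The only cosmetic difference is that the paper treats $t=1$ separately and then uses $\Time_\ell\le(1+\e)t$ for all $t>\Time_{\ell-1}$ to get $(1+\e)$ exactly, whereas you absorb the ceiling's additive $+1$ by a rescaling of $\e$; both are fine.
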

\begin{proof}
We prove that the optimal value of \lp{\TS} is at most $(1+\e)\OPT$.
Since the size of \lp{\TS} is $\poly(\text{input size},1/\e)$ this proves the lemma.

We transform $(x,y,z)$, an optimal solution to \eqref{mlufllp} to a feasible solution
$(x',y',z')$ to \lp{\TS} of cost at most $(1+\e)\OPT$. 
(In fact, the facility-opening and connection-costs remain unchanged, and the latency-cost
blows up by a $(1+\e)$-factor.) $z'$ is simply a restriction of $z$
to the times in $\TS$, that is, $z'_{e,t}=z_{e,t}$ for each $e, t\in\TS$. 
Set $x'_{ij,1}=x_{ij,1},\ y'_{i,1}=y_{i,1}$ for all $i$ and $j$. 
For each $\ell=1,\ldots,k$, facility $i$, client $j$, 
we set $x'_{ij,\Time_\ell}=\sum_{t=\Time_{\ell-1}+1}^{\Time_\ell}x_{ij,t}$ and
$y'_{i,\Time_\ell}=\sum_{t=\Time_{\ell-1}+1}^{\Time_\ell}y_{i,t}$.
It is clear that $\sum_{t\in\TS}x'_{ij,t}=\sum_{t} x_{ij,t}$ and
$\sum_{t\in\TS}y'_{i,t}=\sum_t y_{i,t}$ for all $i$ and $j$, and moreover for any
$\Time_\ell\in\TS$, we have
$\sum_{t\in\TS:t\leq\Time_\ell}x'_{ij,t}=\sum_{t\leq\Time_\ell}x_{ij,t}$. It follows that
$(x',y',z')$ is a feasible solution to \lp{\TS} and
$\sum_{i,t\in\TS}f_iy'_{i,t}=\sum_{i,t}f_iy_{i,t},\
\sum_{j,i,t\in\TS}c_{ij}x'_{ij,t}=\sum_{j,i,t}c_{ij}x_{ij,t}$.
To bound the latency cost, note that for any $t>\Time_{\ell-1}$, we have
$\Time_\ell\leq(1+\e)t$, so for any client $j$ and facility $i$, 
$\sum_{t\in\TS}tx'_{ij,t} 
\leq x_{ij,1}+(1+\e)\sum_{t>1}tx_{ij,t}\leq(1+\e)\sum_t tx_{ij,t}$.  
Thus, $\sum_{j,i,t\in\TS}tx'_{ij,t}\leq(1+\e)\sum_{j,i,t}tx_{ij,t}$.
\end{proof}

Let $(x',y',z')$ denote an optimal solution to \lp{\TS}.
The only changes to Algorithm~\ref{genalg} are in the definition of the time $t_\ell$ and
the number of phases $\Nc$. 
(Of course we now work with $(x',y',z')$, and $\bC_j,\ \bL_j$ are defined in terms of
$(x',y',z')$ now.) 
The idea is to define $t_\ell$ so that one can ``reach'' the $\tau_j$ of every
client $j$ in $O(\log m)$ phases; thus, one can terminate in $O(\log m)$ phases and
thereby obtain the same approximation on the facility-opening cost.
Let $\bbL=(\sum_j\bL_j)/m=(\sum_{j,i,t\in\TS}tx'_{ij,t})/m$. 
For $x\leq\Time_k$, define $\TS(x)$ to be the earliest time in $\TS$ that is at least $x$;
if $x\geq\Time_k$, define $\TS(x):=\Time_k$. 
Note that $\TS(x)\leq(1+\e)x$ for all $x\geq 0$.
We now define $t_\ell=\TS(\bbL\cdot 2^\ell)$, and set the number of phases to 
$\Nc:=\ceil{\log_2(2\tau_{\max}/\bbL)+4\log_2 m}$.
Note that since $\tau_j = O(\bL_j)$, we have $\Nc = O(\log m)$.
\begin{theorem} \label{scalethm}
For any $\e>0$, Algorithm~\ref{genalg} with the above modifications succeeds
with high probability and returns a solution of expected cost 
$O\bigl(\log n\max\{\log n,\log m\}\bigr)(1+\e)\OPT$.
\end{theorem}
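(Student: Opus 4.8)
The plan is to re-run the proof of Theorem~\ref{genthm} essentially verbatim, isolating the only two places the modification bites — the phase times $t_\ell=\TS(\bbL\cdot 2^\ell)$ and the phase count $\Nc$ — and starting from an optimal solution $(x',y',z')$ of \lp{\TS} rather than of \eqref{mlufllp}. By Lemma~\ref{lpsolve} this solution has objective $\OPT'\le(1+\e)\OPT$, so $\sum_j\bC_j$, $\sum_j\bL_j$, and the LP facility cost $\sum_{i,t\in\TS}f_iy'_{i,t}$ are each at most $(1+\e)\OPT$. The key observation is that every $t_\ell$ lies in $\TS$, so constraints \eqref{eq:3} and \eqref{eq:4} of \lp{\TS} are available at time $t_\ell$; hence Claim~\ref{zcover} and Lemma~\ref{lem:gkr} go through word for word with $(x,y,z)$ replaced by $(x',y',z')$: in phase $\ell$ the lifted assignment $\z$ covers each group $N'_j$, $j\in D_\ell$, to extent $\ge\frac{2}{3}$, some facility of $N_j$ is opened in phase $\ell$ with probability $\ge\frac{5}{9}$, and the facility cost and $\T_\ell$-length bought in phase $\ell$ are $O(\log n)\sum_{i,t\in\TS}f_iy'_{i,t}$ and $O(\log^2 n)t_\ell$ respectively.

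Next I would bound $\Nc$ on both sides. Since $\tau_j=O(\bL_j)$ we get $\tau_{\max}=O\bigl(\sum_j\bL_j\bigr)=O(m\bbL)$, so $\log_2(2\tau_{\max}/\bbL)=O(\log m)$ and $\Nc=O(\log m)$; hence the total facility-opening cost, which is $O(\log n)\sum_{i,t\in\TS}f_iy'_{i,t}$ per phase, is $O(\log n\log m)(1+\e)\OPT$. In the other direction, for $\ell_j$ (the least $\ell$ with $t_\ell\ge\tau_j$) either $\ell_j=0$, or $\ell_j\ge1$ and then $\bbL\cdot2^{\ell_j-1}\le\TS(\bbL\cdot2^{\ell_j-1})<\tau_j$, which simultaneously gives $\ell_j\le\log_2(2\tau_j/\bbL)\le\log_2(2\tau_{\max}/\bbL)$ and $\bbL\cdot2^{\ell_j}<2\tau_j=O(\bL_j)$. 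In either case $\Nc\ge\ell_j+4\log_2 m$ (using the convention that at least $\ceil{4\log_2 m}$ phases are run, which matters only in the degenerate case $\bbL>\tau_{\max}$, where every $\ell_j=0$ anyway). Then, exactly as in Lemma~\ref{lbound}, $\Pr[P_j\ge\ell]\le(4/9)^{\ell-\ell_j}$ for $\ell\ge\ell_j$, so a client is left unconnected with probability at most $(4/9)^{4\log_2 m}\le 1/m^4$, and the overall failure probability (also accounting for step~A1.3 over $O(\log m)$ phases) is $1/\poly(m)$.

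The latency estimate is where the base unit $\bbL$ does its work. Using $d(\Tour_\ell)=O(\log^2 n)t_\ell$ and $t_\ell=\TS(\bbL\cdot2^\ell)\le(1+\e)\bbL\cdot2^\ell$, the geometric-series computation of Lemma~\ref{lbound} gives $\E{L_j}=O\bigl((1+\e)\log^2 n\bigr)\bbL\cdot2^{\ell_j}$. By the previous paragraph $\bbL\cdot2^{\ell_j}\le\bbL$ when $\ell_j=0$ and $\bbL\cdot2^{\ell_j}=O(\bL_j)$ when $\ell_j\ge1$, so $\E{L_j}=O\bigl((1+\e)\log^2 n\bigr)(\bbL+\bL_j)$ in all cases; summing over the $m$ clients and using $m\bbL=\sum_j\bL_j\le(1+\e)\OPT$ yields total expected latency $O\bigl((1+\e)\log^2 n\bigr)\OPT$. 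Adding the connection cost $\le 4\sum_j\bC_j\le 4(1+\e)\OPT$ (each connected client is joined to some $i\in N_j$) and the facility-cost bound above, the expected cost conditioned on success — and hence, since success has probability $1-1/\poly(m)$, the unconditional expected cost — is $O\bigl(\log n\max\{\log n,\log m\}\bigr)(1+\e)\OPT$.

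I expect the only genuinely new ingredient, and the step needing the most care, to be the charging just described: a client with $\bL_j\ll\bbL$ can be covered only starting from phase $\ell_j=0$, at time roughly $\bbL$, hence is charged $\Theta(\log^2 n)\bbL$ rather than $\Theta(\log^2 n)\bL_j$ — which is affordable precisely because there are only $m$ clients and $m\bbL=\sum_j\bL_j$ is itself a lower bound on $\OPT'$. Everything else is bookkeeping: checking via Lemma~\ref{lpsolve} that restricting time to $\TS$ costs only a $(1+\e)$-factor, and that every estimate in the proof of Theorem~\ref{genthm} evaluates $(x',y',z')$ only at times lying in $\TS$.
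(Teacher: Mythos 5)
Your proposal is correct and follows essentially the same route as the paper's own proof: restrict to time grid $\TS$ via Lemma~\ref{lpsolve}, observe that every $t_\ell\in\TS$ so Claim~\ref{zcover} and Lemma~\ref{lem:gkr} transfer verbatim, and then do exactly the same charging the paper does --- bound $\E{L_j}=O(\log^2 n)\,t_{\ell_j}$ with $t_{\ell_j}=O((1+\e)\bbL\cdot 2^{\ell_j})$, charge clients with $\ell_j=0$ at cost $O(\log^2 n)\bbL$ each, and use $m\bbL=\sum_j\bL_j$ to absorb these into the LP latency cost. The only visible difference is cosmetic: you phrase the per-client latency bound as $O(\log^2 n)(\bbL+\bL_j)$ and sum, whereas the paper splits the sum into $\{j:\tau_j\leq t_0\}$ and $\{j:\tau_j>t_0\}$; these are the same computation.
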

\begin{proof}
The analysis of the facility-opening and connection-cost is exactly as before (since
$\Nc=O(\log m)$). Define $\ell_j$ as the smallest $\ell$ such that $t_{\ell}\geq\tau_j$.   
Note that $\ell_j\leq\floor{\log_2(2\tau_j/\bbL)}$ (this holds even when $\tau_j\leq\bbL$).
Hence, the probability that $j$ is not connected after $\Nc$ phases is at most $1/m^4$. So  
as before, the failure probability is at most $1/\poly(m)$.
We have $\sum_{\ell\leq\ell_j}t_\ell\leq 
(1+\e)\bbL\sum_{\ell\leq\ell_j}2^\ell\leq 2(1+\e)t_{\ell_j}$, and 
$\sum_{\ell>\ell_j}t_\ell\bigl(\frac{4}{9}\bigr)^{(\ell-\ell_j)}=O(t_{\ell_j})$.
Thus, the inequalities involving $L_j$ and $\E{L_j}$ in Lemma~\ref{lbound} are still
valid, and we obtain the same bound on $\E{L_j}$ as in Lemma~\ref{lbound}.
Note that 
$t_{\ell_j}\leq 2(1+\e)\tau_j=O(\bL_j)$ when $\tau_j\geq\bbL$. Thus, 
$\sum_j\E{L_j}\leq O(\log^2 n)\bigl[m\cdot\bbL+\sum_{j:\tau_j>t_0}O(\bL_j)\bigr]
=O(\log^2 n)\bL$.   
\end{proof}

\paragraph{Inappproximability of \mlufl.} \label{mlufl-lb} 
We argue that any improvement in the guarantee obtained in Theorem~\ref{genthm} would
yield an improvement in the approximation factor for \gst. 
We reduce \gst to \mgl, the special case of \mlufl mentioned in Section~\ref{intro}, where
we have groups $G_j\sse\F$ and the goal is to order the the facilities so as to minimize
the sum of the covering times of the groups. (Note that Theorem~\ref{genthm} implies an
$O(\log^2 n)$-approximation  for \mgl.)
Recall that we may assume that the groups in \mgl are disjoint, in which case
the connection costs form a metric. 

\begin{theorem} \label{lbthm}
Given a $\rho_{n,m}$-approximation algorithm for \mgl with (at most) $n$ nodes and $m$ 
groups, we can obtain an $O(\rho_{n,m}\log m)$-approximation algorithm for \gst with $n$ 
nodes and $m$ groups. Thus, the polylogarithmic inapproximability of
\gst~\cite{HalperinK03} implies that \mgl, and hence \mlufl even with metric connection
costs, cannot be approximated to a factor better than $\Omega(\log m)$, even when the time-metric
arises from a hierarchically well-separated tree, unless \np $\sse$ \ztime$(n^{\polylog(n)})$. 
\end{theorem}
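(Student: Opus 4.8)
The plan is to reduce \gst to \mgl via a scaling/thresholding argument of the kind standard in latency problems: if the \mgl objective (sum of cover times) is small, then most groups are covered early, and a prefix of the \mgl path yields a short path covering many groups. First I would set up the instances carefully. Given a \gst instance on graph $H$ with metric edge weights, groups $G_1,\dots,G_m$, and root $r$, I would feed essentially the same underlying metric and groups to \mgl (the facilities being the non-root nodes; recall the $c_{ij}$ are $0/\infty$ depending on group membership, and we may take the groups disjoint so these form a metric, matching the hypothesis of the theorem). The key observation is that the optimal \gst tree $T^*$ can be turned, by doubling and shortcutting, into a path $P^*$ of length $O(\mathrm{OPT}_{\gst})$ that covers all $m$ groups; traversing $P^*$ shows $\mathrm{OPT}_{\mgl}\le m\cdot O(\mathrm{OPT}_{\gst})$, since every group is covered by time $O(\mathrm{OPT}_{\gst})$.

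Next I would run the assumed $\rho_{n,m}$-approximation for \mgl to get a path $Q$ with $\sum_j (\text{cover time of }G_j \text{ on } Q)\le \rho_{n,m}\cdot m\cdot O(\mathrm{OPT}_{\gst})$. By Markov's inequality, at least half the groups are covered by $Q$ within time $B:=2\rho_{n,m}\cdot O(\mathrm{OPT}_{\gst})$; the length-$B$ prefix of $Q$ is thus a tree (path) of length $O(\rho_{n,m}\mathrm{OPT}_{\gst})$ covering $\ge m/2$ groups. Now I would iterate: contract/remove the covered groups and recurse on the remaining $\le m/2$ groups, each time extracting a subtree of cost $O(\rho_{n,m}\mathrm{OPT}_{\gst})$ (note $\mathrm{OPT}_{\gst}$ only decreases for sub-instances) covering at least half the surviving groups. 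After $O(\log m)$ rounds all groups are covered; taking the union of the $O(\log m)$ extracted subtrees (all sharing root $r$) gives a connected subgraph spanning all groups of total cost $O(\rho_{n,m}\log m)\cdot\mathrm{OPT}_{\gst}$, i.e. an $O(\rho_{n,m}\log m)$-approximate \gst solution. A minor point to handle cleanly is that a ``length-$B$ prefix of a path'' is itself a path rooted at $r$, so it is automatically a valid partial Steiner tree — no extra Steiner-tree rounding is needed here, which keeps the loss at exactly $\log m$.

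The second sentence of the theorem is then immediate: Halperin–Krauthgamer~\cite{HalperinK03} rule out an $o(\log^{2-\e} m)$-approximation for \gst (on HSS-tree instances) unless $\np\sse\ztime(n^{\polylog n})$, so a $\rho_{n,m}$-approximation for \mgl with $\rho_{n,m}=o(\log m)$ would, via the reduction, give a $o(\log^2 m)$-approximation for \gst and contradict this; hence \mgl, and a fortiori \mlufl with metric connection costs, is $\Omega(\log m)$-hard, and the reduction preserves the property that the time-metric comes from a hierarchically well-separated tree since we never modify the metric.

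The step I expect to be the main obstacle is making the recursion bookkeeping rigorous: ensuring that after ``removing'' the already-covered groups the residual instance is a legitimate \mgl/\gst instance on the same (or a dominating) metric, that $\mathrm{OPT}_{\gst}$ of the residual instance is bounded by that of the original, and that gluing the $O(\log m)$ rooted subtrees together genuinely yields a single connected feasible \gst solution with the claimed additive cost. The arithmetic ($m\cdot O(\mathrm{OPT})$ upper bound, Markov, geometric decrease of the group count) is routine; the care is all in the reduction's structural invariants.
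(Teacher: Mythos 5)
Your proposal is correct, but it takes a genuinely different route from the paper's. The paper's proof is LP-based: it sets up a configuration LP for the path-variant of \gst (one variable $x_Q$ per $r$-rooted path covering all groups, minimizing the maximum group latency $M$), observes that separating over the dual constraints is exactly a weighted \mgl instance, invokes the Jain--Mahdian--Salavatipour-style machinery to turn a $\rho$-approximate separation oracle into a $\rho$-approximate primal solution, and then rounds by independently sampling each path with probability $\min\{4\log m\cdot x_Q,1\}$ and taking the union of the resulting $O(\log m)$ prefix-subpaths, bounding the cost by $O(\log m)\cdot 2M$. Your argument is instead the purely combinatorial one: feed the \gst instance to the \mgl approximation, apply Markov to find that the length-$O(\rho\cdot\OPT_{\gst})$ prefix of the returned tour covers at least half the groups, delete those groups, and recurse; after $O(\log m)$ geometric halvings you glue the $O(\log m)$ rooted prefix-paths. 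Both arguments pay the $\log m$ factor, but in different places (union bound over $m$ groups in the rounding vs.\ $\log_2 m$ rounds of halving); yours is more elementary and constructive, while the paper's LP view is the one they reuse elsewhere in the paper (and directly gives the $O(\log m)$ integrality-gap statement for the path-variant LP). In fact the paper explicitly remarks that Gupta et al.~\cite{N10} independently found a combinatorial proof of this theorem, and what you have sketched is essentially that proof. Two small points worth being explicit about when you write it up: (i) since $\OPT_{\gst}$ is unknown, the prefix in each round should be taken up to the cover time of the median surviving group (bounded by $2L/m'$ where $L$ is the returned \mgl cost and $m'$ is the current number of groups), rather than up to an explicit threshold $B$ involving $\OPT_{\gst}$; and (ii) you should note that removing groups only shrinks $\OPT_{\gst}$ of the residual instance and leaves the \mgl instance on at most $n$ nodes and at most $m$ groups, so the $\rho_{n,m}$ guarantee continues to apply in every round.
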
   

Our proof of the above theorem is LP-based and is deferred to Appendix~\ref{append-gen}.
Gupta et al.~\cite{N10} independently arrived at the above theorem via a
combinatorial proof.

\subsection{\mlufl with related metrics} \label{mlufl-related}
Here, we consider the \mlufl problem when 
the facilities, clients, and the root $r$ are located in a common metric
space that defines the connection-cost metric (on $\F\cup\D\cup\{r\}$), and we have
$d_{uv}=c_{uv}/M$ for all $u,v\in\F\cup\D\cup\{r\}$. 
We call this problem, {\em related \mlufl}, and design an \nolinebreak
\mbox{$O(1)$-approximation algorithm for it.}  

The algorithm follows a similar outline as Algorithm~\ref{genalg}.
As before, 
we build the tour on the open facilities by 
concatenating tours obtained by ``Eulerifying'' trees rooted at $r$ of geometrically
increasing length. At a high level, the improvement in the approximation 
arises because one can now obtain these
trees without resorting to Theorems~\ref{thm:frt} and \ref{thm:gkr}  
and losing $O(\log n)$-factors in process. Instead, since the $d$- and $c$- 
metrics are related, we can obtain a group Steiner tree on the relevant groups
by using a Steiner tree algorithm  (in a manner similar to the LP-rounding algorithms
in~\cite{RaviS99,GuptaKKRY01}). 
We now define $N_j=\{i: \sum_{t}x_{ij,t}>0,\ c_{ij}\leq 3\bC_j\}$, and $\tau_j=6\bL_j$.
Ideally, in each phase $\ell$, we want to connect the $N_j$ groups for all $j$ such that
$\tau_j \le t_\ell := 2^\ell$. But to obtain a low-cost solution, we 
do a facility-location-style clustering of the set of clients with 
$\tau_j\leq t_\ell$ and build a tree $\T_\ell$ that connects the $N_j$s of
only the cluster centers: we contract these $N_j$s and build an MST (in the $d$-metric) on
them, and then connect each $N_j$ internally (in the $d$-metric) using intracluster edges
incident on $j$.   
{\em Here we crucially exploit the fact that the $d$- and $c$-metrics are related}. 

Deciding which facilities to open is tricky 
because groups $N_j$ and $N_k$ created in different phases could overlap, and we could
have $\bC_j\ll\bC_k$ but $\tau_j\gg\tau_k$; so if we open $i\in N_j$ and use this to also 
serve $k$, then we must connect $i$ to some $\T_\ell$ (without increasing its $d$-cost by
much) where $t_\ell=O(\tau_k)$. 
We consider the collection $\C$ of cluster centers
created in all the phases, and pick a maximal subset $\C' \sse\C$ that yields disjoint
clusters by greedily considering clusters in increasing $\bC_j$ order. 
We open the cheapest facility in $N_j$ for all $j\in \C'$, 
and attach it to the tree $\T_\ell$, where $\ell$ is the {\em earliest} phase
such that there is some cluster $N_k$ created in that phase that was removed (from $\C$)
when $N_j$ was included in $\C'$ (because $N_k\cap N_j\neq\es$). 
Since $d$ and $c$ are related, one can bound the resulting increase in the $d$-cost of
$\T_\ell$. Finally, we convert these augmented $\T_\ell$-trees to tours and concatenate
these tours. 

We now describe the algorithm in detail. We have not sought to optimize the approximation
ratio. When we refer to an edge or a node  
below, we mean an edge or node of the complete graph on $\F\cup\D\cup\{r\}$. 
Recall that $N_j=\{i: \sum_{t}x_{ij,t}>0,\ c_{ij}\leq 3\bC_j\}$, and $\tau_j=6\bL_j$.
So $\sum_{i\in N_j}\sum_{t}x_{ij,t}\geq\frac{2}{3}$, and 
$\sum_{i\in N_j,t\leq\tau_j}x_{ij,t}\geq\frac{1}{2}$. 
\begin{list}{R\arabic{enumi}.}{\usecounter{enumi} \addtolength{\leftmargin}{-1ex}}
\item For each time $t_\ell=2^\ell$, where $\ell=0,1,\ldots,\lceil\log\Time\rceil$, we do 
the following. Define 
$D_\ell=\{j: \tau_j\leq t_\ell\}\setminus \bigl(\bigcup_{0\leq \ell'<\ell}\C_{\ell'}\bigr)$ 
(where the union of an empty collection is $\es$).
\vspace{-0.5ex}
\begin{list}{$\bullet$}{\usecounter{enumii} \addtolength{\leftmargin}{-5ex}}
\item {\bf (Clustering)\ } We cluster the facilities in $\bigcup_{j\in D_\ell}N_j$ as
follows. We pick $j\in D_\ell$ with smallest $\bC_j$ value and form a cluster around $j$
consisting of the facilities in $N_j$. For every client $k\in D_\ell$ (including $j$) such
that $c_{jk}\leq 30\bC_k$ (note that $\bC_k\geq\bC_j$), we remove $k$ from $D_\ell$, set
$\sg(k)=j$, and recurse on the remaining clients in $D_\ell$ until no client is left in
$D_\ell$. Let $\C_\ell$ denote the set of cluster centers (i.e., 
$\{j\in D_\ell: \sg(j)=j\}$).  Note that for two clients $j$ and $j'$ in $\C_\ell$, $N_j\cap N_{j'}$ 
is $\es$.

\item {\bf (Building a group Steiner tree $\T_\ell$ on $\{N_j\}_{j\in\C_\ell}$)\ } We
contract the clusters $N_j$ for $j\in\C_\ell$ into supernodes, and build a minimum
spanning tree (MST) $\T''_\ell$ connecting $r$ and these supernodes. 
Next, we uncontract the supernodes, and for each $j\in\C_\ell$, we add edges joining $j$
to every facility $i\in N_j$ that has an edge incident to it in $\T''_\ell$. This yields
the tree $\T_\ell$.
\end{list}

\item {\bf (Opening facilities)\ } Let $\C=\bigcup_\ell \C_{\ell}$. (Note that a client
appears in at most one of the $\C_\ell$ sets.) We cannot open a facility in every cluster
centered around a client in $\C$, since for $j\in \C_\ell$ and $k\in \C_{\ell'}$, $N_j$ and 
$N_{k}$ need not be disjoint.
So we select a subset $\C'\sse\C$ such that for any two $j,k$ in $\C'$, the sets
$N_j$ and $N_{k}$ are disjoint. This is done as follows. We initialize
$\C'\assign\es$. Pick the client $j\in\C$ with smallest $\bC_j$ value and add it to
$\C'$. We delete from $\C$ every client $k\in\C$ (including $j$) such that 
$N_k\cap N_j\neq\es$, setting $\nbr(k)=j$, and recurse on the remaining set of clients
until no client is left in $\C$.  

Consider each $j\in\C'$. We open the facility $i\in N_j$ with smallest $f_i$. Let $\ell$
be the smallest index such that there is some $k\in\C_\ell$ with $\nbr(k)=j$. We connect
$i$ to $\T_\ell$ by adding the {\em facility edge} $(i,k)$ to it. Let $\T'_\ell$ denote
$\T_\ell$ augmented by all such facility edges.

\item We obtain a tour connecting all the open facilities, by converting each tree
$\T'_\ell$ into a tour, and concatenating the tours for $\ell=0,\ldots,\lceil\Time\rceil$
(in that order).

\item For every client $j\in\C$, we assign $j$ to the facility opened from
$N_{\nbr(j)}$. For every client $j\notin\C$, we assign $j$ to the same facility as
$\sg(j)$. 
\end{list}

\begin{theorem} \label{relmetthm}
For related \mlufl, 
one can round $(x,y,z)$ to get a solution with facility-opening cost at most
$\frac{3}{2}\sum_{i,t}f_iy_{i,t}$, where each client $j$ incurs connection-cost at most
$39\bC_j$ and latency-cost at most $64\tau_j=384\bL_j$. 
Thus, we obtain an $O(1)$-approximation algorithm for related \mlufl. 
\end{theorem}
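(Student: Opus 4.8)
The plan is to bound each of the three cost components separately, exactly as the statement suggests, relying on the LP constraints and the related-metrics property $d_{uv}=c_{uv}/M$.

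\medskip\noindent\textbf{Facility cost.} First I would observe that the only facilities opened are the cheapest facility $i\in N_j$ for each $j\in\C'$, and since the $\{N_j\}_{j\in\C'}$ are pairwise disjoint (by construction of $\C'$), one can charge $f_i=\min_{i'\in N_j}f_{i'}$ against the LP facility cost $\sum_{i'\in N_j,t}f_{i'}y_{i',t}\ge\min_{i'\in N_j}f_{i'}\cdot\sum_{i'\in N_j,t}y_{i',t}\ge\min_{i'\in N_j}f_{i'}\cdot\frac23$. Wait — I should use $\sum_{i'\in N_j}\sum_t x_{i'j,t}\ge\frac23$ together with $x\le y$ to get $\sum_{i'\in N_j,t}y_{i',t}\ge\frac23$; disjointness then gives total facility cost $\le\frac32\sum_{i',t}f_{i'}y_{i',t}$.

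\medskip\noindent\textbf{Connection cost.} For a client $j$, I trace the assignment chain. If $j\notin\C$, it is assigned to the same facility as $\sg(j)$, which lies in $N_{\nbr(\sg(j))}$. So the distance is bounded by $c_{j,\sg(j)}$ plus the distance from $\sg(j)$ to its opened facility plus the distance from that facility to $\nbr(\sg(j))$ and so on. The key inequalities are: $c_{j,\sg(j)}\le 30\bC_j$ (clustering rule), each facility in $N_j$ is within $3\bC_j$ of $j$, and when $\nbr(k)=j'$ because $N_k\cap N_{j'}\neq\es$ with $\bC_{j'}\le\bC_k$, a facility common to $N_k$ and $N_{j'}$ is within $3\bC_k$ of $k$ and within $3\bC_{j'}\le 3\bC_k$ of $j'$, so $c_{kj'}\le 6\bC_k$. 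I would assemble these into $c_{\phi(j),j}\le (30+3+6)\bC_j=39\bC_j$ using the triangle inequality and the fact that $\bC$ values only decrease along each "smallest-$\bC$-first" selection step. I expect this is where the bookkeeping is heaviest but it is routine given the two greedy rules were designed so that each hop's length is controlled by the larger $\bC$ endpoint, which is $\le\bC_j$.

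\medskip\noindent\textbf{Latency cost.} This is the main obstacle. For $j$ with $\tau_j\le t_\ell$ and $j$ (or $\sg(j)$) a cluster center in phase $\ell$, I need that $j$'s serving facility is visited on $\Tour$ by time $O(\tau_j)$. The tour up to and including $\Tour_\ell$ has length $\le 2\sum_{\ell'\le\ell}d(\T'_{\ell'})$, so I must show $d(\T'_{\ell'})=O(t_{\ell'})$ for each $\ell'$, and then sum the geometric series $\sum_{\ell'\le\ell}t_{\ell'}=O(t_\ell)=O(\tau_j)$. To bound $d(\T_\ell)$: the MST on $r$ plus the contracted supernodes $\{N_j\}_{j\in\C_\ell}$ has cost at most twice any Steiner/spanning structure connecting them, and the LP solution $\{z_{e,t_\ell}\}$ together with constraint~\eqref{eq:4} provides a fractional such structure (each $N_j$, $j\in D_\ell$, is fractionally connected to $r$ to extent $\ge\frac12$ by flow along $z$), which by \eqref{eq:3} has $d$-cost $\sum_e d_e z_{e,t_\ell}\le t_\ell$; parametric/splitting arguments convert the $\ge\frac12$ fractional connectivity into an integral spanning tree of cost $O(t_\ell)$. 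The internal-connection edges adding $i\in N_j$ to $j$ cost $c_{ij}/M\le 3\bC_j/M$, and since $j\in D_\ell$ has $\tau_j=6\bL_j\le t_\ell$ while $\bL_j\ge\bC_j\cdot(\text{something})$ — more carefully, I use $\bL_j=\sum_{i,t}tx_{ij,t}\ge(\text{min time})\cdot$, but the clean bound is $\bC_j/M\le$ (a constant times) $\bL_j\le$ (constant)$\,t_\ell$, using that any facility $i$ with $x_{ij,t}>0$ and $c_{ij}\le 3\bC_j$ is at $d$-distance $\le 3\bC_j/M$ hence reachable by LP-time $\ge 3\bC_j/M$, so $\bL_j\gtrsim\bC_j/M$. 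Summing over the $O(|\C_\ell|)$ clusters and charging each against its own $\bL_j\le t_\ell/6$ keeps this $O(t_\ell)$. Finally the facility edge $(i,k)$ added in step R2 when $\nbr(k)=j$ costs $c_{ik}/M\le 6\bC_k/M=O(\bL_k)=O(t_\ell)$ where $\ell$ is the phase of $k$; summing over $\C_\ell$ again gives $O(t_\ell)$. Combining, $d(\T'_\ell)=O(t_\ell)$, hence $j$'s latency is $O(\sum_{\ell'\le\ell(j)}t_{\ell'})=O(t_{\ell(j)})=O(\tau_j)=O(\bL_j)$, and I would track the constants to land at $64\tau_j=384\bL_j$. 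Adding the three pieces against $\OPT\ge\sum_{i,t}f_iy_{i,t}+\sum_j(\bC_j+\bL_j)$ yields the claimed $O(1)$-approximation.
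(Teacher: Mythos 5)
Your facility-cost and connection-cost arguments are essentially the paper's, and both are fine: the disjointness of $\{N_j\}_{j\in\C'}$ with total $y$-mass $\ge 2/3$ in each gives the $\frac32$ factor, and the chain $c_{\phi(j)j}\le c_{jj'}+c_{\phi(j')j'}\le 30\bC_j+9\bC_{j'}\le 39\bC_j$ (with $j'=\sg(j)$) gives $39\bC_j$.

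The latency argument has a genuine gap. You need to bound the $d$-cost of (a) the intracluster edges joining each center $j\in\C_\ell$ to its facilities in $N_j$, and (b) the facility edges added in step R2, and your plan is to bound each such edge by $O(\bC_j/M)$, invoke a claimed inequality $\bC_j/M\lesssim\bL_j\le t_\ell/6$, and ``sum over $\C_\ell$.'' This fails for two reasons. First, there is no a priori bound on $|\C_\ell|$, so summing per-cluster contributions each bounded only by $O(t_\ell)$ does not yield an $O(t_\ell)$ total. Second, the inequality $\bC_j/M\lesssim\bL_j$ is not established and is suspect: $x_{ij,t}>0$ only forces $d_{ir}\le t$, while $d_{ij}=c_{ij}/M$ is a different quantity and can exceed $d_{ir}$ arbitrarily (if $j$ is far from $r$ but $i$ is close). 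The paper sidesteps both problems with a degree-charging argument that exploits the clustering rule: any MST edge $e=(N_j,N_k)$ between two cluster supernodes has $c_e\ge 30\max\{\bC_j,\bC_k\}-3\bC_j-3\bC_k\ge 24\max\{\bC_j,\bC_k\}$ (because $j,k$ were \emph{not} merged), so $\sum_{j\in\C_\ell}\deg_j\bC_j\le\frac{1}{12}\sum_e c_e$ and the intracluster edges cost at most $\frac{1}{M}\sum_j\deg_j\cdot 3\bC_j\le\tfrac14 d(\T''_\ell)$; the facility edges are similarly bounded by $\tfrac34 d(\T''_\ell)$ using $\deg_j\ge 1$. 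This gives $d(\T'_\ell)\le 8t_\ell$ without ever comparing $\bC_j$ to $\bL_j$. You would also need the argument (which the paper makes explicit) that the opened facility serving $j$ is attached to a tree $\T'_{\ell'}$ with $\ell'\le\ell$, where $\ell$ is the first phase with $j\in D_\ell$ — this is not automatic because $\nbr(j)$ may lie in a later phase, and the paper's choice of attachment phase in step R2 is precisely what makes this work.
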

\begin{proof}
The clusters $N_j$ for clients $j\in\C'$ are disjoint; each such cluster has facility
weight $\sum_{i\in N_j}\sum_t y_{i,t}\geq\frac{2}{3}$, and we open the cheapest facility
in the cluster.

Consider a client $j$, and let it be assigned to facility $i$. If $j\in\C'$, then 
$i\in N_j$, and we have $c_{ij}\leq 3\bC_j$. If $j\in\C\sm\C'$ with $\nbr(j)=k$, then
$i\in N_k$ and there is some facility $i'\in N_j\cap N_k$. So 
$c_{ij}\leq c_{ik}+c_{i'k}+c_{i'j}\leq 2\cdot 3\bC_k+3\bC_j\leq 9\bC_j$.
Finally, if $j\notin\C$ and $j'=\sg(j)$, then $\sg(j')$ is also assigned to $i$, so
$c_{ij}\leq c_{ij'}+c_{jj'}\leq 9\bC_{j'}+30\bC_j\leq 39\bC_j$. 

We next bound $d(\T_\ell)$ and $d(\T'_\ell)$ for any phase $\ell$.
For any client $j\in D_\ell$ and any node-set $S\supseteq N_j,\ r\notin S$, we have
$\sum_{e\in\dt(S)}z_{e,t_{\ell}}\geq\sum_{i\in S,t\leq t_\ell}x_{ij,t}
\geq\sum_{i\in N_j,t\leq\tau_j}x_{ij,t}\geq\frac{1}{2}$.
Therefore, $(2z_{e,t_\ell})$ forms a fractional Steiner tree of cost at most $2t_\ell$ on
the supernodes and $r$, 
and hence, $d(\T''_\ell)\leq 4t_\ell$ since it is well known that the cost of an MST is at
most twice the cost of a fractional solution to the Steiner-tree LP. 
For $j\in\C_\ell$, let $\deg_j$ denote the degree of the cluster $N_j$ in $\T''_\ell$.
Observe that if $e$ is an edge of $\T''_\ell$ joining $N_j$ and $N_k$ (so $j,k\in\C_\ell$), 
then $c_e\geq 24\max\{\bC_j,\bC_k\}$, since 
$30\max\{\bC_j,\bC_k\}\leq c_{jk}\leq 3\bC_j+c_e+3\bC_k$.
So the ($d$-) cost of adding the additional edges to $\T''_\ell$ is at most
$\frac{1}{M}\cdot\sum_{j\in\C_\ell}\deg_j\cdot 3\bC_j\leq d(\T''_\ell)/4$, and hence,
$d(\T_\ell)\leq 5t_\ell$.

Now consider the cost of adding facility edges to $\T_\ell$ in step R2. For each facility
edge $(i,k)$ added, we know that $i\in N_\nbr(k)$, 
$k\in\C_\ell$, and $k$ is assigned to $i$. 
So we have $c_{ik}\leq 9\bC_k$. Observe that
each client $k\in\C_\ell$ is responsible for at most one such facility edge. 
So the $d$-cost of these facility edges is at most
$\frac{1}{M}\cdot\sum_{j\in\C_\ell}9\bC_j\leq
\frac{1}{M}\sum_{j\in\C_\ell}\deg_j\cdot 9\bC_j\leq\frac{3}{4}\cdot d(\T''_\ell)$. 
Thus, $d(\T'_\ell)\leq d(\T_\ell)+\frac{3}{4}\cdot d(\T''_\ell)\leq 8t_\ell$.

Finally, we prove that the latency cost of any client $j$ is at most $64\tau_j=384\bL_j$. 
Let $j$ be assigned to facility $i$. Let $\ell$ be the smallest index
such that $j\in D_\ell$, so $t_\ell\leq 2\tau_j$. We first argue that if $i$ is part of
the tree $\T'_{\ell'}$, then $\ell'\leq\ell$. If $j\in\C$, this follows since we know that
$i\in N_{\nbr(j)}$ and $\ell'=\min\{r:\exists k\in\C_r\text{ with }\nbr(k)=\nbr(j)\}$. If
$j\notin\C$, then we know that $\sg(j)\in\C_\ell$ is also assigned to $i$, and so by the
preceding argument, we again have that $\ell'\leq\ell$. Thus, the latency-cost of $j$ is
bounded by 
$\sum_{r=0}^{\ell'}2d(\T'_r)\leq 16\sum_{r=0}^{\ell'}t_r\leq 32t_{\ell'}\leq 
64\tau_j$. 
\end{proof}

\subsection{\mlufl with a uniform time-metric} \label{mlufl-unif}
We now consider the special case of \mlufl, referred to as {\em uniform \mlufl}, where the
time-metric $d$ is uniform, that is, $d_{ii'}=1$ for all $i,i'\in\F\cup\{r\}$. 
When the connection costs form a metric, we call it the {\em metric uniform \mlufl}.
We consider
the following simpler LP-relaxation of the problem, where 
the time $t$ now ranges from $1$ to $n$. 
\begin{gather*}
\min \quad \sum_{i,t} f_iy_{i,t}+\sum_{j,i,t}(c_{ij}+t)x_{ij,t} 
\qquad \text{subject to} \tag{Unif-P} \label{uniflp} \\
\sum_{i,t} x_{ij,t} \geq 1 \quad \forall j; 
\qquad x_{ij,t} \leq y_{i,t} \quad \forall i,j,t; 
\qquad \sum_i y_{i,t} \leq 1 \quad \forall t; 
\qquad x_{ij,t}, y_{i,t} \geq 0 \quad \forall i,j,t. 
\end{gather*}

Let $(x,y)$ be an optimal solution to \eqref{uniflp}, and $\OPT$ be its
value. Let $\bC_j=\sum_{i,t}c_{ij}x_{ij,t}$, $\bL_j=\sum_{i,t}tx_{ij,t}$.
As stated in the introduction, uniform \mlufl generalizes: 
(i) set cover, when the facility and connection costs are arbitrary; 
(ii) \mssc, when the facility costs are zero (\zfc{} \mlufl);
and (iii) metric \ufl, when the connection costs form a metric. 
We obtain approximation bounds for uniform \mlufl, \zfc{} \mlufl, and metric 
\mlufl (Theorems~\ref{unifthm} and~\ref{metricthm}) that complement
these observations. 

The main result of this section is Theorem~\ref{metricredn}, 
which shows that a $\rho_\mufl$-approximation algorithm for \ufl and a $\gm$-approximation
algorithm for \zfc{} \mlufl (with metric connection costs) can be
combined to yield a $(\rho_\mufl+2\gm)$-approximation algorithm for metric uniform \mlufl.  
Taking $\rho_\mufl=1.5$~\cite{Byrka07} and $\gm=9$ (part (ii) of Theorem~\ref{unifthm}), 
we obtain a 19.5-approximation algorithm. We improve this to $10.773$ by using a more
refined version of Theorem~\ref{metricredn}, which capitalizes on the asymmetric
approximation bounds that one can obtain for different portions of the total cost in \ufl
and \zfc{} \mlufl.  

We note that by considering each $(i,t)$ as a facility, since the connection costs
$c_{ij}+t$ form a metric, one can view metric \mlufl as a variant of metric \ufl, 
and use the ideas in~\cite{BaevRS09} to devise an $O(1)$-approximation for this variant. 
We instead present our alternate algorithm based on the reduction in
Theorem~\ref{metricredn},    
because this reduction is quite robust and versatile. In particular, it allows us to:
(a) handle certain extensions of the problem, e.g., the setting where we have non-uniform
latency costs (see Section~\ref{extn}), for which the above reduction fails since we do
not necessarily obtain metric connection costs, and 
(b) devise algorithms for the uniform latency versions of other facility location
problems, 
where the cost of a facility does not depend on the client-set assigned to it (so one can
assign a client to any open facility freely without affecting the facility costs).   
For instance, consider uniform \mlufl with the restriction that at most $k$ facilities may
be opened: 
our technique yields a $(\rho_{\kmed} + 2\gm)$ approximation for this problem, using a  
$\rho_{\kmed}$-approximation for $k$-median.

\begin{theorem} \label{unifgenthm} \label{zerofacthm} \label{unifthm}
One can obtain:

\noindent
(i) an $O(\ln m)$-approximation algorithm for uniform \mlufl with arbitrary
facility- and connection- costs. 

\noindent
(ii) a solution of cost at most
$\frac{1}{1-\al}\sum_j\bC_j+\frac{4}{\al}\left\lceil\frac{1}{\al}\right\rceil\sum_j\bL_j$
for \zfc \mlufl, for any parameter $\al\in(0,1)$. Thus, setting $\al=\frac{8}{9}$, yields
solution of cost at most $9\cdot\OPT$.  
\end{theorem}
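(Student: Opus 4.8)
The plan is to round the optimal LP solution $(x,y)$ to \eqref{uniflp} separately for the two parts. For part~(i) I would use a filtering/greedy argument: since $\sum_{i,t}x_{ij,t}\ge 1$ for every client $j$, I can think of each pair $(i,t)$ as a "set" with cost $f_i + t$ (paying the time-cost once per unit of service), and the LP is essentially a fractional set-cover-type relaxation for covering all clients. A standard greedy/randomized-rounding argument then loses an $O(\ln m)$ factor. Concretely, I would pick, in each time slot $t=1,2,\ldots$, a facility to "activate" so that after $O(\ln m)$ slots every client is connected, charging the facility cost of slot $t$ against the LP mass $\sum_i y_{i,t}\le 1$ and the latency cost $t$ against the fractional latency $\sum_{i,j,t}t\,x_{ij,t}$; the uniform cap $\sum_i y_{i,t}\le 1$ is what makes a single facility per slot the right granularity. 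The cleanest route is probably to reduce directly to weighted set cover by defining element $=$ client and set $=(i,t)$ with weight $f_i+t$, observe the LP is feasible for the set-cover LP, and invoke the known $O(\ln m)$ LP-rounding bound, noting the objective is exactly $\sum f_i y_{i,t}+\sum (c_{ij}+t)x_{ij,t}$ up to the already-incurred connection cost $\sum_j\bC_j$ handled via the $N_j$ trick below.

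For part~(ii) (\zfc{} \mlufl, where $f_i=0$, metric $c$), the idea is the $\al$-point rounding that recurs throughout the paper. For each client $j$ set $N_j=\{i: c_{ij}\le \bC_j/(1-\al)\}$; by Markov, $\sum_{i\in N_j}\sum_t x_{ij,t}\ge \al$. Define $\tau_j$ to be the earliest time $t$ with $\sum_{i\in N_j,\,t'\le t}x_{ij,t'}\ge \al/2$ (or a similar threshold), so that $\tau_j = O(\bL_j/\al)$ by Markov on the latency. Now process clients in rounds: in round $k$ (for $k=1,2,\ldots$) consider the clients with $\tau_j$ in a geometrically-growing window, and greedily open a small set of facilities — one per time slot, using $\sum_i y_{i,t}\le 1$ — that covers a constant fraction of the still-unconnected clients' $N_j$-neighborhoods; because the fractional solution assigns $\ge\al/2$ mass within $N_j$ by time $\tau_j$, opening $\lceil 1/\al\rceil$-ish facilities per round suffices to hit each such $N_j$ with constant probability, and the geometric decay of the unconnected-probability makes the expected latency $O\bigl(\frac1\al\lceil\frac1\al\rceil\bigr)\tau_j$. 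Summing, connection cost is $\le \frac{1}{1-\al}\sum_j\bC_j$ and latency cost is $\le \frac{4}{\al}\lceil\frac1\al\rceil\sum_j\bL_j$; setting $\al=\tfrac89$ gives $9\cdot\OPT$ after checking $\frac{1}{1-8/9}=9$ and $\frac{4}{8/9}\lceil\frac{9}{8}\rceil = \frac92\cdot 2 = 9$.

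The main obstacle I expect is part~(ii): making the round-by-round opening clean enough that the per-client latency bound comes out with the exact constants $\frac{1}{1-\al}$ and $\frac4\al\lceil\frac1\al\rceil$. The subtlety is coordinating the two "$\al$"s — the filtering radius for $N_j$ and the time-threshold $\tau_j$ — and arguing that a single sweep over time slots (opening one facility per slot, respecting $\sum_i y_{i,t}\le 1$) within a window of length $\Theta(\tau_j)$ catches $N_j$; this is where the uniform-metric structure is essential, since every open facility is at time-distance $1$ from the tour and a window of $c$ slots costs exactly $c$. Once that combinatorial covering claim is in place (it is essentially a fractional-set-cover rounding inside each window, iterated $O(\log m)$ times to drive the failure probability to $1/\poly(m)$ as in Lemma~\ref{lbound}), the cost accounting and the choice $\al=\tfrac89$ are routine.
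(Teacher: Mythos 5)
The proposal misses the single most important technical tool behind both parts of this theorem: the ``spreading-out'' lemma (Lemma~\ref{capviol}), which converts a solution that opens up to $k$ facilities per time slot into a feasible solution (one per slot) while blowing up only the latency cost by a factor of $k$ and leaving the facility and connection costs untouched. Both of your proposed arguments implicitly need this step and neither supplies it.

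For part~(i), the paper filters on the \emph{pair-set} $N_j=\{(i,t):c_{ij}+t\le 2(\bC_j+\bL_j)\}$ so that $\sum_{(i,t)\in N_j}x_{ij,t}\ge\frac12$, does independent rounding $Y_{i,t}\assign 1$ with probability $\min\{4\ln m\cdot y_{i,t},1\}$ with a deterministic backup $i_j$, and then observes that (a) the per-client connection plus latency cost is deterministically at most $2(\bC_j+\bL_j)$ and (b) with high probability $K=\max_t\sum_iY_{i,t}=O(\ln m)$; Lemma~\ref{capviol} then yields the $O(\ln m)$ guarantee. Your direct reduction to weighted set cover with sets $(i,t)$ of cost $f_i+t$ does not correctly account for the \mlufl latency: a client pays the position $\kp(\phi(j))$ of its serving facility, not a one-time ``$t$ per picked set'' cost, so several clients connecting to the same opened $(i,t)$ each pay, and more to the point, when the rounding picks several $(i,t)$ sharing a slot $t$ the facilities must be moved to distinct positions, which can only be controlled via a compaction argument. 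The paper avoids this accounting problem precisely by charging connection plus latency per client against $2(\bC_j+\bL_j)$ rather than against $\sum_{i,t}ty_{i,t}$. Your ``one facility per slot greedily'' alternative sidesteps the compaction step but has no complete analysis.

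For part~(ii), the paper's route is again through Lemma~\ref{capviol}: after filtering with $N_j=\{i:c_{ij}\le\bC_j/(1-\al)\}$ and scaling $\hx,\hy$ by $1/\al$, the capacity becomes $\sum_i\hy_{i,t}\le 1/\al$, so Lemma~\ref{capviol} compacts to a feasible LP solution at the cost of a $\lceil 1/\al\rceil$ multiplicative loss in latency. Crucially, the resulting feasible LP solution is then recognized as a fractional solution to the \emph{min-sum set cover} LP, and the Feige--Lov\'asz--Tetali rounding gives a factor-$4$ loss --- this is exactly where the constant $4$ in $\frac{4}{\al}\lceil 1/\al\rceil$ comes from. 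Your proposal instead attempts a from-scratch geometric-phase randomized rounding \`a la Section~\ref{mlufl-gen}, and you explicitly say you would iterate $O(\log m)$ times to drive the failure probability to $1/\poly(m)$. That route cannot yield the claimed clean constant (you acknowledge the constants are the ``main obstacle''), and it genuinely re-derives \mssc rounding rather than invoking it. So while the filtering radius, the $\al$-point intuition, and the arithmetic $\frac{1}{1-8/9}=\frac{4}{8/9}\lceil 9/8\rceil=9$ are right, the core mechanism --- Lemma~\ref{capviol} followed by a reduction to \mssc and Feige et al.'s $4$-approximation --- is absent, and without it the argument does not close.
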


We defer the proof of Theorem~\ref{unifgenthm} to the end of the section,  
and focus first on detailing the aforementioned reduction.

\begin{theorem} \label{metricredn}
Given a $\rho_\mufl$-approximation algorithm $\A_1$ for \ufl, and a $\gm$-approximation
algorithm $\A_2$ for uniform \zfc{} \mlufl, one can obtain a
$(\rho_\mufl+2\gm)$-approximation algorithm for metric uniform \mlufl. 
\end{theorem}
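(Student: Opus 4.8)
The goal is to combine an algorithm $\A_1$ for \ufl and an algorithm $\A_2$ for \zfc{} uniform \mlufl into an algorithm for metric uniform \mlufl, losing only an additive $\rho_\mufl + 2\gm$ factor. The natural idea is to decouple the two sources of cost: the facility-opening cost is handled by the \ufl instance (which knows about $f_i$'s but ignores latencies), while the latency cost is handled by the \zfc{} instance (which ignores facility costs but orders facilities in time). The subtle point is that a client's connection cost appears in \emph{both} subproblems, so we must be careful not to double-charge, and we must argue the two solutions can be ``merged'' into one feasible \mlufl solution.

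\textbf{Step 1: Set up the two instances.} Take the optimal LP solution $(x,y)$ to \eqref{uniflp} with cost $\OPT$, and the associated $\bC_j = \sum_{i,t} c_{ij}x_{ij,t}$ and $\bL_j = \sum_{i,t} t\,x_{ij,t}$, so that $\sum_i f_i(\sum_t y_{i,t}) + \sum_j \bC_j + \sum_j \bL_j \le \OPT$ (more precisely $\sum_{i,t} f_i y_{i,t} + \sum_j(\bC_j+\bL_j) = \OPT$). Build a \ufl instance on $\F\cup\D$ with the given $f_i$ and $c_{ij}$; the fractional solution obtained by ignoring $t$ (i.e. $\bar y_i = \sum_t y_{i,t}$, $\bar x_{ij} = \sum_t x_{ij,t}$) is feasible with cost $\sum_i f_i\bar y_i + \sum_j\bC_j$, so $\A_1$ returns an integral \ufl solution opening $F_1\sse\F$ with $\phi_1:\D\to F_1$ of cost $\sum_{i\in F_1} f_i + \sum_j c_{\phi_1(j)j} \le \rho_\mufl(\sum_i f_i\bar y_i + \sum_j \bC_j)$. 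Separately, run $\A_2$ on the \zfc{} uniform \mlufl instance with the same $c_{ij}$ but zero facility costs; $(x,y)$ restricted appropriately is feasible for its LP with cost $\sum_j(\bC_j + \bL_j)$, so $\A_2$ returns an ordering of some opened set $F_2\sse\F$ with assignment $\phi_2$, of cost $\sum_j (c_{\phi_2(j)j} + t^{(2)}_j) \le \gm(\sum_j\bC_j + \sum_j\bL_j)$, where $t^{(2)}_j$ is $j$'s latency in that solution.

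\textbf{Step 2: Merge.} The combined solution opens $F = F_1 \cup F_2$. For the ordering, we need to traverse $F$ in time; since the time-metric is uniform, placing $F_1$'s facilities first (in any order) and then $F_2$'s facilities (in $\A_2$'s order) means each facility of $F_2$ is delayed by exactly $|F_1|$ extra time units. This is where the factor-$2$ on $\gm$ enters, and the key observation I expect to need: $|F_1| \le \sum_j c_{\phi_1(j)j}$ is \emph{not} generally true, so instead one orders $F_2$ \emph{first} and $F_1$ afterward; then a client $j$ served (in the merged solution) by $\phi_2(j)$ incurs latency $t^{(2)}_j$ unchanged, while a client served by $\phi_1(j)$ incurs latency at most $|F_2| + |F_1|$. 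Hmm — so the real trick, and the main obstacle, is bounding the latency of clients we want to assign via $\phi_1$. The resolution: assign \emph{every} client $j$ to whichever of $\phi_1(j)$ or $\phi_2(j)$ is better, but route so that $F_2$'s order is preserved and the facilities of $F_1\sm F_2$ are \emph{appended}; a client assigned to $\phi_2(j)$ then has latency $t^{(2)}_j \le \gm(\bC_j+\bL_j)$-ish (summed), and we simply never assign a client to $\phi_1(j)$ for latency purposes — we only use $F_1$ to \emph{pay the facility cost}. That is: open $F_1\cup F_2$, pay $\sum_{i\in F_1} f_i$ via $\A_1$ and $\sum_{i\in F_2} f_i = 0$ via $\A_2$ (facility cost $\le \rho_\mufl(\cdots)$); connect every client $j$ to $\phi_2(j)$, incurring connection cost $\sum_j c_{\phi_2(j)j}$ and latency cost $\sum_j t^{(2)}_j$, both $\le \gm(\sum_j \bC_j + \sum_j \bL_j)$ together. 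But then $\A_1$'s connection cost is wasted and we've only used $\rho_\mufl$ for facilities — and the factor $2\gm$ rather than $\gm$ must come from the latency blow-up when $F_1$'s facilities sit at the front of the tour before $F_2$'s. So concretely: tour $= (\text{an ordering of } F_1) \cdot (\A_2\text{'s ordering of } F_2)$, every client connected to $\phi_2(j)$, latency of $j$ becomes $|F_1| + t^{(2)}_j$. Now $|F_1| \le \sum_{i\in F_1} 1$, and since in the \zfc{} problem at most one facility is ``active'' per time step, $\OPT \ge$ number of clients $\cdot 1$ roughly bounds... — the clean bound is $|F_1|\le m$ is useless; instead note each client's \emph{own} latency in $\A_2$'s solution already satisfies $t^{(2)}_j \ge$ (rank of $\phi_2(j)$), and one shows $|F_1| \le \sum_{\text{phases}} 1 \le$ total $\A_2$-latency$/$(something). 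I will need to think about exactly which inequality makes $\sum_j |F_1| = m|F_1|$ absorbable into $2\gm\sum_j(\bC_j+\bL_j)$; the honest answer is that one should only open facilities in $F_1$ that are actually \emph{used} by $\A_1$, so $|F_1|\le |\D|$, and crucially one interleaves or truncates — I expect the paper avoids opening all of $F_1$ up front and instead argues more cleverly. This interleaving/charging argument is the heart of the proof and the step I'd expect to spend the most effort on.

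\textbf{Step 3: Tally.} Assuming the merge is done so that each client's latency is at most its $\A_2$-latency plus a term charged against $\A_2$'s cost (hence the doubling), add up: facility cost $\le \rho_\mufl\bigl(\sum_i f_i\bar y_i + \sum_j\bC_j\bigr)$, and connection-plus-latency cost $\le 2\gm\bigl(\sum_j\bC_j + \sum_j\bL_j\bigr)$. Since $\sum_i f_i\bar y_i + \sum_j\bC_j \le \OPT$ and $\sum_j\bC_j + \sum_j\bL_j \le \OPT$, the total is at most $(\rho_\mufl + 2\gm)\OPT$, giving the claimed approximation. The refinement to $10.773$ mentioned after the theorem statement presumably comes from not charging $\sum_j\bC_j$ to \emph{both} $\rho_\mufl$ and $2\gm$ — i.e. splitting $\OPT$ into its $F^*$, $C^*$, $L^*$ components and using the asymmetric guarantees of $\A_1$ (which can be made to pay little on facilities if it pays more on connection, or vice versa) — but that is beyond the stated theorem.
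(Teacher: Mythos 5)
Your decomposition into the two subinstances $\I_\mufl$ and $\I_\mzfc$ is exactly right, and you correctly identify the merge step as the crux. But your Step 2 never resolves it, and along the way contains a real error that your candidate mergers cannot recover from: you write ``pay $\sum_{i\in F_2} f_i = 0$ via $\A_2$'' and propose opening $F=F_1\cup F_2$. This is wrong — the facility costs are zero only in the \emph{subinstance} $\I_\mzfc$; in the original instance the facilities $F_2$ chosen by $\A_2$ (which ignored $f_i$ entirely) can be arbitrarily expensive, so opening $F_1\cup F_2$ gives no bound at all on facility cost. That is the wall all of your proposed tours run into, and you end by conceding you do not see the charging that resolves it.

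The paper's resolution is a relocation argument that you never reach, and it is what makes the hypothesis of \emph{metric} connection costs essential (you do not use metricity anywhere in your merge). The trick: do not open $F_2$ at all. For each $i\in F_2$ let $\mu(i)\in F_1$ be the nearest $F_1$-facility to $i$. Open only $F=\{\mu(i): i\in F_2\}\subseteq F_1$, and assign facility $\mu(i)$ the position $\min_{i'\in F_2:\,\mu(i')=\mu(i)}\pi(i')$ inherited from $\A_2$'s ordering, then compress to a proper ordering $\kp$ (positions only decrease). Assign client $j$ to $\phi(j)=\mu(\sg_2(j))$, so its latency is at most $\pi(\sg_2(j))$, i.e.\ $\A_2$'s latency, unchanged. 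The factor $2$ on $\gm$ comes entirely from the connection-cost detour, via two applications of the triangle inequality and the minimality of $\mu$:
\[
c_{\phi(j)j}\ \le\ c_{\sg_2(j)\mu(\sg_2(j))}+c_{\sg_2(j)j}\ \le\ c_{\sg_2(j)\sg_1(j)}+c_{\sg_2(j)j}\ \le\ c_{\sg_1(j)j}+2c_{\sg_2(j)j}.
\]
Summing, the total cost is at most $\bigl[\sum_{i\in F_1}f_i+\sum_j c_{\sg_1(j)j}\bigr]+\bigl[2\sum_j c_{\sg_2(j)j}+\sum_j\pi(\sg_2(j))\bigr]\le\rho_\mufl\,\iopt+2\gm\,\iopt$. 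So only facility cost is paid against $\rho_\mufl$, latency is paid once against $\gm$, and connection is split $1\!:\!2$ between the two; there is no ``$|F_1|$ shift'' or interleaving of tours to bound, which is the direction your proposal was (unsuccessfully) pushing toward.
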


\begin{proof}
Let $\I$ denote the metric uniform \mlufl instance, and $\iopt$ denote the cost of an
optimal integer solution.  
Let $\I_\mufl$ be the \ufl instance obtained form $\I$ by ignoring the latency costs,
and $\I_\mzfc$ be the \zfc{} \mlufl instance obtained from $\I$ by setting all 
facility costs to zero. Let $\iopt_\mufl$ and $\iopt_\mzfc$ denote respectively the cost of
the optimal (integer) solutions to these two instances. Clearly, we have 
$\iopt_\mufl, \iopt_\mzfc\leq\iopt$. 
We use $\A_1$ to obtain a near-optimal solution to $\I_\mufl$: let $F_1$ be the set of
facilities opened and let $\sg_1(j)$ denote the facility in $F_1$ to which client $j$ is
assigned. So we have $\sum_{i\in F_1}f_i+
\sum_jc_{\sg_1(j)j}\leq\rho_\mufl\cdot\iopt_{\mufl}$. 
We use $\A_2$ to obtain a near-optimal solution to $\I_\mzfc$: let 
$F_2$ be the set of open facilities, $\sg_2(j)$ be the facility to which client $j$ is
assigned, and $\pi(i)$ be the position of facility $i$. So we have
$\sum_j\bigl(c_{\sg_2(j)j}+\pi(\sg_2(j))\bigr)\leq\gm\cdot\iopt_{\mzfc}$.

We now combine these solutions as follows.
For each facility $i\in F_2$, let $\mu(i)\in F_1$ denote the facility in $F_1$ that is
nearest to $i$. We open the set $F=\{\mu(i): i\in F_2\}$ of facilities. The position of
facility $i\in F$ is set to $\min_{i'\in F_2:\pi(i')=i}\pi(i')$. Each facility in 
$F$ is assigned a distinct position this way, but some positions may be vacant.
Clearly we can always convert the above into a proper ordering of $F$ where each 
facility $i\in F$ occurs at position $\kp(i)\leq\min_{i'\in F_2:\pi(i')=i}\pi(i')$.
Finally, we assign each client $j$ to the facility $\phi(j)=\mu(\sg_2(j))\in F$.
Note that $\kp(\phi(j))\leq\pi(\sg_2(j))$ (by definition).
For a client $j$, we now have
$c_{\phi(j)j}\leq c_{\sg_2(j)\mu(\sg_2(j))}+c_{\sg_2(j)j}\leq c_{\sg_2(j)\sg_1(j)}+c_{\sg_2(j)j}
\leq c_{\sg_1(j)j}+2c_{\sg_2(j)j}$.
Thus, the total cost of the resulting solution is at most
$\sum_{i\in F_1}f_i+\sum_j\bigl(c_{\sg_1(j)j}+2c_{\sg_2(j)j}+\pi(\sg_2(j))\bigr)
\leq (\rho_{\mufl}+2\gm)\cdot\iopt$.
\end{proof}

We call an algorithm a $(\rho_f,\rho_c)$-approximation algorithm for \ufl if 
given an LP-solution to \ufl with facility- and connection- costs $\fcost$ and $\ccost$
respectively, it returns a solution of cost at most $\rho_f\fcost+\rho_c\ccost$.
Similarly, we say that an algorithm is a $(\gm_c,\gm_l)$-approximation
algorithm for uniform \zfc{} \mlufl if given a solution to \eqref{mlufllp}
with connection- and latency- costs $\ccost$ and $\lcost$ respectively, it returns a
solution of cost at most $\gm_c\ccost+\gm_l\lcost$. 
The proof of Theorem~\ref{metricredn} easily yields the following more general result.

\begin{corollary} \label{redncorr}
One can combine a $(\rho_f,\rho_c)$-approximation algorithm for \ufl, and a 
$(\gm_c,\gm_l)$-approximation algorithm for uniform ZFC \mlufl, to 
obtain a solution of cost at most $\max\{\rho_f,\rho_c+2\gm_c,\gm_l\}\cdot\OPT$.
\end{corollary}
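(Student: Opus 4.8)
The plan is to mimic the proof of Theorem~\ref{metricredn} almost verbatim, the only genuinely new ingredient being that we must account for the facility-, connection-, and latency-cost of the combined solution \emph{separately} (rather than as a single lump), so that the latency cost is inflated only by $\gm_l$ and not by $2\gm_l$.

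First, instead of starting from an integer optimum, I would start from an optimal LP solution $(x,y)$ to \eqref{uniflp} for the given metric uniform \mlufl instance $\I$, and write $\OPT=F^*+C^*+L^*$ where $F^*=\sum_{i,t}f_iy_{i,t}$, $C^*=\sum_j\bC_j$, and $L^*=\sum_j\bL_j$. Projecting out the time index, i.e.\ setting $\by_i=\sum_t y_{i,t}$ and $\bx_{ij}=\sum_t x_{ij,t}$, yields (by $x_{ij,t}\le y_{i,t}$ and $\sum_{i,t}x_{ij,t}\ge 1$) a feasible solution to the \ufl LP for the instance $\I_\mufl$ with facility cost exactly $F^*$ and connection cost exactly $C^*$; feeding this to the $(\rho_f,\rho_c)$-approximation $\A_1$ produces a set $F_1\sse\F$ and an assignment $\sg_1$ with $\sum_{i\in F_1}f_i\le\rho_fF^*$ and $\sum_j c_{\sg_1(j)j}\le\rho_cC^*$. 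Likewise, $(x,y)$ with the facility costs zeroed out is a feasible solution to the \zfc{} uniform \mlufl relaxation with connection cost $C^*$ and latency cost $L^*$, so the $(\gm_c,\gm_l)$-approximation $\A_2$ produces a set $F_2$, an assignment $\sg_2$, and positions $\pi(\cdot)$ with $\sum_j c_{\sg_2(j)j}\le\gm_cC^*$ and $\sum_j\pi(\sg_2(j))\le\gm_lL^*$.

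Next, I would combine the two solutions exactly as in the proof of Theorem~\ref{metricredn}: for each $i\in F_2$ let $\mu(i)\in F_1$ be the facility of $F_1$ closest to $i$, open $F=\{\mu(i):i\in F_2\}$, give each $i\in F$ a position $\kp(i)\le\min_{i'\in F_2:\mu(i')=i}\pi(i')$, and assign each client $j$ to $\phi(j)=\mu(\sg_2(j))$. Now bound the three cost components one at a time. The facility cost is $\sum_{i\in F}f_i\le\sum_{i\in F_1}f_i\le\rho_fF^*$. For the connection cost, the same triangle-inequality chain as before (using that $\mu(\sg_2(j))$ is the nearest $F_1$-facility to $\sg_2(j)$) gives $c_{\phi(j)j}\le c_{\sg_1(j)j}+2c_{\sg_2(j)j}$, so the total is at most $\rho_cC^*+2\gm_cC^*$. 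For the latency cost, $\kp(\phi(j))\le\pi(\sg_2(j))$, so the total is at most $\gm_lL^*$. Summing gives a solution of cost at most $\rho_fF^*+(\rho_c+2\gm_c)C^*+\gm_lL^*\le\max\{\rho_f,\ \rho_c+2\gm_c,\ \gm_l\}\cdot(F^*+C^*+L^*)=\max\{\rho_f,\ \rho_c+2\gm_c,\ \gm_l\}\cdot\OPT$.

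There is no real obstacle once Theorem~\ref{metricredn} is in hand; the only point that needs care is the bookkeeping in the last paragraph. In particular, one must invoke the guarantee of $\A_2$ in its per-component form (output connection cost $\le\gm_cC^*$, output latency cost $\le\gm_lL^*$) rather than merely as a bound on the sum $\gm_cC^*+\gm_lL^*$: this is precisely what keeps the final latency coefficient at $\gm_l$, since the map $i\mapsto\mu(i)$ leaves client positions untouched but doubles the connection detour through $F_1$, so a lump-sum guarantee for $\A_2$ would leak an extra factor of $2$ onto the latency term. The secondary item — checking that the projected solutions are feasible for the relaxations fed to $\A_1$ and $\A_2$, and that their facility/connection/latency values are exactly $F^*,C^*,L^*$ — is routine.
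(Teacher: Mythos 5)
Your proposal is correct and matches the paper's proof essentially verbatim: both start from the LP optimum $(x,y)$, observe that it projects to feasible LP solutions for $\I_\mufl$ and $\I_\mzfc$ with the respective cost components $F^*,C^*$ and $C^*,L^*$, and then reuse the combination construction of Theorem~\ref{metricredn} while tracking the facility, connection, and latency terms separately to get $\rho_fF^*+(\rho_c+2\gm_c)C^*+\gm_lL^*\le\max\{\rho_f,\rho_c+2\gm_c,\gm_l\}\cdot\OPT$. (As a side note, you also implicitly corrected a small typo in the statement of Theorem~\ref{metricredn}'s construction, where the position assignment should minimize over $i'\in F_2$ with $\mu(i')=i$ rather than $\pi(i')=i$.)
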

\begin{proof}
The proof mimics the proof of Theorem~\ref{metricredn}. The only new observation is that
$(x,y)$ yields an LP-solution to 
(i) $\I_{\mufl}$ with facility cost $\sum_{i,t}f_iy_{i,t}$ and connection cost
$\sum_{j,i,t}c_{ij}x_{ij,t}$; and
(ii) $\I_{\mzfc}$ with connection cost $\sum_{j,i,t}c_{ij}x_{ij,t}$ and latency cost
$\sum_{j,i,t}tx_{ij,t}$.
Thus, applying the construction in the proof of Theorem~\ref{metricredn} yields a solution
of total cost at most 
$$
\rho_f\sum_{i,t}f_iy_{i,t}+(\rho_c+2\gm_c)\sum_{j,i,t}c_{ij}x_{ij,t}+\gm_l\sum_{j,i,t}tx_{ij,t}
\leq\max\{\rho_f,\rho_c+2\gm_c,\gm_l\}\cdot\OPT. \\[-4ex]
$$ 
\end{proof}

Combining the 
$\bigl(\frac{\ln(1/\beta)}{1-\beta},\frac{3}{1-\beta}\bigr)$-approximation algorithm for
\ufl~\cite{ShmoysTA97} with the  
$\bigl(\frac{1}{1-\al},\frac{4}{\al}\left\lceil\frac{1}{\al}\right\rceil\bigr)$-approximation
algorithm for \zfc{} \mlufl (part (ii) of Theorem~\ref{unifthm}) gives the following result.

\begin{theorem} \label{metricthm}
For any $\al,\beta\in(0,1)$, one can obtain a solution of cost 
$\max\bigl\{\frac{\ln(1/\beta)}{1-\beta},\frac{3}{1-\beta}+\frac{2}{1-\al},
\frac{4}{\al}\bigl\lceil\frac{1}{\al}\bigr\rceil\bigr\}\cdot\OPT$. 
Thus, taking $\al=0.7426,\ \beta=0.000021$, we obtain a $10.773$-approximation algorithm. 
\end{theorem}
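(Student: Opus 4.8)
The plan is to derive the bound as a direct instantiation of Corollary~\ref{redncorr}, feeding it two off-the-shelf bifactor guarantees, and then to choose the free parameters $\al,\beta\in(0,1)$ so as to balance the three terms that appear inside the resulting maximum.

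First I would recall the filtering/LP-rounding algorithm of Shmoys, Tardos and Aardal~\cite{ShmoysTA97} for \ufl: given an optimal fractional solution with facility cost $\fcost$ and connection cost $\ccost$ and a parameter $\beta\in(0,1)$, it returns an integral solution of cost at most $\frac{\ln(1/\beta)}{1-\beta}\fcost+\frac{3}{1-\beta}\ccost$; in the terminology introduced just before Corollary~\ref{redncorr}, this is a $\bigl(\frac{\ln(1/\beta)}{1-\beta},\frac{3}{1-\beta}\bigr)$-approximation for \ufl (the $\ln(1/\beta)$ loss on the facility cost comes from the cluster-formation step that groups clients by fractional connection radius, while the factor $3$ on the connection cost is the standard bound for rerouting a client through a cheaper cluster centre). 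Second, part (ii) of Theorem~\ref{unifthm} supplies, for any $\al\in(0,1)$, a $\bigl(\frac{1}{1-\al},\frac{4}{\al}\lceil\frac{1}{\al}\rceil\bigr)$-approximation for uniform \zfc{} \mlufl.

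Next I would apply Corollary~\ref{redncorr} with $\rho_f=\frac{\ln(1/\beta)}{1-\beta}$, $\rho_c=\frac{3}{1-\beta}$, $\gm_c=\frac{1}{1-\al}$ and $\gm_l=\frac{4}{\al}\lceil\frac{1}{\al}\rceil$. This immediately yields a solution of cost at most $\max\bigl\{\rho_f,\ \rho_c+2\gm_c,\ \gm_l\bigr\}\cdot\OPT=\max\bigl\{\frac{\ln(1/\beta)}{1-\beta},\ \frac{3}{1-\beta}+\frac{2}{1-\al},\ \frac{4}{\al}\lceil\frac{1}{\al}\rceil\bigr\}\cdot\OPT$, which is exactly the claimed bound, valid for all $\al,\beta\in(0,1)$.

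Finally I would optimize the two parameters numerically. Since $\al$ will be at least $\hf$, we have $\frac{1}{\al}\in(1,2]$, hence $\lceil\frac{1}{\al}\rceil=2$ and the third term equals $\frac{8}{\al}$; moreover $\beta$ will turn out tiny, so $\frac{1}{1-\beta}\approx 1$ and the binding equations become essentially $\frac{8}{\al}=3+\frac{2}{1-\al}=\ln(1/\beta)$. Solving $\frac{8}{\al}=3+\frac{2}{1-\al}$ gives $\al\approx 0.7426$ (so $\frac{8}{\al}\approx 10.773$), and then $\beta\approx e^{-8/\al}\approx 0.000021$ brings the first term in line, producing the $10.773$-approximation. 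The only genuine effort in the whole argument is this elementary balancing; I expect the one point to watch is confirming that the rounded decimal values of $\al,\beta$ keep all three terms of the maximum at (or just below) $10.773$, and in particular that $\frac{1}{\al}<2$ so the ceiling really equals $2$.
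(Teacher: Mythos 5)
Your proposal matches the paper's argument exactly: the paper obtains Theorem~\ref{metricthm} precisely by plugging the $\bigl(\frac{\ln(1/\beta)}{1-\beta},\frac{3}{1-\beta}\bigr)$ bifactor guarantee of Shmoys--Tardos--Aardal for \ufl and the $\bigl(\frac{1}{1-\al},\frac{4}{\al}\lceil\frac{1}{\al}\rceil\bigr)$ guarantee from Theorem~\ref{unifthm}(ii) into Corollary~\ref{redncorr}, then balancing $\al,\beta$ numerically. Your numerical check (that $\lceil 1/\al\rceil=2$, that $\al\approx 0.7426$ solves $8/\al = 3+2/(1-\al)$, and that $\beta\approx e^{-8/\al}$) is also the right verification.
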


\subsubsection{Proof of Theorem~\ref{unifgenthm}}

The following lemma will often come in handy.

\begin{lemma} \label{capviol}
Let $(\hx,\hy)$ be a solution satisfying $\sum_i\hy_{i,t}\leq k$ for every time $t$, where 
$k\geq 0$ is an integer, and all the other constraints of \eqref{uniflp}.
Then, one can obtain a feasible solution $(x',y')$ to \eqref{uniflp} such that 
(i) $\sum_{i,t}f_iy'_{i,t}=\sum_{i,t}f_i\hy_{i,t}$; 
(ii) $\sum_{j,i,t}c_{ij}x'_{ij,t}=\sum_{j,i,t}c_{ij}\hx_{ij,t}$; and
(iii) $\sum_{j,i,t}tx'_{ij,t}\leq k\cdot\sum_{j,i,t}t\hx_{ij,t}$.
\end{lemma}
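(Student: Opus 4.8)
The plan is to transform $(\hx,\hy)$ by \emph{stretching the time axis by a factor of $k$}: a time slot $t$ of the given solution carries up to $k$ units of facility mass (since $\sum_i\hy_{i,t}\le k$), and we split it into $k$ consecutive new slots, each carrying at most one unit, so that the new solution satisfies $\sum_i y'_{i,t}\le 1$. Because we only relabel time and redistribute mass, the facility- and connection-costs are preserved exactly; and since every unit of connection mass that sat at time $t$ moves to a new slot of index at most $kt$, the latency cost grows by at most a factor $k$, which is precisely~(iii).

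Concretely: first (optionally) preprocess so that $\sum_{i,t}\hx_{ij,t}=1$ for each $j$. For each original time $t$, let $M_t=\sum_i\hy_{i,t}\le k$, place the masses $\{\hy_{i,t}\}_i$ consecutively along the interval $[0,M_t]$ (in any order), and cut at the integers $1,2,\dots$; the portion of facility $i$'s mass falling in the $s$-th unit interval defines $y'_{i,(t,s)}$ for $s=1,\dots,k$. By construction $\sum_i y'_{i,(t,s)}\le 1$ and $\sum_s y'_{i,(t,s)}=\hy_{i,t}$. To preserve the coupling constraint, split each client's connection mass in the same proportions: $x'_{ij,(t,s)}=\hx_{ij,t}\cdot y'_{i,(t,s)}/\hy_{i,t}$ when $\hy_{i,t}>0$, and $0$ otherwise (consistent, since then $\hx_{ij,t}\le\hy_{i,t}=0$). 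Order the new slots lexicographically, giving $(t,s)$ the integer time value $k(t-1)+s\le kt$, discard empty slots, and relabel in increasing order. Now check: $\sum_{i,(t,s)}x'_{ij,(t,s)}=\sum_{i,t}\hx_{ij,t}\ge 1$; the coupling $x'_{ij,(t,s)}=\hx_{ij,t}\,y'_{i,(t,s)}/\hy_{i,t}\le y'_{i,(t,s)}$ holds because $\hx_{ij,t}\le\hy_{i,t}$; $\sum_i y'_{i,(t,s)}\le1$ by the cutting; nonnegativity is immediate; so $(x',y')$ is feasible. Then $\sum_{i,(t,s)}f_iy'_{i,(t,s)}=\sum_{i,t}f_i\hy_{i,t}$ and $\sum_{j,i,(t,s)}c_{ij}x'_{ij,(t,s)}=\sum_{j,i,t}c_{ij}\hx_{ij,t}$ give (i) and (ii), and since every unit of $x'$-mass coming from $\hx_{ij,t}$ sits at time value $\le kt$, we get $\sum_{j,i,(t,s)}(\text{time of }(t,s))\,x'_{ij,(t,s)}\le k\sum_{j,i,t}t\,\hx_{ij,t}$, which is~(iii).

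I do not expect a real obstacle: this is a routine ``fractional un-overloading'' argument, and its whole content is the factor-$k$ latency loss in (iii). The two points needing a little care are (a) performing the $x$- and $y$-splits \emph{in lockstep}, which is the one place the hypothesis $\hx_{ij,t}\le\hy_{i,t}$ is invoked to keep $x'\le y'$; and (b) the bookkeeping that the relabeled slots form a legal time set for \eqref{uniflp} — a priori the stretched solution uses up to $kn$ slots, which is harmless, since enlarging the time horizon does not change the optimum of \eqref{uniflp} (later slots only add latency, so an optimal solution never uses them), so one may simply regard \eqref{uniflp} as having as many slots as needed.
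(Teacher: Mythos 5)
Your argument is correct and proves the same bound, but it does so by a different construction from the paper's. The paper packs \emph{globally}: it lists all $(i,t)$ pairs with $\hy_{i,t}>0$ in increasing order of $t$ and greedily groups them into consecutive blocks of $\hy$-weight exactly $1$ (splitting a pair at a block boundary when necessary), assigning block $G_\ell$ to time $\ell$; the factor-$k$ loss then follows from the observation that a pair with original time $t$ lands in a block $G_\ell$ with $\ell\le kt$, because the total $\hy$-weight of all pairs with time at most $t$ is at most $kt$. Your construction stretches \emph{locally}: each original slot $t$ is cut into $k$ unit-weight subslots $(t,1),\ldots,(t,k)$ carrying time $k(t-1)+s\le kt$. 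Both give the identical factor-$k$ bound. The global packing is numerically a bit tighter and uses only $\lceil\sum_{i,t}\hy_{i,t}\rceil$ slots, whereas yours a priori uses up to $kn$ slots before you discard empty ones; as you correctly observe this is harmless since the time horizon of \eqref{uniflp} can be enlarged without changing anything, and in fact $\lceil\sum_{i,t}\hy_{i,t}\rceil$ can also exceed $n$ in the paper's applications, so the technicality is shared. Your explicit proportional split of the $x$-mass, $x'_{ij,(t,s)}=\hx_{ij,t}\,y'_{i,(t,s)}/\hy_{i,t}$, is one clean concrete instantiation of the paper's looser instruction to distribute the $\hx_j$-weight across the copies of a split pair arbitrarily subject to keeping $x'\le y'$ per copy.
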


\begin{proof}
For each time $t$, define $S_t=\{(i,t):\hy_{i,t}>0\}$. 
The idea is to simply ``spread out'' the $S_t$ sets. Let $S_0=\bigcup_t S_t$ be an ordered
list where all the $(\cdot,t)$ pairs are listed before any $(\cdot,t+1)$ pair, and the
pairs for a given $t$ (i.e., $(i,t)\in S_t$) are listed in arbitrary order. 
Let $T_0=\left\lceil\sum_{i,t}\hy_{i,t}\right\rceil$. We divide the pairs in $S_0$ into
$T_0$ groups as follows. Initialize 
$\ell\assign 1$, $S\assign S_0$. 
For a set $A$ of $(i,t)$ pairs, we define the $\hy$-weight of $A$ as
$\hy(A)=\sum_{(i,t)\in A}\hy_{i,t}$, and the $\hx_j$-weight of $A$ as  
$\sum_{(i,t)\in A}\hx_{ij,t}$.  
If $0<\hy(S)\leq 1$, then we set $G_\ell=S$ to end the grouping process.  
Otherwise, group $G_\ell$ includes all pairs of $S$, taken in order starting from the
first pair, stopping when the total $\hy$-weight of the included pairs becomes at least 1; 
all the included pairs are also deleted from $S$. If the $\hy$-weight of $G_\ell$ now
exceeds 1, then we split the last pair $(i,t)$ into two copies:  
we include the first copy in $G_\ell$ and retain the second copy in $S$, and distribute
$\hy_{i,t}$ across the $\hy$-weight of the two copies so that $\hy(G_\ell)$ is now 
{\em exactly} 1. (Thus, the new $\hy$-weight of $S$ is precisely its old $\hy$-weight
$-1$.) Also, for each client $j$, we distribute $\hx_{ij,t}$ across the $\hx_j$-weight of
the two copies arbitrarily while maintaining that the $\hx_j$-weight of each copy is at
most its $\hy$-weight. 
We update 
$\ell\assign\ell+1$, and continue in this fashion with the 
current (i.e., ungrouped) list of pairs $S$. 
Note that an $(i,t)$ pair in $S_0$ may be split into at most two copies above (that lie in 
consecutive groups). To avoid notational clutter, we call both these copies 
$(i,t)$ and use $\hy^{\ell}_{i,t}$ to denote the $\hy$-weight of the copy in group
$G_\ell$ (which is equal to the original $\hy_{i,t}$ if $(i,t)$ is not
split). Analogously, we use $\hx^\ell_{ij,t}$ to denote the $\hx_j$-weight of the copy of   
$(i,t)$ in group $G_\ell$.

For every facility $i$, client $j$, and $\ell=1,\ldots,T_0$, we set 
$y'_{i,\ell}=\sum_{t:(i,t)\in G_\ell}\hy^{\ell}_{i,t}$ and
$x'_{ij,\ell}=\sum_{t:(i,t)\in G_\ell}\hx^\ell_{ij,t}$, so we have 
$x'_{ij,\ell}\leq y'_{i,\ell}$.  
It is clear that $\sum_\ell y'_{i,\ell}=\sum_t \hy_{i,t}$ and  
$\sum_\ell x'_{ij,\ell}=\sum_t \hx_{ij,t}$ for every facility $i$ and client $j$. 
Thus, $(x',y')$ is feasible to \eqref{uniflp}, and parts (i) and (ii) of the lemma hold. 
To prove part (iii), note that if (some copy of) $(i,t)$ is in $G_\ell$, then 
$\ell\leq kt$, since then we have 
$\bigl(\bigcup_{r=1}^{\ell-1} G_r\bigr)\subset\bigcup_{t'=1}^t S_{t'}$ and so
$\ell-1<kt$. 
Thus, for any client $j$, we have 
$$
\sum_{i,\ell}\ell x'_{ij,\ell}=
\sum_{i,\ell}\ell\bigl(\sum_{t:(i,t)\in G_\ell}\hx^\ell_{ij,t}\bigr)=
\sum_{i,t}\sum_{\ell:(i,t)\in G_\ell}\ell\hx^\ell_{ij,t}\leq
\sum_{i,t}kt\sum_{\ell:(i,t)\in G_\ell}\hx^\ell_{ij,t}=k\cdot\sum_{i,t}t\hx_{ij,t}.
$$
\end{proof}

\begin{proofof}{part (i) of Theorem~\ref{unifgenthm}}
We round the LP-optimal solution $(x,y)$ by using filtering followed by standard
randomized rounding. 
Clearly, we may assume that $\sum_{i',t}x_{i'j,t}=1$ and $y_{i,t}=\max_j x_{ij,t}$ for
every $i,j,t$. Also, we may assume that if $\sum_i y_{i,t+1}>0$, then $\sum_i y_{i,t}=1$,
because otherwise for some facility $i$ and some $\e>0$, we may decrease $y_{i,t+1}$ by
$\e$ and increase $y_{i,t}$ by $\e$, and modify the $\{x_{ij,t+1}, x_{ij,t}\}_j$ values
appropriately so as to maintain feasibility, without increasing the total cost. 
Define $N_j=\{(i,t):c_{ij}+t\leq 2(\bC_j+\bL_j)\}$ for a 
client $j$. The algorithm is as follows.

\begin{list}{U\arabic{enumi}.}{\usecounter{enumi} \addtolength{\leftmargin}{-1ex}}
\item For each $(i,t)$, we set $Y_{i,t}=1$ independently with probability
$\min\{4\ln m\cdot y_{i,t},1\}$. 

\item Considering each client $j$, if $\{(i,t)\in N_j: Y_{i,t}=1\}=\es$, then set
$Y_{i_j,1}=1$ where $i_j$ is such that $f_{i_j}=\min_{(i,t)\in N_j}f_i$ (note that 
$(i_j,1)\in N_j$). Let $S_j=\{(i,t)\in N_j: Y_{i,t}=1\}$ (which is non-empty).
Assign each client $j$ to the $(i,t)$ pair in $S_j$ with minimum $c_{ij}+t$ value,
i.e., set $X_{ij,t}=1$.  
 
\item Let $K=\max_t \sum_i Y_{i,t}$. 
Use Lemma~\ref{capviol} to convert $(X,Y)$ into a feasible integer solution to
\eqref{uniflp}. 
\end{list}

Let $C_j$ and $L_j$ denote respectively the connection cost and latency cost of
client $j$ in $(X,Y)$. We argue that 
(i) $\E{\sum_{i,t}f_iY_{i,t}}=O(\ln m)\cdot\sum_{i,t}f_iy_{i,t}$,  
(ii) with probability 1, $C_j+L_j\leq 2(\bC_j+\bL_j)$ for every client $j$, and  
(iii) $K=O(\ln m)$ with high probability, and in expectation. 
The theorem then follows from Lemma~\ref{capviol}.

For any client $j$, we have $\sum_{(i,t)\in N_j}x_{ij,t}\geq\frac{1}{2}$ (by Markov's
inequality). Thus, $f_{i_j}\leq 2\sum_{(i,t)\in N_j}f_iy_{i,t}$ and 
$\Pr[\sum_{(i,t)\in N_j}Y_{i,t}=0]$ is at most $e^{-4\ln m\cdot\frac{1}{2}}=1/m^2$.
The expected cost of opening facilities in step U1 is clearly at most 
$4\ln m\cdot\sum_{i,t}f_iy_{i,t}$. The expected facility-opening cost in step U2 is at
most $\Pr[\text{facility is opened in step U2}]\cdot\sum_j f_{i_j}
\leq m\cdot\frac{1}{m^2}\cdot 2m\sum_{i,t}f_iy_{i,t}$. 
So $\E{\sum_{i,t}f_iY_{i,t}}=O(\ln m)\cdot\sum_{i,t}f_iy_{i,t}$.
Since we always open some $(i,t)$ pair in $N_j$, we have 
$C_j+L_j\leq 2(\bC_j+\bL_j)$ for 
every client $j$. 

Let $\nt=\{t:\sum_i y_{i,t}>0\}$. Note that
$|\nt|\leq 1+\sum_{i,t}y_{i,t}\leq 1+\sum_{j,i,t}x_{ij,t}=m+1$.
After step U1, we have $\E{\sum_i Y_{i,t}}\leq 4\ln m$ for every time $t\in\nt$. 
Since the $Y_{i,t}$ random variables are independent, we also have 
$\Pr[\sum_i Y_{i,t}>8\ln m]\leq 1/m^2$ for all $t\in\nt$ 
(and also, $\E{\max_{t\in\nt}\sum_i Y_{i,t}}=O(\ln m)$). Thus, after step U2,
we have $\Pr[\sum_i Y_{i,1}>8\ln m]\leq 1/m^2+1/m$ and 
$\Pr[\sum_i Y_{i,t}>8\ln m]\leq 1/m^2$ for all $t\in\nt,\ t>1$. 
Hence, $\Pr[K>8\ln m]\leq 2/m$. 
(This also shows that $\E{K}=O(\ln m)$.)
\end{proofof}

\begin{proofof}{part (ii) of Theorem~\ref{unifgenthm}}
We round $(x,y)$ by applying filtering~\cite{LinV92} followed by Lemma~\ref{capviol} to
reduce the problem to a \mssc problem, and then use the result of Feige et
al.~\cite{FeigeLT04} to obtain a near-optimal solution to this \mssc problem. Let 
\begin{equation}
\min \ \ \sum_{j,t}tx_{j,t} \quad \mathrm{s.t.} \quad
\sum_{t} x_{j,t} \geq 1 \ \ \forall j, \quad 
x_{j,t}\leq\sum_{S:j\in S}y_{S,t} \ \ \forall j,t, \quad
\sum_S y_{S,t}\leq 1\ \ \forall t, \quad x,y\geq 0. \tag{P1} \label{mssclp}
\end{equation}
denote the standard LP-relaxation of \mssc~\cite{FeigeLT04} (here $j$ indexes the
elements, $S$ indexes the sets, and $t$ indexes time).   
Feige et al. showed that given a solution $(\hx,\hy)$ to \eqref{mssclp}, one can obtain in
polytime an integer solution of cost at most $4\cdot\sum_{j,t}t\hx_{j,t}$.

The rounding algorithm for \zfc{} \mlufl is as follows.
Define $N_j=\{i: c_{ij}\leq \bC_j/(1-\al)\}$, so $\sum_{i\in N_j,t}x_{ij,t}\geq\al$. 
For all $i,j,t$, set $\hy_{i,t}=y_{i,t}/\al$, and $\hx_{ij,t}=x_{ij,t}/\al$ if $i\in N_j$
and $\hx_{ij,t}=0$ otherwise. It is easy to see that $(\hx,\hy)$ satisfies 
$\sum_i \hy_{i,t}\leq\frac{1}{\al}$ for all $t$, and all the other constraints of
\eqref{uniflp}. We use Lemma~\ref{capviol} to convert $(\hx,\hy)$ to a feasible solution
$(x',y')$ to \eqref{uniflp}. 
Next, we extract a solution to \eqref{mssclp} from $(x',y')$. We identify facility $i$ with
the set $\{j:i\in N_j\}$, and set $\bx_{j,t}=\sum_{i\in N_j}x'_{ij,t}$. Now $(\bx,y')$ is
a feasible solution to \eqref{mssclp}. 
Finally, we round $(\bx,y')$ to an integer solution.
This yields the ordering $\ty=(\ty_{i,t})$ of the facilities. For each client $j$, if $j$
is first covered by set $i$ (so $i\in N_j$) at time (or position) $t$ in the \mssc
solution, then we set $\tx_{ij,t}=1$.

\vspace{-2ex}
\paragraph{Analysis.}
Since a client $j$ is always assigned to a facility in $N_j$, the connection cost of $j$
is bounded by $\bC_j/(1-\al)$.
To bound the latency cost, first we bound the cost of $(\hx,\hy)$. Since we modify the
assignment of a client $j$ by transferring weight from farther facilities to nearer ones,
it is clear that $\sum_{i,t}c_{ij}\hx_{ij,t}\leq\sum_{i,t}c_{ij}x_{ij,t}$. Also, clearly
$\sum_{j,i,t}t\hx_{ij,t}\leq\frac{1}{\al}\cdot\sum_{j,i,t}tx_{ij,t}$. 
and $\sum_i\hy_{i,t}\leq\frac{1}{\al}$. Thus, applying Lemma~\ref{capviol} yields
$(x',y')$ satisfying 
$\sum_{j,i,t}tx'_{ij,t}\leq
\left\lceil\frac{1}{\al}\right\rceil\frac{1}{\al}\cdot\sum_{j,i,t}tx_{ij,t}$.
The result of~\cite{FeigeLT04} now implies that \linebreak
$\sum_{j,i,t}t\tx_{ij,t}\leq 4\sum_{j,t}t\bx_{j,t}=4\sum_{j,i,t}tx'_{ij,t}
\leq\frac{4}{\al}\left\lceil\frac{1}{\al}\right\rceil\sum_j tx_{ij,t}$.
\end{proofof}

\section{LP-relaxations and algorithms for the minimum-latency problem} \label{lpml} 
In this section, we consider the minimum-latency (\ml) problem and apply our techniques 
to obtain LP-based insights and
algorithms for this problem. We give two LP-relaxations for \ml with constant integrality
gap. The first LP \eqref{mllp1} is a specialization of \eqref{mlufllp} to \ml, and to
bound its integrality gap, we only need the fact that the   
natural subtour elimination LP for \tsp  has constant integrality gap. 
The second LP \eqref{mllp2} has exponentially-many variables, one for every path (or tree)
of a given length bound, and 
the separation oracle for the dual problem 
corresponds to an (path- or tree-) orienteering problem. We prove that even a bicriteria  
approximation for the orienteering problem yields an 
approximation for \ml while losing a constant factor. 
(The same relationship also holds between \mgl and ``group orienteering''.)  
As mentioned in the Introduction, we
believe that our results shed new light on \ml and  opens up \ml to new
venues of attack. Moreover, as shown in Section~\ref{extn} these LP-based techniques
can easily be used to handle more general variants of \ml, e.g., $k$-route \ml with
$\Lf_p$-norm latency-costs (for which we give the first approximation algorithm).
We believe that our LP-relaxations are in fact (much) better than what we have accounted
for, and conjecture that the integrality gap of both \eqref{mllp1} and \eqref{mllp2} 
is at most 3.59, which is the currently best known approximation factor for \ml.

Let $G=(\D\cup\{r\},E)$ be the complete graph on $N=|\D|+1$ nodes with edge
weights $\{d_e\}$ that form a metric. Let $r$ be the root node at which the path
visiting the nodes must originate. We use $e$ to index $E$ and $j$ to index the nodes. In
both LPs, we have variables $x_{j,t}$ for $t\geq d_{jr}$ to denote if $j$ is visited at
time $t$ (where $t$ ranges from $1$ to $\Time$); for convenience, we think of
$x_{j,t}$ as being defined for all $t$, with $x_{j,t}=0$ if $d_{jr}>t$. 
(As in Section~\ref{mlufl-gen}, one can move to a polynomial-size LP losing a
$(1+\e)$-factor.)   

\vspace{-2ex}
\paragraph{A compact LP.} 
As before, we use a variable $z_{e,t}$ to denote if $e$ has been traversed by 
time $t$. 
\begin{gather*}
\min \quad \sum_{j,t} tx_{j,t} \qquad  \text{subject to} \tag{LP1} \label{mllp1} \\
\sum_{t} x_{j,t} \ge 1 \quad \forall j; 
\qquad\ \sum_{e} d_ez_{e,t} \le t \quad \forall t; 
\qquad\ \sum_{e\in \delta(S)} z_{e,t} \ge \sum_{t'\le t} x_{j,t'} 
\quad \forall t, S\subseteq \D, j; 
\qquad x, z \ge 0. 
\end{gather*}

\vspace{-2ex}
\begin{theorem} \label{mllp1thm}
The integrality gap of \eqref{mllp1} is at most $10.78$.
\end{theorem}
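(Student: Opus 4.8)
The plan is to round an optimal fractional solution $(x,z)$ to \eqref{mllp1} by mimicking the phase-based strategy of Algorithm~\ref{genalg}, but exploiting the absence of facility costs and connection costs and, crucially, the fact that the only ``Steiner''-like structure needed here is an ordinary spanning/tour structure rather than a group Steiner tree. For each node $j$, let $\tau_j$ be the earliest time $t$ with $\sum_{t'\le t}x_{j,t'}\ge\tfrac12$; by Markov, $\tau_j\le 2\bL_j$ where $\bL_j=\sum_t tx_{j,t}$. In phase $\ell$ (time $t_\ell=2^\ell$), I would take the vector $(2z_{e,t_\ell})$, which by the cut constraint $\sum_{e\in\delta(S)}z_{e,t_\ell}\ge\sum_{t'\le t_\ell}x_{j,t'}\ge\tfrac12$ for every $S$ separating $r$ from any $j$ with $\tau_j\le t_\ell$, is a fractional subtour-elimination (Held--Karp) solution on $\{r\}\cup\{j:\tau_j\le t_\ell\}$ of cost at most $2t_\ell$. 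Invoking the constant integrality gap $\rho$ of the subtour LP for $s$-$t$ (rooted) path TSP — or, equivalently, for TSP after the standard doubling trick — I obtain a tour through these nodes rooted at $r$ of length $O(\rho)\,t_\ell=O(t_\ell)$.

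Next I would concatenate these tours across phases $\ell=0,1,2,\dots$, shortcutting repeated visits, exactly as in Algorithm~\ref{genalg}. A node $j$ is ``served'' the first time it appears in a phase-$\ell$ tour; since for $t_\ell\ge\tau_j$ more than half of $j$'s fractional mass is covered and $(2z_{e,t_\ell})$ fractionally connects $j$ to $r$ with value $\ge 1$, the integral tour produced in phase $\ell$ must visit $j$ once $t_\ell\ge\tau_j$ — so $j$ is served by the end of phase $\ell_j:=\lceil\log_2\tau_j\rceil$ deterministically (here there is no probabilistic failure, unlike the GST rounding, which is what lets us avoid the $\log$ losses and the extra $\log m$ phases). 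Therefore $j$'s latency is at most $\sum_{\ell\le\ell_j} \big(\text{length of phase-}\ell\text{ tour}\big) = \sum_{\ell\le\ell_j}O(t_\ell)=O(t_{\ell_j})=O(\tau_j)=O(\bL_j)$. Summing over $j$ gives total cost $O(1)\sum_j\bL_j=O(1)\cdot\OPT$, i.e.\ a constant integrality gap.

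To pin down the constant $10.78$ rather than a generic $O(1)$, I would not just concatenate greedily but instead use the Goemans--Kleinberg framework: produce, for a geometric (or randomly-shifted geometric) sequence of length budgets, tours/paths of those lengths covering the corresponding prefixes of nodes, then apply the tour-concatenation lemma of~\cite{GK} to a randomly chosen starting scale. Plugging the best bound on the subtour-elimination integrality gap for the relevant rooted-path TSP into that analysis reproduces exactly the $10.78$ factor that~\cite{GK} obtained for \ml\ (since their analysis only uses a black-box $k$-MST / prize-collecting TSP subroutine of a certain quality, and the LP here furnishes an LP-relative substitute of the same quality). Concretely: the $z_{e,t}$'s restricted to the cut-dense node set give an LP-relative ``$k$-stroll'' whose rounded cost is within the subtour integrality gap of $t$, which is precisely the ingredient the Goemans--Kleinberg bound consumes.

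The main obstacle I anticipate is the bookkeeping needed to phrase everything LP-relatively: one must ensure that in every phase the rounded tour covers \emph{all} nodes with $\tau_j\le t_\ell$ (not just a prefix by count), that shortcutting only decreases lengths, and that the randomly-shifted geometric schedule interacts correctly with the $\lceil\log_2\tau_j\rceil$ cutoff so that the expected latency of each $j$ telescopes to $O(\bL_j)$ with the right constant. A secondary subtlety is the reduction from rooted-path TSP to TSP (doubling and shortcutting costs a factor $2$, which must be accounted for inside the $10.78$, exactly as it is in~\cite{GK}), and the fact that the LP is a priori pseudopolynomial in $\Time$ — but the latter is already handled by the $(1+\e)$-loss discretization argument referenced just before the theorem statement (cf.\ Lemma~\ref{lpsolve}).
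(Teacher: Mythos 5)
Your high-level approach matches the paper's: round the $z_{e,t}$ variables to a nested family of tours using the constant integrality gap of the subtour-elimination LP, and combine these tours via the Goemans--Kleinberg concatenation lemma (Lemma~\ref{gk}). Your first paragraph correctly establishes an $O(1)$ integrality gap, and you are right that the absence of the group structure is what lets one replace the GST rounding by a deterministic TSP rounding. However, the mechanism you describe for pinning down $10.78$ is not the paper's, and as written would not reproduce that constant. The paper does \emph{not} discretize to geometric (or randomly-shifted geometric) length budgets: it draws a random threshold $\al\in(0,1]$ with density $q(x)=2x$, builds a tour on $\{r\}\cup D_t(\al)$ for \emph{every} time $t$ by rounding $2z_{\cdot,t}/\al$, and feeds the entire nested family to Lemma~\ref{gk}, which internally chooses the best subset of scales. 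The quantity $\sum_t C_t(\al)\bigl(|D_t(\al)|-|D_{t-1}(\al)|\bigr)$ then telescopes exactly to $\sum_j C_{\tau_j(\al)}(\al)\le 3\sum_j\tau_j(\al)/\al$, and the $\al$-point identity $\E[\tau_j(\al)/\al]\le 2\bL_j$ closes the bound at $\tfrac{3.59}{2}\cdot 3\cdot 2\approx 10.78$. If you restrict to geometric budgets before applying the GK lemma, the telescoping no longer pairs each $j$ with $C_{\tau_j(\al)}(\al)$ but with the tour at the next power-of-two, costing an extra factor (roughly $2$, or about $1/\ln 2$ with a random shift), so you land well above $10.78$.

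Two further imprecisions: (i) you do not mention the parsimonious property, which the paper needs in order to assert that $2z_{\cdot,t}/\al$ remains a valid Held--Karp fractional solution when restricted to the sub-metric on $D_t(\al)\cup\{r\}$; without it the step ``$(2z/\al)$ is a fractional subtour solution on the prefix set of cost $\le 2t/\al$'' is not justified. (ii) Your concern about rooted-path TSP and a doubling loss is misplaced: the paper rounds directly to \emph{cycle} tours through $r$ (the latency within a tour is handled inside Lemma~\ref{gk}), and the factor of $2$ you are trying to account for is in fact the $\al$-point expectation constant, not a path-to-cycle conversion. So the approach is the right one, but the specific ingredients that give the stated constant --- random $\al$-point with density $2x$, tours for all $t$ fed to the GK lemma, and parsimony --- are missing or replaced by weaker substitutes in your sketch.
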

\vspace{-1ex}
\begin{proof}
Let $(x,z)$ be an optimal solution to \eqref{mllp1}, and $\bL_j=\sum_t tx_{j,t}$. 
For $\al\in[0,1]$, define the {\em $\al$-point of $j$}, $\tau_j(\al)$, to be the smallest $t$  
such that $\sum_{t'\leq t}x_{jt'}\geq\al$. Let $D_t(\al)=\{j:\tau_j(\al)\leq t\}$.
We round $(x,z)$ as follows. We pick $\al\in(0,1]$ according to the density function
$q(x)=2x$. At each time $t$, we utilize the $\frac{3}{2}$-integrality-gap of the
subtour-elimination LP for TSP and the parsimonious property
(see~\cite{Wolsey80,ShmoysW90,GoemansB90,BienstockGSW93}), to round $\frac{2z}{\al}$ and
obtain a tour on $\{r\}\cup D_t(\al)$ of cost 
$C_t(\al)\leq\frac{3}{\al}\cdot\sum_e d_ez_{e,t}\leq \frac{3t}{\al}$. 
We now use Lemma~\ref{gk} to combine these tours. 

\vspace{-1ex}
\begin{lemma}[\cite{GK} paraphrased] \label{gk}
Let $\Tour_1,\ldots,\Tour_k$ be tours containing $r$, with $\Tour_i$ having cost $C_i$ and
containing $N_i$ nodes, where $N_0:=1\leq N_1\leq \ldots\leq N_k=N$. One can find a
subset $\Tour_{i_1},\ldots,\Tour_{i_b=k}$ of tours, and a way of concatenating them 
that gives total latency at most $\frac{3.59}{2}\sum_i C_i(N_i-N_{i-1})$.
\end{lemma}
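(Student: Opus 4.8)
The plan is to reprove this as the Goemans--Kleinberg concatenation lemma~\cite{GK}, via the ``random geometrically-spaced subsequence'' technique. I would first put the input in convenient form. Writing $S_i$ for the node-set covered by $\Tour_i$, I may assume $\{r\}=S_0\subseteq S_1\subseteq\cdots\subseteq S_k$ with $|S_i|=N_i$ (this is how the tours arise in the applications); for a node $v\neq r$ set $j(v)=\min\{i:v\in S_i\}$, so that the target quantity rewrites as $\sum_iC_i(N_i-N_{i-1})=\sum_{v\neq r}C_{j(v)}$, the ``ideal'' cost in which every node pays only the cost of the first tour covering it. I may also assume $C_1\le C_2\le\cdots\le C_k$: if $C_{i+1}\le C_i$ then $\Tour_i$ is dominated by $\Tour_{i+1}$ and may be deleted, and collapsing the two terms $C_i(N_i-N_{i-1})+C_{i+1}(N_{i+1}-N_i)$ into $C_{i+1}(N_{i+1}-N_{i-1})$ does not increase the target sum.

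For the construction I would pick a random subsequence $i_1<i_2<\cdots<i_b=k$ that is geometrically spaced with respect to a cost-aware measure: fix a scale ratio $\beta>1$, draw a phase $u$ uniformly from $[0,1)$, and select the tours that are ``first to cross'' the thresholds of an exponential grid $\{\beta^{m+u}\}_{m\ge0}$, the grid being taken so that (unlike a grid on the raw counts $N_i$, which behaves badly when the costs are essentially flat) the selection collapses to a single tour when cost is flat and refines as the cost-versus-coverage tradeoff steepens, exactly as in~\cite{GK}. Then concatenate $\Tour_{i_1},\ldots,\Tour_{i_b}$ in order, each traversed starting and ending at $r$ but in an independent uniformly random direction. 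In the resulting closed walk a node $v$ is first visited while traversing the earliest selected tour covering it, say $\Tour_{i_\ell}$, at which moment the elapsed time is $\sum_{m<\ell}C_{i_m}$ plus the arc-length from $r$ to $v$ along $\Tour_{i_\ell}$; the two orientations place $v$ at arc-lengths summing to $C_{i_\ell}$, so this arc-length is $\tfrac12C_{i_\ell}$ in expectation over the direction. Grouping nodes by the stage at which they are first covered, the expected total latency (over the random directions) is at most $\sum_\ell(N_{i_\ell}-N_{i_{\ell-1}})\bigl(\sum_{m<\ell}C_{i_m}+\tfrac12C_{i_\ell}\bigr)$.

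The heart of the proof is to bound the expectation of this quantity over the phase $u$ by $\tfrac{3.59}{2}\sum_iC_i(N_i-N_{i-1})$. The point is that the cost-aware geometric spacing forces the prefix costs $\sum_{m\le\ell}C_{i_m}$ to form a convergent geometric-type series dominated by a constant times the cost of the tour that actually covers a node, and that averaging over $u$ smooths out the ``overshoot'' event in which the random grid skips the index $j(v)$ and pushes $v$ onto a later, costlier tour --- which, by the factor-$\beta$ spacing, occurs with controlled probability and only onto a factor-$\beta$-costlier tour. Carrying out this accounting yields an upper bound $\gamma(\beta)\sum_iC_i(N_i-N_{i-1})$ for an explicit function $\gamma(\beta)$; optimizing the ratio $\beta$, together with the extra layer of averaging in~\cite{GK} (randomizing $\beta$ itself, i.e.\ interpolating between neighbouring scales), brings the constant down to the stated $\tfrac{3.59}{2}$. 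Finally, the construction is easily derandomized in polynomial time: the phase $u$ changes the chosen subsequence only at $O(k)$ breakpoints (values of $u$ where two tours swap which grid cell they fall into), so one enumerates the $O(k)$ resulting subsequences, for each picks the best orientation of each tour independently (just try both directions), and keeps the cheapest candidate. I expect the genuinely delicate step to be precisely the $u$-averaging: choosing the correct cost-aware grid and driving the constant all the way to $3.59$ rather than a cruder $O(1)$ is exactly the content of~\cite{GK}, and without the monotonization of the $C_i$ and the extra randomization it does not go through.
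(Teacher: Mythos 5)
This lemma is not proved in the paper at all --- it is imported verbatim from Goemans--Kleinberg \cite{GK} --- so the only meaningful comparison is with the argument of \cite{GK}, and your reconstruction does follow that argument in outline: monotonize the costs by discarding dominated tours, select the subsequence via a geometric grid on the \emph{costs} with a uniformly random offset, traverse each selected tour in a random (then derandomized) direction, and account per node via $\sum_i C_i(N_i-N_{i-1})=\sum_{v\neq r}C_{j(v)}$ (which, as you note, uses nestedness of the tours; this holds in every application in the paper). The one substantive correction is that the step you defer as ``genuinely delicate'' is in fact a short computation, and no second layer of randomization over $\beta$ is needed --- a fixed optimal $\beta$ plus the random offset $u$ already yields the constant. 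Concretely, if $\theta=\beta^{m+u}$ is the first grid point with $\theta\ge C_{j(v)}$, then $v$'s expected latency is at most $\sum_{m'<m}\beta^{m'+u}+\tfrac12 C_{i_\ell}\le\theta\bigl(\tfrac{1}{\beta-1}+\tfrac12\bigr)$, and $\mathrm{E}_u[\theta]=C_{j(v)}\cdot\tfrac{\beta-1}{\ln\beta}$, so the expected latency of $v$ is at most $\tfrac{\beta+1}{2\ln\beta}\,C_{j(v)}$; minimizing over $\beta$ gives $\beta\ln\beta=\beta+1$, i.e.\ $\beta\approx 3.5911$ and $\tfrac{\beta+1}{2\ln\beta}\approx\tfrac{3.5912}{2}$. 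Summing over $v$ and derandomizing over the $O(k)$ breakpoints of $u$ (and the two orientations per selected tour, which can indeed be optimized independently, since a tour's orientation affects only the latencies of the nodes first covered on that tour) completes the proof. So your proposal is structurally correct; it simply stops just short of the calculation that actually produces the stated constant.
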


\noindent
\vspace{-1ex}
The tours we obtain for the different times are nested (as the $D_t(\al)$s
are nested). 
So 
$\sum_{t\geq 1} C_t(\al)(|D_t(\al)|-|D_{t-1}(\al)|)=
\sum_j\sum_{t:j\in D_t(\al)\sm D_{t-1}(\al)}C_t(\al)=\sum_j C_{\tau_j(\al)}(\al)\leq 
3\sum_j\frac{\tau_j(\al)}{\al}$. Thus, using Lemma~\ref{gk}, and taking expectation over
$\al$ (note that $\E{\frac{\tau_j(\alpha)}{\alpha}} \le 2\bL_j$), we obtain that the total
latency-cost is at most $(3.59\cdot 3)\sum_j\bL_j$.
\end{proof}

\noindent
Note that in the above proof we did not need any procedure to solve $k$-MST or
its variants, but rather just needed the integrality gap for the subtour-elimination LP to
be a constant. Also, we can modify the rounding procedure to ensure that 
$\text{(latency-cost of $j$)}\leq 18\tau_j(0.5)$ for 
each client $j$, as follows. (Such a guarantee is useful to bound the total cost when its
measured as an ${\mathcal L}_p$ norm for $p > 1$; see Section \ref{extn}.)
We now only consider times $t_\ell=2^\ell$. 
Recall that for any $\al\in(0,1)$, at each time $t$, we can obtain a tour on $\{r\}\cup
D_t(\al)$ of cost $C_t(\al)\leq\frac{3t}{\al}$. 
We take this tour for $t_\ell$, and traverse the resulting tour randomly clockwise or
anticlockwise (this choice can easily be derandomized), 
and concatenate all these tours. 
Let $\ell_j(\al)$ be the smallest $\ell$ such that $t_\ell\geq \tau_j(\al)$. 
So the (expected) latency-cost of $j$ is at most 
$\sum_{\ell<\ell_j(\al)}\frac{3t_\ell}{\al}+\frac{1}{2}\cdot\frac{3t_{\ell_j(\al)}}{\al}
\leq \frac{4.5t_{\ell_j(\al)}}{\al} \leq \frac{9\tau_j(\al)}{\al}$. 
Fixing $\al=0.5$, we obtain a 36-approximation with the per-client guarantee that
$\text{(latency-cost of $j$)}\leq 18\tau_j(0.5)$ for each $j$.

\paragraph{An exponential-size LP: relating the orienteering and latency problems.}
Let $\Pcol_t$ and $\Tcol_t$ denote respectively the collection of all (simple) paths and
trees rooted at $r$ of length at most $t$. 
For each path $P\in\Pcol_t$, we introduce a variable $z_{P,t}$ that indicates if $P$ is  
the path used to visit the nodes with latency-cost at most $t$.

\vspace{-5pt}
\noindent \hspace*{-3ex}
\begin{minipage}[t]{.49\textwidth}
\begin{alignat}{3}
\min & \quad & \sum_{j,t}tx_{j,t} & \tag{LP2$_\Pcol$} \label{mllp2} \\
\text{s.t.} && \sum_t x_{j,t} & \geq 1 \qquad && \forall j \notag \\ 
&& \sum_{P\in\Pcol_t} z_{P,t} & \leq 1 \qquad && \forall t \label{onep} \\ 
&& \sum_{P\in\Pcol_t: j\in P}z_{P,t} & \geq \sum_{t'\leq t}x_{j,t'} 
\qquad && \forall j,t \label{jcov} \\
&& x, z & \geq 0. \notag
\end{alignat}
\end{minipage}
\quad \rule[-29ex]{1pt}{26ex}
\begin{minipage}[t]{.49\textwidth}
\begin{alignat}{3}
\max & \quad & \sum_j\al_j & - \sum_t\beta_t \tag{LD2} \label{dlp2} \\
\text{s.t.} && \al_j & \leq t+\sum_{t'\geq t}\tht_{j,t'} \qquad && \forall j,t
\label{dineq1} \\
&& \sum_{j\in P}\tht_{j,t} & \leq \beta_t \qquad && \forall t, P\in\Pcol_t 
\label{dineq2} \\ 
&& \al, \beta, \tht & \geq 0. \label{dineq3}
\end{alignat}
\end{minipage}

\medskip \noindent
\eqref{onep} and \eqref{jcov} encode that at most one path may be chosen for any time $t$,
and that every node $j$ visited at time $t'\le t$ must lie on this path.
\eqref{dlp2} is the dual LP with 
exponentially many constraints. Let $(\text{LP2}_\T)$ be the analogue of \eqref{mllp2}
with tree variables, where we have variables $z_{Q,t}$ for every $Q\in\T_t$, and we 
replace all occurrences of $z_{P,t}$ in \eqref{mllp2} with $z_{Q,t}$.

Separating over the constraints \eqref{dineq2} involves solving a (rooted)
path-orienteering problem: for every $t$, given rewards $\{\tht_{j,t}\}$, we want to
determine if there is a path $P$ rooted at $r$ of length at most $t$ that gathers reward
more than $\beta_t$. 
A $(\rho,\gm)$-\{path, tree\} approximation algorithm for the path-orienteering problem is
an algorithm that always returns a \{path, tree\} rooted at $r$ of length at most
$\gm(\text{length bound})$ that gathers reward at least $(\text{optimum reward})/\rho$. 
Chekuri et al.~\cite{ChekuriKP08} give a $(2+\e,1)$-path approximation 
algorithm, whereas 
\cite{CGRT} design a $(1+\e,1+\e)$-tree approximation for orienteering (note that weighted 
orienteering can be reduced to unweighted orienteering with a $(1+\e)$-factor loss).    
We prove that even a $(\rho,\gm)$-tree approximation algorithm for orienteering can be
used to obtain an $O(\rho\gm)$-approximation for \ml. 
First, we show how to compute a near-optimal LP-solution. Typically, one 
argues that, scaling the solution computed by the ellipsoid method run on the dual with the
approximate separation oracle yields a feasible and near-optimal dual solution, and this 
is then used to obtain a near-optimal primal solution (see, e.g.,~\cite{JainMS03}). 
However, in our case, we have negative terms in the dual objective function, which makes
our task trickier: if our (unicriteria) $\rho$-approximate separation oracle determines
that $(\al,\beta,\tht)$ is feasible, then although $(\al,\rho\beta,\tht)$ is feasible to
\eqref{dlp2}, one has {\em no guarantee on the value of this dual solution}. 
Instead, the notion of approximation we obtain for the primal solution computed involves
bounded violation of the constraints. 

\newcommand{\mlpath}[1]{\ensuremath{\bigl(\text{LP2}_{\Pcol}^{({#1})}\bigr)}}
\newcommand{\mltree}[1]{\ensuremath{\bigl(\text{LP2}_{\Tcol}^{({#1})}\bigr)}}

Let $\mlpath{a,b}$ be \eqref{mllp2} where
we replace $\Pcol_t$ by $\Pcol_{bt}$, and the RHS of \eqref{onep} is now $a$. Let
$\mltree{a,b}$ be defined analogously.
Let $\OPT_\Pc$ be the optimal value of \eqref{mllp2} (i.e., $\mlpath{1,1}$). 
Note that $\OPT_\Pc$ is a lower bound on the optimum latency.

\begin{lemma} \label{apxsep}
Given a $(\rho,\gm)$-tree approximation for the orienteering problem, one can compute a 
feasible solution $(x,z)$ to $\mltree{\rho,\gm}$ of cost at most $\OPT$.
\end{lemma}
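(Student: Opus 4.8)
The plan is to use the LP-duality framework of Jain–Mahdian–Saberi, but carefully handle the negative term $-\sum_t \beta_t$ in the dual objective \eqref{dlp2}. First I would write down the dual $(\text{LD2})$ and observe that separating over its constraints \eqref{dineq2} is exactly the rooted tree-orienteering problem at each length scale $t$: given node rewards $\{\theta_{j,t}\}$, decide whether some tree in $\Tcol_t$ collects reward exceeding $\beta_t$. With only a $(\rho,\gamma)$-tree approximation available, I cannot solve this exactly; instead, given a candidate dual point $(\alpha,\beta,\theta)$, the approximate oracle either (a) returns a tree in $\Tcol_{\gamma t}$ of reward $> \beta_t$, giving a genuine violated constraint of $\mltree{\rho,\gamma}$'s dual, or (b) certifies that no tree in $\Tcol_t$ has reward exceeding $\rho\beta_t$. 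The key point is that in case (b) the point $(\alpha, \rho\beta, \theta)$ satisfies all constraints of $(\text{LD2})$ for the original problem — but its objective $\sum_j\alpha_j - \rho\sum_t\beta_t$ has been \emph{decreased}, so we get no lower bound on $\OPT$ from it; this is precisely the obstacle flagged in the text.

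The way around this is to run the ellipsoid method on the dual of $\mltree{\rho,\gamma}$ with the approximate oracle playing the role of the separation routine for the constraint family \eqref{dineq2} (at the stretched length $\gamma t$), treating all other constraints exactly. Concretely, I would run ellipsoid to (approximately) maximize the dual objective; whenever the oracle reports case (a) we have a legitimate cut for $\mltree{\rho,\gamma}$'s dual and we continue; the process terminates at a point that the oracle cannot improve, and the value it reports is at most the optimum of $\mltree{\rho,\gamma}$'s dual, which by weak LP duality is at most $\OPT_{\Tcol,(\rho,\gamma)}$, the optimum of $\mltree{\rho,\gamma}$ itself. But we also want this value to be at least $\OPT$: here I invoke that $(\alpha,\rho\beta,\theta)$, built from the terminal point via case (b), is feasible for the \emph{original} dual $(\text{LD2})$, and — crucially — that the ellipsoid run only ever generated polynomially many cuts, so the whole computation can be restricted to a polynomial-size subset of path/tree variables; running the primal LP $\mltree{\rho,\gamma}$ restricted to exactly those variables, and its dual restricted to the generated constraints, gives a matching primal-dual pair whose common value is at most $\OPT$ (since the original LP on all variables has optimum $\OPT_\Pc \le \OPT$, and relaxing to $\Tcol_{\gamma t}$ only lowers it). Solving that polynomial-size primal LP yields the desired $(x,z)$ feasible for $\mltree{\rho,\gamma}$ of cost at most $\OPT$.

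The main obstacle, as anticipated, is reconciling the loss from the $\rho$-approximate oracle with the negative coefficients: one must avoid the naive "scale $\beta$ by $\rho$" move (which ruins the objective) and instead argue that the cuts \emph{collected during the ellipsoid run} already pin down a small LP whose optimum is sandwiched between $\OPT$ (from below, via feasibility of the original primal restricted to the generated variables mapped through the trivial inclusion $\Pcol_t \subseteq \Tcol_t$ or directly via $\OPT_\Pc$) and $\OPT$ (from above, via the stretched primal). I would present this as: (1) set up the dual and identify orienteering as the separation problem; (2) describe the ellipsoid run with the approximate oracle, bounding the number of cuts by a polynomial; (3) restrict both primal $\mltree{\rho,\gamma}$ and its dual to the generated variables/constraints and invoke strong duality on this polynomial-size pair; (4) conclude that its value is at most $\OPT$ and the primal optimizer is the sought $(x,z)$. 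The per-scale length stretch by $\gamma$ is harmless because $\Tcol_t \subseteq \Tcol_{\gamma t}$, so feasibility is only easier.
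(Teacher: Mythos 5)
Your high-level plan is the right one, and matches the paper's: phrase the dual feasibility sets in terms of paths vs.\ trees with stretch, observe that case (b) of the approximate oracle certifies $(\alpha,\rho\beta,\theta)$ is feasible for the \emph{original} path-based dual, and use the polynomially many tree-cuts generated by ellipsoid to extract a small primal. But the step where you conclude that the restricted primal LP has optimum at most $\OPT$ has a genuine gap, and your justification for it is backwards. Restricting $\mltree{\rho,\gm}$ to only the $z$-variables encountered during the ellipsoid run \emph{shrinks} the feasible region of the (minimization) primal, so its optimum can only go \emph{up}, not down; the fact that the full $\mltree{\rho,\gm}$ has optimum $\le\OPT_\Pc\le\OPT$ tells you nothing about the restricted LP. Similarly, the claim that the terminal point's objective ``is at most the optimum of $\mltree{\rho,\gm}$'s dual'' is false: the ellipsoid with an approximate oracle only enforces a \emph{subset} of the tree constraints, so the terminal point can lie outside the true dual feasible region with a strictly larger objective. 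Your ``sandwiched between $\OPT$ from below and $\OPT$ from above'' step therefore does not go through as stated.

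The missing ingredient is the feasibility/binary-search structure plus a Farkas certificate. The paper treats the dual not as a maximization to be run directly, but as a family of feasibility problems $\Tfeas(\nu;\rho,\gm)$ parametrized by a target $\nu$, with the key containment $\Tfeas(\nu;\rho,\gm)\sse\Pfeas(\nu;1,1)$. One binary-searches for the largest $\nu^*$ at which the approximate oracle produces a certificate $(\alpha^*,\rho\beta^*,\tht^*)\in\Pfeas(\nu^*;1,1)$ (hence $\nu^*\le\OPT_\Pc$ by weak duality), and then at $\nu^*+\e$ the ellipsoid run certifies infeasibility of the system consisting of the polynomially many generated tree inequalities together with $\sum_j\al_j-\rho\sum_t\beta_t\ge\nu^*+\e$. \emph{Farkas' lemma applied to this infeasible system} is what produces the polynomial-support solution $(x,z)$ to $\mltree{\rho,\gm}$ of cost $\le\nu^*+\e\le\OPT_\Pc+\e\le\OPT+\e$; taking $\e$ exponentially small finishes. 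In short: it is not that the restricted primal inherits the bound from the full LP — rather, the bound on the restricted primal comes from the dual-side Farkas certificate at threshold $\nu^*+\e$, and $\nu^*$ is bounded by $\OPT_\Pc$ only because case (b) hands you a genuine feasible point of the path-based dual. Adding the binary search over $\nu$ and the explicit Farkas step to your write-up would close the gap.
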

\begin{proof}{Lemma~\ref{apxsep}}
Define 
\begin{eqnarray*}
\Pfeas(\nu;a,b) & := &  
\Bigl\{(\al,\beta,\tht):\ \ \eqref{dineq1}, \quad \eqref{dineq3}, \quad
\sum_{j\in P}\tht_{j,t}\leq \beta_t \quad \forall P\in\Pcol_{bt}, \quad 
\sum_j\al_j-a\sum_t\beta_t\geq\nu\Bigr\}, \\ 
\Tfeas(\nu;a,b) & := & 
\Bigl\{(\al,\beta,\tht):\ \ \eqref{dineq1}, \quad \eqref{dineq3}, \quad
\sum_{j\in Q}\tht_{j,t}\leq \beta_t \quad \forall Q\in\Tcol_{bt}, \quad
\sum_j\al_j-a\sum_t\beta_t\geq\nu\Bigr\}.
\end{eqnarray*}
Note that $OPT_\Pcol$ is the largest value of $\nu$ such that $\Pfeas(\nu;1,1)$ is feasible.
Given $\nu, (\al,\beta,\tht)$, if there was an algorithm to either show $(\al,\beta,\tht)\in \Pfeas(\nu;1,1)$
or exhibit a separating hyperplane, then using the ellipsoid method we could optimally solve for $OPT_\Pcol$.
However, such a separation oracle would solve orienteering exactly. 

We use the $(\rho,\gamma)$-tree approximation algorithm to give an {\em approximate}
separation oracle in the following sense. Given $\nu, (\al,\beta,\tht)$, we either show
$(\alpha,\rho\beta,\tht) \in \Pfeas(\nu;1,1)$, 
or we exhibit a hyperplane separating $(\al,\beta,\tht)$ and $\Tfeas(\nu;\rho,\gamma)$. Note that 
$\Tfeas(\nu;\rho,\gamma) \subseteq \Pfeas(\nu;1,1)$. Thus, for a fixed $\nu$, the ellipsoid method
in polynomial time, either certifies that $\Tfeas(\nu;\rho,\gamma)$ is empty or 
returns a point $(\alpha,\beta,\theta)$ with $(\alpha,\rho\beta,\theta) \in \Pfeas(\nu;1,1)$.
Let us describe the approximate separation oracle first, and then use the above fact to prove the lemma. 
First, check if $(\sum_j\al_j-\rho\beta_t\geq\nu)$,
\eqref{dineq1}, and \eqref{dineq3} hold, and if not, we use the appropriate inequality as the 
separating hyperplane between $(\al,\beta,\tht)$ and $\Tfeas(\nu;\rho,\gamma)$. 
Next, for each $t$, we run the $(\rho,\gm)$-tree approximation on the orienteering
problem specified by $\bigl(G,\{d_e\}\bigr)$, root $r$, rewards $\{\tht_{j,t}\}$, and
budget $t$. If for some $t$, we obtain a tree $Q\in\Tcol_{\gm t}$ with reward greater than
$\beta_t$, then we return $\sum_{j\in Q}\tht_{j,t}\leq\beta_t$ as the separating hyperplane.
If not, then for all paths $P$ of length at most $t$ in $G$, we have $\sum_{j\in P}\tht_{j,t} \le \rho\beta_t$ and thus
 $(\alpha,\rho\beta,\theta) \in \Pfeas(\nu;1,1)$.

We find the largest $\nu^*$ (via binary search) such that the ellipsoid method run for
$\nu^*$ with our separation oracle returns a solution $(\al^*,\beta^*,\tht^*)$ with
$(\al^*,\rho\beta^*,\tht^*)\in\Pfeas(\nu^*;1,1)$; hence, we have $\nu^*\leq\OPT_\Pcol$ 
(by duality). Now for $\e>0$, the ellipsoid method run for $\nu^*+\e$ terminates in
polynomial time certifying the infeasibility of $\Tfeas(\nu^*+\e;\rho,\gamma)$.
That is, it generates a polynomial number of inequalities 
of the form \eqref{dineq1}, \eqref{dineq3}, and $(\sum_{j\in Q}\tht_{j,t}\leq\beta_t)$
where $Q\in\Tcol_{\gm t}$, which together with the inequality
$\sum_j\al_j-\rho\sum_t\beta_t\geq\nu^*+\e$ constitute an infeasible system.
Applying Farkas' lemma, equivalently, we get a polynomial sized solution
$(x,z)$ to $\mltree{\rho,\gm}$ that has cost at most $\nu^*+\e$. Taking $\e$ small enough (something
like $1/\exp(\text{input size})$ so that $\ln(1/\e)$ is still polynomially bounded), this
also implies that $(x,z)$ has cost at most $\nu^*\leq\OPT_\Pcol$. This completes the 
proof of the lemma.
\end{proof}

\begin{theorem} \label{mllp2thm}
(i) A feasible solution $(x,z)$ to 
$\mltree{\rho,\gm}$ (or the corresponding LP-relaxation for \mgl) can be rounded to
obtain a solution of expected cost at most $O(\rho\gm)\cdot\sum_{j,t}tx_{j,t}$; \  
(ii) A feasible solution $(x,z)$ to 
$\mlpath{\rho,\gm}$ can be rounded to obtain a solution of cost at most 
$(3.59\cdot 2)\rho\gm\sum_{j,t}tx_{j,t}$. 
\end{theorem}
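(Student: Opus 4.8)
The plan is to mimic the rounding in the proof of Theorem~\ref{mllp1thm}, replacing the role of the $\al$-point construction together with the $\frac32$-integrality gap of the subtour-elimination LP by the structure already present in the solution $(x,z)$ to $\mltree{\rho,\gm}$ (or $\mlpath{\rho,\gm}$). For part~(ii), the setting is closest to the compact-LP argument: fix a threshold $\al\in(0,1]$ chosen with density $q(x)=2x$, and for each time $t$ look at the fractional path-assignment $\{z_{P,t}\}_{P\in\Pcol_{\gm t}}$. Constraints \eqref{onep} and \eqref{jcov} (rescaled) say that $\tfrac{1}{\al}z_{\cdot,t}$ is a fractional distribution over paths of length $\le\gm t$ that covers every $j\in D_t(\al):=\{j:\tau_j(\al)\le t\}$ to an extent $\ge1$; hence there is a single path $P_t\in\Pcol_{\gm t}$ with reward $\ge 1/\rho$ on $D_t(\al)$... more simply, a standard averaging/selection argument over the support (or a direct use of the fact that a fractional path-cover of cost $\le\gm t/\al$ on $\{r\}\cup D_t(\al)$ can be converted, by doubling/shortcutting, into an actual path on $\{r\}\cup D_t(\al)$ of length $O(\gm t/\al)$) produces a tour $\Tour_t$ on $\{r\}\cup D_t(\al)$ of cost $C_t(\al)=O\bigl(\tfrac{\rho\gm t}{\al}\bigr)$. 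Because the sets $D_t(\al)$ are nested, $\sum_t C_t(\al)\bigl(|D_t(\al)|-|D_{t-1}(\al)|\bigr)=\sum_j C_{\tau_j(\al)}(\al)\le O\bigl(\tfrac{\rho\gm}{\al}\bigr)\sum_j\tau_j(\al)$. Feeding the nested tours $\{\Tour_{t_\ell}\}$ (restricted to times $t_\ell=2^\ell$ to keep their number polynomial) into Lemma~\ref{gk} and taking expectation over $\al$ (using $\E{\tau_j(\al)/\al}\le 2\bL_j$ where now $\bL_j=\sum_t tx_{j,t}$) yields total latency $\le(3.59\cdot2)\cdot O(\rho\gm)\sum_{j,t}tx_{j,t}$; tracking constants gives exactly the stated $(3.59\cdot 2)\rho\gm$ once one observes the path already has the right length bound so no extra doubling factor is incurred beyond what Lemma~\ref{gk} absorbs.

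For part~(i) I would instead run the phase-based rounding of Algorithm~\ref{genalg} in its simplest incarnation. Here the time-metric is the actual metric $d$ and the ``groups'' are singletons (for \ml) or the $G_j$'s (for \mgl), so the GKR machinery of Theorems~\ref{thm:frt}--\ref{thm:gkr} is not needed: in phase $\ell$ with $t_\ell=2^\ell$, the tree-variables $\{z_{Q,t_\ell}\}_{Q\in\Tcol_{\gm t_\ell}}$ give, after rescaling by $1/\al$ (take $\al=\tfrac12$, so $\tau_j:=\tau_j(\tfrac12)$), a fractional distribution over trees of length $\le\gm t_\ell$ covering every $j$ with $\tau_j\le t_\ell$. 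Pick (by averaging over the support, or by a single randomized draw) a tree $T_\ell\in\Tcol_{\gm t_\ell}$ that with constant probability contains each such $j$ and whose expected length is $O(\rho\gm t_\ell)$; double it to a tour $\Tour_\ell$ of length $O(\rho\gm t_\ell)$, and concatenate $\Tour_0,\Tour_1,\dots$, adding $O(\log\text{(something)})$ extra phases so that every $j$ is covered whp. As in Lemma~\ref{lbound}, the probability that $j$ is still uncovered after phase $\ell$ decays geometrically once $t_\ell\ge\tau_j$, so $\E{\text{latency of }j}=O\bigl(\sum_{\ell\le\ell_j}\rho\gm t_\ell\bigr)=O(\rho\gm)\,t_{\ell_j}=O(\rho\gm)\tau_j=O(\rho\gm)\bL_j$, giving the claimed $O(\rho\gm)\sum_{j,t}tx_{j,t}$. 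For \mgl the only change is that the group $G_j$ may not be a singleton, which the cover constraint already handles verbatim.

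The step I expect to be the main obstacle is the ``extract a single good tree/path from the fractional distribution'' step, because $\mltree{\rho,\gm}$ and $\mlpath{\rho,\gm}$ as produced by Lemma~\ref{apxsep} are guaranteed only via Farkas' lemma — the support of $z_{\cdot,t}$ consists of the trees generated by the approximate separation oracle, all of which already have length $\le\gm t$, but I must be careful that a convex combination of them covering $D_t(\al)$ to extent $\ge1/\al$ lets me pick one tree covering a good fraction of $D_t(\al)$ \emph{while} keeping its length $\le\gm t$ (which holds automatically since every tree in the support already satisfies the length bound) and then argue the cover-loss is absorbed by the geometric decay across phases (part (i)) or by the $\tau_j(\al)/\al$ expectation (part (ii)). The bookkeeping to get the precise constant $(3.59\cdot2)\rho\gm$ in part~(ii) — in particular that converting the tree to a path for the concatenation does not cost an extra factor beyond the one already folded into Lemma~\ref{gk} — is the other place that needs care, but it is routine once the extraction step is set up correctly.
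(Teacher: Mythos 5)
Your high-level plan — phase-based rounding with geometric decay for part~(i), $\al$-point selection plus Lemma~\ref{gk} for part~(ii) — matches the paper's outline, but both parts have a genuine gap exactly at the step you flag as ``the main obstacle,'' and in part~(i) there is an additional quantitative flaw you seem not to have noticed.

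For part~(i), picking a \emph{single} tree per phase (whether by a normalized random draw from $z_{\cdot,t_\ell}$ or by averaging) covers a fixed client $j$ with $\tau_j\le t_\ell$ with probability only about $\bigl(\sum_{Q\ni j}z_{Q,t_\ell}\bigr)/\sum_Q z_{Q,t_\ell}\ge \tfrac{2/3}{\rho}$ (or $\tfrac{1/2}{\rho}$ with your $\al=\tfrac12$). The per-phase ``miss'' probability is then $1-\Theta(1/\rho)$, and the expected latency involves $\sum_{k\ge0} t_{\ell_j+k}\bigl(1-\Theta(1/\rho)\bigr)^k = t_{\ell_j}\sum_k \bigl(2(1-\Theta(1/\rho))\bigr)^k$, which \emph{diverges} once $\rho\ge 4/3$ — and the orienteering approximations used ($\rho=2+\e$ or $1+\e$) put you squarely in the divergent regime. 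The paper avoids this by selecting, in each phase, a random set of at most $\ceil{\sum_Q z_{Q,t_\ell}}\le\ceil{\rho}$ trees with marginal $z_{Q,t_\ell}$ (dependent rounding over the integral polytope $\{\z\in[0,1]^{\Tcol}:\sum_Q\z_Q\le\ceil{\sum_Q z_{Q,t_\ell}}\}$), so the per-phase coverage probability is at least $\sum_{t'\le t_\ell}x_{j,t'}\ge 2/3$ \emph{independent of} $\rho$, and the per-phase tour length is $O(\rho\gm t_\ell)$. You must select $\Theta(\rho)$ trees per phase, not one, to get the geometric series to converge.

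For part~(ii), to apply Lemma~\ref{gk} you need, for each $t$, a single \emph{deterministic} tour spanning $\{r\}\cup D_t(\al)$ of length $O(\rho\gm t/\al)$, not a random tree covering a constant fraction of $D_t(\al)$. Your appeal to ``a standard averaging/selection argument over the support'' or ``a fractional path-cover can be converted by doubling/shortcutting into an actual path'' is not a proof: the fractional combination $\{z_{P,t}\}$ gives you $\ge\al$ units of $r$-to-$j$ connectivity for each $j\in D_t(\al)$, but extracting one cheap integral subgraph that hits \emph{all} of $D_t(\al)$ requires a real decomposition. The paper's construction is: take $K$ large enough that $K\al$ and $\{Kz_{P,t}\}$ are integral, create $Kz_{P,t}$ copies of each edge of $P$ directed away from $r$, observe that every $j\in\D$ has in-degree $\ge$ out-degree and there are $K\al$ arc-disjoint $r\to j$ paths for each $j\in D_t(\al)$, apply the out-arborescence packing theorem of Bang-Jensen, Frank, and Jackson to get $K\al$ arc-disjoint out-arborescences each spanning $D_t(\al)$, and pick the cheapest — cost at most $K\rho\gm t/(K\al)=\rho\gm t/\al$, then Eulerify. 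Without that (or something equivalently non-trivial), the bound $C_t(\al)\le 2\rho\gm t/\al$ that feeds into Lemma~\ref{gk} is unproven. (Also a minor slip: the fractional path cost is $\sum_P z_{P,t}d(P)\le\rho\gm t$, not $\gm t/\al$, before the $1/\al$ scaling.)
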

\begin{proof}
We prove part (i) first.
(Note that for \ml, the analysis leading to Theorem~\ref{mllp1thm} already implies 
that one can obtain a solution of cost at most $10.78\rho\gm\sum_{j,t}tx_{j,t}$, because
setting $\z_{e,t}=\sum_{Q\in\Tcol_{\gm t}:e\in Q}z_{Q,t}$ yields a solution $(x,\z)$ that
satisfies $\sum_e d_e\z_{e,t}\leq\rho\gm t$ and all other constraints of \eqref{mllp1}.) 
We sketch a (randomized) rounding procedure that also works for \mgl and yields improved
guarantees. We may assume that each $z_{Q,t}\in[0,1]$. 
At each time $t_\ell=2^\ell,\ \ell=\ceil{\log_2\Time+4\log_2 m}$, we 
select at most $\ceil{w_{t_\ell}}$ trees from $\Tcol_{bt_\ell}$, picking each
$Q\in\Tcol_{\gm t_\ell}$ with probability $z_{Q,t_\ell}$. 
(We can always do this (efficiently) since for every time $t$, the polytope  
$\{\z\in[0,1]^{\Tcol_{bt}}: \sum_Q \z_{Q,t}\leq \ceil{\sum_Q z_{Q,t}}\}$ is integral.)
We take the union of all these trees. Note that expected cost of the resulting subgraph is
at most $w_t\gm t_\ell\leq\rho\gm t_\ell$. We ``Eulerify'' the resulting subgraph to
obtain a tour for $t_\ell$ of cost at most $2\rho\gm t_\ell$, 
and concatenate these tours. 
The probability that $j$ is not visited (or covered) by the tour for $t_\ell$ is at most 
$1-\sum_{t\leq t_{\ell}}x_{j,t}$, which implies that 
with high probability, we obtain a tour spanning all nodes.  
Letting $\tau_j=t_j\bigl(\frac{2}{3}\bigr)$, the expected latency-cost of $j$ is at most 
$2\rho\gm\bigl(2\tau_j+2\tau_j\sum_{k\geq 0}(\frac{2}{3})^k\bigr)\leq 16\rho\gm\tau_j$. 

\medskip
To prove part (ii), we adopt a rounding procedure that again utilizes Lemma~\ref{gk}, and
yields the stated bound (deterministically).
Recall that $\tau_j(\al)$ denotes the $\al$-point of $j$. 
Let $D_t(\al)=\{j:\tau_j(\al)\leq t\}$.
For any $\al\in(0,1)$, and time $t$, we now show to obtain a tour spanning
$D_t(\al)\cup\{r\}$ of cost at most $\frac{2\rho\gm t}{\al}$. We can then proceed as in  
the rounding procedure for Theorem~\ref{mllp1thm} to
argue that, for the tours we obtain, the quantity $\sum_i C_i(N_i-N_{i-1})$ appearing in
Lemma~\ref{gk} is bounded by $2\rho\gm\sum_j\frac{t_j(\al)}{\al}$.
Hence, choosing $\al$ as before according to the distribution $q(x)=2x$ and taking 
expectations, we obtain a solution with the stated cost. 

Let $K$ be such that 
$K\al,\ \{Kz_{P,t}\}_{P\in\Pcol_{\gm t}}$ are integers. 
For each $P$ with $z_{P,t}>0$, we create $Kz_{P,t}$ copies of each edge on $P$, and direct
the edges away from $r$. Let $A_{P,t}$ denote the resulting arc-set.
Note that in $A_t:=\biguplus_{P:z_{P,t}>0}A_{P,t}$, every node $j\in\D$ has in-degree at
least its out-degree,  
and there are $K\al$ arc-disjoint paths from $r$ to $j$ for each 
$j\in D_t(\al)$. So applying Theorem 2.6 in Bang-Jensen et al.~\cite{BangjensenFJ95}, one
can obtain $K\al$ arc-disjoint out-arborescences rooted at $r$, each containing all 
nodes of $D_t(\al)$. 
Thus, if we pick the cheapest such arborescence and ``Eulerify'' it, 
we obtain a tour spanning $D_t(\al)\cup\{r\}$ of cost at most 
$2\cdot K\rho\gm t\cdot\frac{1}{K\al}=\frac{2\rho\gm t}{\al}$.
\end{proof}

In Appendix~\ref{append-lpml}, we prove an analogue of Lemma~\ref{apxsep} for \mgl. 
Combined with part (i) of Theorem~\ref{mllp2}, this shows that (even) a bicriteria   
approximation for ``group orienteering'' yields an approximation for \mgl while losing a  
constant factor.

\section{Extensions} \label{extn}
We now consider various well-motivated extensions of \mlufl, and show that our LP-based  
techniques and algorithms are quite versatile and extend with minimal effort to yield
approximation guarantees for these more general \mlufl problems. 
Our goal here is to emphasize the flexibility afforded by our LP-based techniques, and we
have not attempted to optimize the approximation factors. 

\vspace{-2ex}
\paragraph{Monotone latency-cost functions with bounded growth, and higher 
$\Lf_p$ norms.}  
Consider the generalization of \mlufl, where we have a non-decreasing function $\ld(.)$
and the latency-cost of client $j$ is given by 
$\ld(\text{time taken to reach the facility serving $j$})$; the goal, as before, is to
minimize the sum of the facility-opening, client-connection, and client-latency costs. 
Say that $\ld$ has growth at most $p$ if $\ld(cx)\leq c^p\ld(x)$ for all 
$x\geq 0,\ c\geq 1$.
It is not hard to see that for concave $\ld$, we obtain the same performance guarantees as
those obtained in Section~\ref{approx} (for $\ld(x)=x$).
So we focus on the case when $\ld$ is convex, 
and obtain an 
$O\bigl(\max\{(p\log^2 n)^p,p\log n\log m\}\bigr)$-approximation algorithm for convex
latency functions of growth $p$. As a {\em corollary}, we obtain an 
$O\bigl(p\log n\max\{\log n,\log m\}\bigr)$-approximation for {\em $\Lf_p$-\mlufl},
where we seek to minimize the facility-opening cost + client-connection cost + 
the $\Lf_p$-norm of client-latencies. 

\begin{theorem} \label{pgrowththm}
There is an $O\bigl(\max\{(p\log^2 n)^p,p\log n\log m\}\bigr)$-approximation algorithm for 
\mlufl with convex monotonic latency functions of growth (at most) $p$.
\end{theorem}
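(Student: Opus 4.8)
The plan is to run Algorithm~\ref{genalg} essentially unchanged, on a modified LP whose latency term is $\ld(t)$ in place of $t$, after making one quantitative change: the number of GKR repetitions used to build each subtree in a phase is raised from $\Theta(\log n)$ to $R:=\Theta(p\log_2 n)$. Boosting $R$ is what lets the $p$-th power introduced by the growth condition $\ld(cx)\le c^p\ld(x)$ be absorbed, while costing only a $\poly(p)$ factor rather than $2^{\Theta(p)}$ in the relevant quantities.

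First I would replace the objective of \eqref{mlufllp} by $\sum_{i,t}f_iy_{i,t}+\sum_{j,i,t}\bigl(c_{ij}+\ld(t)\bigr)x_{ij,t}$, leaving every constraint unchanged. Assuming $\ld$ is polynomially computable, this LP is still solvable (the separation oracle for \eqref{eq:4} is unaffected and $\ld(t)$ is a constant), and the reduction to time-values that are powers of $(1+\e)$ (Lemma~\ref{lpsolve}) still goes through: collapsing a time $t\in(\Time_{\ell-1},\Time_\ell]$ to $\Time_\ell\le(1+\e)t$ now scales its latency coefficient by $\ld((1+\e)t)/\ld(t)\le(1+\e)^p\le 1+O(p\e)$, so taking $\e=\e'/p$ loses only a $(1+\e')$-factor overall. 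Write $L^{\ld}_j:=\sum_{i,t}\ld(t)x_{ij,t}$ and $F^*:=\sum_{i,t}f_iy_{i,t}$, and define $N_j$ and $\tau_j$ by $\al$-points exactly as in Section~\ref{mlufl-gen}. The only inequality in the setup that needs revisiting is $\ld(\tau_j)=O(L^{\ld}_j)$: by the same Markov computation $\sum_{i\in N_j}\sum_{t\ge\tau_j}x_{ij,t}\ge\tfrac{1}{12}$, and since $\ld$ is non-decreasing, $L^{\ld}_j\ge\ld(\tau_j)\cdot\tfrac{1}{12}$.

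Now run Algorithm~\ref{genalg} with $R=\Theta(p\log_2 n)$ executions of the GKR algorithm per subtree $T''_r$ in step A1.3, keeping the Markov truncation constant and the number $N=\Theta(\log m)$ of subtrees per phase. Mirroring Lemmas~\ref{lem:gkr} and~\ref{lbound}: (i) by Theorem~\ref{thm:gkr} a single $T''_r$ misses a group $N'_j$ (with $j\in D_\ell$) with probability $\le\exp\bigl(-\Omega(R/\log n)\bigr)=2^{-\Theta(p)}$, so the tree $T'_\ell$ actually selected in phase $\ell$ misses $N'_j$ with probability $q\le 2^{-\Theta(p)}$; (ii) $T'_\ell$ has facility cost $\le O(R)\cdot\sum_{(i,v_i)\in E'}f_i\z_{i,v_i}\le O(R)\cdot F^*$, and $\sum_{e\in E(T'_\ell)\sm E'}d_{\T_\ell}(e)\le O(R)\cdot\sum_e d_{\T_\ell}(e)\z_e=O(R\log n)\,t_\ell$ by Theorems~\ref{thm:frt} and~\ref{thm:gkr}. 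Hence, with $\ell_j=\ceil{\log_2\tau_j}$ and $P_j$ the phase in which $j$ is connected, $\Pr[P_j\ge\ell_j+k]\le q^k$ (phases use independent randomness); the number of phases is $O(\log m)$, so the facility-opening cost is $O(R\log m)\,F^*=O(p\log n\log m)\,F^*$, the connection cost is $O(1)\sum_j\bC_j$, and the realized latency of $j$ obeys $L_j\le\sum_{\ell\le P_j}d(\Tour_\ell)=O(R\log n)\sum_{\ell\le P_j}t_\ell=O(p\log^2 n)\,t_{P_j}$.

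It remains to bound $\sum_j\E{\ld(L_j)}$. Using $t_{P_j}\le 4\tau_j\cdot 2^{P_j-\ell_j}$ and the growth bound twice,
\[
\ld(L_j)\ \le\ \ld\bigl(O(p\log^2 n)\cdot 4\tau_j\cdot 2^{P_j-\ell_j}\bigr)\ \le\ O\bigl((p\log^2 n)^p\bigr)\cdot 2^{p(P_j-\ell_j)}\cdot\ld(\tau_j).
\]
Since $R=\Theta(p\log_2 n)$ was chosen so that $2^p q<1/2$, we get $\E{2^{p(P_j-\ell_j)}}\le\sum_{k\ge 0}2^{pk}q^k=O(1)$; together with $\ld(\tau_j)=O(L^{\ld}_j)$ this gives $\E{\ld(L_j)}=O\bigl((p\log^2 n)^p\bigr)L^{\ld}_j$, and summing over $j$ bounds the expected latency cost by $O\bigl((p\log^2 n)^p\bigr)\OPT$. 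Adding the three contributions yields the stated $O\bigl(\max\{(p\log^2 n)^p,\,p\log n\log m\}\bigr)$ bound, and the $\Time=\poly(m)$ assumption is removed as in Theorem~\ref{scalethm}, now using $\ld(\tau_j)=O(L^{\ld}_j)$ in place of $\tau_j=O(\bL_j)$. The delicate point, and the main obstacle, is to route \emph{all} of the $p$-dependence through the single factor $R$ — which enters $L_j$ linearly and hence $\ld(L_j)$ only as $R^p=(p\log n)^p\le(p\log^2 n)^p$ — while keeping the Markov truncation constant and the subtree count $N$ independent of $p$; driving the per-phase miss probability below $2^{-p}$ by instead raising the truncation threshold would inflate the tour lengths, and therefore $\ld(L_j)$, by roughly $2^{\Theta(p^2)}$. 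For the same reason it is essential that Theorem~\ref{thm:frt} supplies a \emph{deterministic} tree embedding with a fixed $O(\log n)$ distortion: a random tree metric would contribute a random distortion correlated with the rounding, whose $p$-th moment need not be $(\log n)^p$.
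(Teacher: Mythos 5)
Your proposal is correct, but it takes a genuinely different route from the paper's proof. The paper keeps the GKR repetition count at the original $\Theta(\log n)$ (so the per-phase miss probability remains the constant $4/9$) and instead compresses the time scale to $t_\ell = 2^{\ell/p}$, giving $\Nc = O(p\log m)$ phases; the $p$-dependence in the latency term then arises from the harmonic-like factor $1/(2^{1/p}-1)^p = O((p/\ln 2)^p)$ generated by summing the slowly-growing $t_\ell$'s, and the extra $p$ in the facility cost comes from the increased phase count. You keep $t_\ell = 2^\ell$ (hence $O(\log m)$ phases) and instead boost the GKR repetition count to $R=\Theta(p\log n)$, driving the per-phase miss probability down to $2^{-\Theta(p)}$ so that $\E{2^{p(P_j-\ell_j)}} = O(1)$; the $p$-dependence in the latency term then arises from $R^p$, and the extra $p$ in the facility cost comes from the increased per-phase tree cost. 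Both allocations deliver the same $O(\max\{(p\log^2 n)^p, p\log n\log m\})$ bound, and it is a nice observation that the $p$-overhead can be routed through either the number of phases or the per-phase work. One small point you gloss over: the inference ``a single $T''_r$ misses with probability $\le 2^{-\Theta(p)}$, so the selected $T'_\ell$ misses with probability $\le 2^{-\Theta(p)}$'' does not follow from the paper's union-bound argument in Lemma~\ref{lem:gkr} (which would only give miss probability $\le 2^{-\Theta(p)} + 1/20$); you need the sharper conditioning estimate $\Pr[\text{miss}\mid\text{pass budget}] \le \Pr[\text{miss}]/\Pr[\text{pass budget}] \le 2^{-\Theta(p)}/(19/20)$, which does give the claim but should be stated explicitly, since keeping the Markov truncation constant is exactly the design constraint you rightly flag as essential.
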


\begin{proof}
We highlight the changes to the algorithm and analysis in Section~\ref{mlufl-gen}. 
We again assume that $\Time=\poly(m)$ for convenience. This assumption can be dropped by 
proceeding as in Theorem~\ref{scalethm}; we do after proving the theorem.
The objective of \eqref{mlufllp} now changes to 
$\min\ \sum_{i,t} f_iy_{i,t}+\sum_{j,i,t}\bigl(c_{ij}+\ld(t)\bigr)x_{ij,t}$.
The {\em only} change to Algorithm~\ref{genalg} is that we now define
$t_\ell=\min\{2^{\ell/p},\Time\}$ and set 
$\Nc:=\ceil{p\log_2(2^{1/p}\tau_{\max})+4\log_2 m}=O(p\log m)$.
Define $\LC_j=\sum_{i,t}\ld(t)x_{ij,t}$. 
Note that we now have $\ld(\tau_j)\leq 12\LC_j$.
Define $\ell_j$ to be the first phase $\ell$ such that $t_\ell\geq\tau_j$. 
Let the random variable $P_j$ be as defined in Lemma~\ref{lbound}.
The failure probability of the algorithm is again at most $1/\poly(m)$. 
The facility-cost incurred in $O(p\log n\log m)\sum_{i,t}f_iy_{i,t}$, and
the connection cost of client $j$ is at most $4\bC_j$. We generalize Lemma~\ref{lbound}
below to show that $\E{L_j}=O\bigl((p\log^2 n)^p\bigr)\ld(t_{\ell_j})\leq 
O\bigl((2p\log^2 n)^p\bigr)\cdot\ld(\tau_j)$, which yields the desired approximation.
We have 
$$
L_j\leq\ld\bigl(\sum_{\ell\leq P_j}d(\Tour_\ell)\bigr)\leq 
\ld\bigl(O(\log^2 n)\sum_{\ell\leq P_j}t_\ell\bigr) 
\leq O(\log^{2p} n)\ld\bigl(\sum_{\ell\leq P_j}t_\ell\bigr),
$$ so
$\E{L_j}\leq O(\log^{2p} n)\Bigl[\ld\bigl(\sum_{\ell\leq\ell_j}t_\ell\bigr)+ 
\sum_{k\geq 1}\Pr[P_j\geq\ell_j+k]\ld\bigl(\sum_{\ell\leq\ell_j+k}t_\ell\bigr)\Bigr]$.
Now, $\Pr[P_j\geq\ell_j+k]\leq\bigl(\frac{4}{9}\bigr)^k$, 
$\sum_{\ell\leq\ell_j+k}t_\ell\leq t_{\ell_j}\cdot\frac{2^{k/p}}{1-2^{-1/p}}$, and 
$(2^{1/p}-1)\geq\frac{\ln 2}{p}$. 
Plugging these in gives, 
$$
\E{L_j} = O(\log^{2p} n)\cdot\frac{2}{(2^{1/p}-1)^p}\cdot\ld(t_{\ell_j})\cdot
\sum_{k\geq 0}\Bigl(\frac{4}{9}\Bigr)^k2^{k}
= O\bigl((p/\ln 2)^p\log^{2p} n\bigr)\ld(t_{\ell_j}). 
$$

\vspace{-7ex}
\end{proof}

\paragraph{Removing the assumption \boldmath $\Time=\poly(m)$ in Theorem~\ref{pgrowththm}.}
As in the case of Theorem~\ref{scalethm}, to drop the assumption that $\Time=\poly(m)$, we  
(a) solve the LP considering only times in $\TS=\{\Time_0,\ldots,\Time_k\}$ 
(where $\Time_r=\ceil{(1+\e)^r}$); 
and (b) set $t_\ell=\TS(\bbL\cdot 2^{\ell/p})$ and the number of phases to 
$\Nc:=\ceil{p\log_2(2^{1/p}\tau_{\max}/\bbL)+4\log_2 m}$, 
where $\bbL=(\sum_{j,i,t} tx_{ij,t})/m$.  
Note that $\Nc=O(p\log m)$, and 
$\ld(\bbL)\leq(\sum_{j,i,t}\ld(t)x_{ij,t})/m=(\sum_j\LC_j)/m$, which shows that the
expected latency cost incurred for clients $j$ with $\tau_j\leq t_0$ is at most 
$\sum_j\LC_j$.

\begin{corollary} \label{lpnormcor}
One can obtain an $O\bigl(p\log n\max\{\log n,\log m\}\bigr)$-approximation algorithm for
$\Lf_p$-\mlufl.
\end{corollary}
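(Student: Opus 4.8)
}
The plan is to regard $\Lf_p$-\mlufl as \mlufl with the latency function $\ld(x)=x^p$ (which is convex and has growth exactly $p$), run the algorithm behind Theorem~\ref{pgrowththm} on a suitable \emph{convex} relaxation, keep track of a \emph{per-client} latency bound, and then pull the $p$-th root outside the expectation via Jensen's inequality. No guessing of $\OPT$ and no rescaling of the time-metric is required.

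First I would set up a convex relaxation of $\Lf_p$-\mlufl: retain all the constraints of \eqref{mlufllp} (or of its polynomial-size surrogate \lp{\TS}), and use as objective $\sum_{i,t}f_iy_{i,t}+\sum_{j,i,t}c_{ij}x_{ij,t}+\bigl(\sum_j(\bL_j)^p\bigr)^{1/p}$, where $\bL_j=\sum_{i,t}tx_{ij,t}$. On integral solutions this coincides with the true $\Lf_p$-\mlufl objective, so it is a valid relaxation, and it is convex (a linear map followed by a norm), so a $(1+\e)$-optimal solution $(x,y,z)$ is computable in polynomial time. Writing $\OPT$ for the optimum of $\Lf_p$-\mlufl, this gives $\sum_{i,t}f_iy_{i,t}\le(1+\e)\OPT$, $\sum_j\bC_j=\sum_{j,i,t}c_{ij}x_{ij,t}\le(1+\e)\OPT$, and $\bigl(\sum_j(\bL_j)^p\bigr)^{1/p}\le(1+\e)\OPT$.

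Next I would feed $(x,y,z)$ into Algorithm~\ref{genalg} with precisely the modifications used in the proof of Theorem~\ref{pgrowththm} for $\ld(x)=x^p$ --- namely $t_\ell=\min\{2^{\ell/p},\Time\}$, $\Nc=O(p\log m)$, together with the $\Time=\poly(m)$-removal of Theorem~\ref{scalethm} adapted as in that proof --- where $N_j,\tau_j,\bC_j,\bL_j$ are all defined with respect to this $(x,y,z)$. The bound $\tau_j\le12\bL_j$ still holds, since its derivation in Section~\ref{mlufl-gen} uses only the $x$-values. Running the analysis of Theorem~\ref{pgrowththm} verbatim, the algorithm fails with probability $1/\poly(m)$, incurs facility-opening cost $O(p\log n\log m)\sum_{i,t}f_iy_{i,t}=O(p\log n\log m)\cdot\OPT$, and connects each client $j$ at connection cost at most $4\bC_j$, hence total connection cost $O(\OPT)$. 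The one quantity I would track more carefully than there is the \emph{per-client} latency cost $L_j$, which for $\ld(x)=x^p$ equals $(\text{time at which }j\text{ is reached})^p$; the same calculation as in Theorem~\ref{pgrowththm} yields $\E{L_j}=O\bigl((cp\log^2 n)^p\bigr)\tau_j^p\le O\bigl((c'p\log^2 n)^p\bigr)(\bL_j)^p$ for absolute constants $c,c'$.

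Finally I would assemble the bound. Since $L_j=(\text{time at which }j\text{ is reached})^p$, the $\Lf_p$-norm of the vector of client latencies produced by the algorithm is $\bigl(\sum_j L_j\bigr)^{1/p}$, and Jensen's inequality applied to the concave map $s\mapsto s^{1/p}$ gives
\[
\E{\bigl(\textstyle\sum_j L_j\bigr)^{1/p}}\ \le\ \bigl(\E{\textstyle\sum_j L_j}\bigr)^{1/p}\ \le\ \Bigl(O\bigl((c'p\log^2 n)^p\bigr)\textstyle\sum_j(\bL_j)^p\Bigr)^{1/p}\ =\ O(p\log^2 n)\bigl(\textstyle\sum_j(\bL_j)^p\bigr)^{1/p}\ =\ O(p\log^2 n)\cdot\OPT .
\]
Adding the facility-opening and connection costs, the expected total cost of the returned solution is $O(p\log n\log m)\OPT+O(\OPT)+O(p\log^2 n)\OPT=O\bigl(p\log n\max\{\log n,\log m\}\bigr)\OPT$, and the standard conditioning-on-success argument (the failure probability being $1/\poly(m)$) completes the proof.

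The crux is the Jensen step in the last display: it is exactly what converts the $(p\log^2 n)^p$ factor of Theorem~\ref{pgrowththm} into $p\log^2 n$ for the $\Lf_p$ objective, and it works only because the analysis supplies the bound $\E{L_j}\le O\bigl((c'p\log^2 n)^p\bigr)(\bL_j)^p$ \emph{per client} rather than only in aggregate. The remaining points --- that the convex relaxation is a legitimate lower bound and is efficiently ($(1+\e)$-)solvable, and that $\tau_j\le12\bL_j$ persists when $(x,y,z)$ is drawn from this relaxation --- are routine.
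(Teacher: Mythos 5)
Your proof is correct, but it reaches Corollary~\ref{lpnormcor} by a different route than the paper. Both proofs share the same core machinery---the per-client latency estimate from Theorem~\ref{pgrowththm} and the $s\mapsto s^{1/p}$ Jensen step---but they certify the latency term differently. The paper enumerates guesses $\Lat$ for the optimal $\Lf_p$-norm of the latencies in powers of $2$, adds the \emph{linear} budget constraint $\sum_{j,i,t}t^px_{ij,t}\le\Lat^p$ to \eqref{mlufllp} (keeping a linear objective), selects the guess minimizing $\sum_{i,t}f_iy_{i,t}+\sum_{j,i,t}c_{ij}x_{ij,t}+\Lat$, and controls latency via $\sum_j\LC_j\le\Lat^p$ together with $\E{L_j}=O\bigl((p\log^2 n)^p\bigr)\LC_j$, where $\LC_j=\sum_{i,t}t^px_{ij,t}$. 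You instead solve one convex program whose objective contains the $\Lf_p$-norm term $\bigl(\sum_j(\bL_j)^p\bigr)^{1/p}$ outright, and control latency through $\tau_j\le 12\bL_j$ plus growth $p$, giving $\E{L_j}\le O\bigl((p\log^2 n)^p\bigr)(\bL_j)^p$. Note that your per-client bound is not literally the one stated in Theorem~\ref{pgrowththm} (that one uses $\LC_j$, and $\LC_j\ge(\bL_j)^p$ always); you obtain it by going back one step to $\E{L_j}=O(\cdots)\ld(\tau_j)$ and applying Markov to $t$ rather than to $\ld(t)$, which is fine. Dispensing with guessing is a genuine simplification, at the modest cost of solving a convex program in place of an LP; you also implicitly rely on the power-mean inequality $m\bbL^p\le\sum_j(\bL_j)^p$ to make the $\Time=\poly(m)$ removal and the early-phase clients work with $(\bL_j)^p$ in place of $\LC_j$, and this does go through.
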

\begin{proof}
As is standard, we enumerate all possible values of the $\Lf_p$-norm of the optimal
client latencies in powers of $2$, losing potentially another factor of
$2$ in the approximation factor. To avoid getting into issues about estimating the
solution-cost for a given guess (since our algorithms are randomized), we proceed as
follows. For a given guess $\Lat$, we solve \eqref{mlufllp} modifying the objective to be   
$\min\ \ \sum_{i,t} f_iy_{i,t}+\sum_{j,i,t} c_{ij}x_{ij,t}$, and
we adding the constraint $\sum_{j,i,t}t^px_{ij,t}\leq \Lat^p$. 
Among all such guesses and corresponding optimal solutions, let $(x,y,z)$ be the solution 
that minimizes $\sum_{i,t} f_iy_{i,t}+\sum_{j,i,t} c_{ij}x_{ij,t}+\Lat$.
Let $\OPT$ denote this minimum value. Note that $\OPT\leq 2\iopt$, where $\iopt$ is the
optimum value of the $\Lf_p$-\mlufl instance. 
We apply Theorem~\ref{pgrowththm} (with $\ld(x)=x^p$) to round $(x,y,z)$.
Let $F$, $C$, and $L=\sum_j L_j$, denote respectively the (random) facility-opening,
connection-, and latency- cost of the resulting solution. 
The bounds in Theorem~\ref{pgrowththm} imply 
that $\E{(\sum_j L_j)^{1/p}}\leq\bigl(\E{\sum_j L_j}\bigr)^{1/p}\leq O(p\log^2 n)\Lat$, 
which combined with the bounds on $\E{F}$ and $\E{C}$, 
shows that the expected total cost is $O\bigl(p\log n\max\{\log n,\log m\}\bigr)\OPT$. 
\end{proof}

We obtain significantly improved guarantees for related \mlufl, metric uniform \mlufl, and
\ml with (convex) latency functions of growth $p$. 
For related \mlufl and \ml, the analyses in Sections~\ref{mlufl-related} and~\ref{lpml}
directly yield an $O\bigl(2^{O(p)}\bigr)$-approximation guarantee since 
for both problems, we can (deterministically) bound the delay of client $j$ by
$O\bigl(\text{$\al$-point of $j$}\bigr)$ (for suitable $\al$).
For metric uniform \mlufl, we obtain an $O(1)$-approximation bound 
as a consequence of Theorem~\ref{metricredn}: this follows because one can devise an
$O(1)$-approximation algorithm for the zero-facility-cost version of the problem by adapting
the ideas used in~\cite{BaevRS09}.
Thus, we obtain an $O(1)$-approximation for the $\Lf_p$-versions of related-\mlufl,
metric uniform \mlufl, and \ml.  

\vspace{-2ex}
\paragraph{$k$-route \mlufl with length bounds.}
All our algorithms easily generalize to 
{\em $k$-route length-bounded \mlufl}, where 
we are given a budget $B$ and we may use (at most) $k$ paths starting at $r$ of
($d$-) length at most $B$ to traverse the open facilities and activate them. 
This captures the scenario where one can use $k$ vehicles in parallel, each with capacity
$B$, starting at the root depot to activate the open facilities. Observe that with
$B=\infty$, we obtain a generalization of the {\em $k$-traveling repairmen} problem
considered in~\cite{FHR}. 

We modify \eqref{mlufllp} by setting $\Time=B$ and setting the RHS of \eqref{eq:4} to
$kt$. In Algorithm~\ref{genalg}, we now obtain a tour $\Tour_\ell$ in phase $\ell$ of
(expected length) $O(\log^2 n)kt_\ell$. Each facility $i\in\Tour_\ell$ satisfies
$\sum_{t\leq t_\ell} y_{i,t}>0$, so $d(i,r)\leq t_{\ell}$, and we may therefore divide
$\Tour_\ell$ into $k$ tours of length at most $O(\log^2n)t_\ell$. Thus, we obtain the
same guarantee on the expected cost incurred, and we violate the budget by an
$O(\log^2 n)$-factor, that is, we get a bicriteria 
$\bigl(\polylog,O(\log^2 n)\bigr)$-approximation. 
Similarly, for related \mlufl and \ml, 
we obtain an $\bigl(O(1),O(1)\bigr)$-approximation. 
For \ml, we may again use either \eqref{mllp1} or \eqref{mllp2}: in both LPs we set
$\Time=B$; in \eqref{mllp1}, we now have the constraint $\sum_e d_ez_{e,t}\leq kt$, and in
\eqref{mllp2}, the RHS of \eqref{onep} is now $k$.
For metric uniform \mlufl, we modify \eqref{uniflp} in the obvious way: we now have
$\sum_t y_{i,t}\leq k$ for each time $t$, and $t$ now ranges from 1 to $B$. We can again
apply Theorem~\ref{metricredn} here to obtain a ({\em unicriteria}) $O(1)$-approximation
algorithm: 
for the zero-facility-location problem (where we may now ``open'' at most $kB$   
facilities), we can adapt the ideas in~\cite{BaevRS09} to devise an $O(1)$-approximation
algorithm. 

Finally, these guarantees extend to latency functions of bounded growth  
(in the same way that guarantees for \mlufl extend to the setting with latency functions).  
Thus, in particular, we obtain an {\em $O(1)$-approximation algorithm} for the
$\Lf_p$-norm $k$-traveling repairmen problem; this is the {\em first} approximation 
guarantee for this problem. 

\vspace{-2ex}
\paragraph{Non-uniform latency costs.}
We consider here the setting where each client $j$ has a (possibly different) time-to-cost
conversion factor $\ld_j$, which measures $j$'s sensitivity to time delay (vs. connection
cost); so the latency cost of a client $j$ is now given by $\ld_jt_j$, where $t_j$ is the
delay faced by the facility serving $j$.  

All our guarantees in Sections~\ref{mlufl-related},~\ref{mlufl-unif}, and~\ref{lpml}
continue to hold in this non-uniform latency setting. 
In particular, we obtain a constant approximation guarantee for related metric \mlufl,
metric uniform \mlufl, and \ml. 
Notice that the metric uniform \mlufl problem cannot now be solved via a reduction to the
metric-\ufl variant discussed in Section~\ref{mlufl-unif}; however we can still use
Corollary~\ref{redncorr} to obtain a 10.773-approximation. For general \mlufl, it is not
hard to see that our analysis goes through under the the assumption $\Time=\poly(m)$. However, 
the scaling trick used to bypass this assumption leads to an extra 
$O\bigl(\log(\frac{\lambda_{\max}}{\lambda_{\min}})\bigr)$ factor in the approximation.

Recall that the scaling factor $\bbL$ (in the definition of $t_\ell$)
in Section~\ref{mlufl-gen} was defined as $\sum_j \bL_j/m$, 
where $\bL_j = \sum_{i,t} tx_{ij,t}$. Now, $\bL_j = \sum_{i,t} \lambda_jtx_{ij,t}$,
and we set $\bbL = \sum_j \bL_j/\sum_j \lambda_j$. 
One can again argue that the expected latency-cost of each client $j$ is at most 
$\ld_j\cdot O(\log^2 n)\cdot\max\bigl\{\bbL,\tau_j\}$, so we incur an $O(\log^2 n)$-factor in
the latency-cost.
The number of phases, however, is
$\mathcal{N} := \ceil{\log_2(2\tau_{\max}/\bbL) + 4\log_2 m}$, and $2\tau_{\max}/\bbL
= O\bigl(\frac{\tau_{\max}\sum_j\lambda_j}{\sum_j \bL_j}) \le  
O\bigl(\frac{m\lambda_{\max}}{\lambda_{\min}})$, which gives an extra
$\log_2(\lambda_{\max}/\lambda_{\min})$ \nolinebreak \mbox{factor in the facility-opening cost.}

Finally, as before, these guarantees also translate to the $k$-route length-bounded
versions of our problems.

\appendix

\section{Proofs omitted from the main body} \label{append-gen} \label{append-lpml}

\begin{proofof}{Theorem~\ref{lbthm}}
Consider a \gst instance $\bigl(H=(V,E),r,\{d_e\}_{e\in E},\{G_j\sse V\}_{j=1}^m)$. 
Let $n=|V|$.  
We may assume that $H$ is the complete graph, $d$ is a metric, and the groups are
disjoint. We abbreivate $\rho_{n,m}$ to $\rho$ below. 
Let $\Qc$ denote the collection of all solutions to the path-variant of \gst; that is,
$\Qc$ consists of all paths starting at $r$ that visit at least one node of each
group. For a path $Q\in\Qc$ and a group $G_j$, define $Q_j$ to be the portion of $Q$ from
$r$ to the first node of $G_j$ lying on $Q$. Let $d_j(Q)=\sum_{e\in Q_j}d_e$ be the length
of $Q_j$; that is, $d_j(Q)$ is the latency of group $j$ along path $Q$. 
Consider the following LP-relaxation for the path-variant of \gst, and its dual. 
We have a variable $x_Q$ for every $Q\in\Qc$ indicating if path $Q$ is chosen. 
We use $Q$ below to index the paths in $\Qc$.

\vspace{-5pt}
\noindent \hspace*{-6ex}
\begin{minipage}[t]{.49\textwidth}
\begin{alignat*}{3}
\min & \quad & M & \tag{P'} \label{gsp-p} \\
\text{s.t.} && \sum_Q d_j(Q)x_Q & \leq M \qquad && \forall j \\[-1ex]
&& \sum_Q x_Q & \geq 1 \\[-1ex]
&& x & \geq 0.
\end{alignat*}
\end{minipage}
\quad \rule[-19ex]{1pt}{16ex}
\begin{minipage}[t]{.49\textwidth}
\begin{alignat}{3}
\max & \quad & \al & \tag{D'} \label{gsp-d} \\
\text{s.t.} && \sum_j \ld_jd_j(Q) & \geq \al \qquad && \forall Q \label{mglineq} \\[-1ex]
&& \sum_j\ld_j & \leq 1 \notag \\[-1ex]
&& \al,\ld & \geq 0. \notag
\end{alignat} 
\end{minipage}

\medskip 
\noindent
\eqref{gsp-p} has an exponential number of variables. But observe that separating over the
constraints \eqref{mglineq} in the dual involves solving an \mgl problem. The minimum
value (over all $Q\in\Qc$) of the LHS of \eqref{mglineq} is the optimal value of the \mgl
problem defined by $\{(G_j,\ld_j)\}$, where we seek to minimize the weighted sum of
client latency costs. (This weighted group latency problem 
can be reduced to the unweighted problem by ``creating'' $\ld_j$ copies of each group
$G_j$ (we can scale the $\ld_j$s so that they are integral); equivalently (instead of
explicitly creating copies), one can simulate this copying-process in whatever algorithm
one uses for (unweighted) \mgl.)    
Thus, a $\rho$-approximation algorithm for \mgl yields a $\rho$-approximate separation
oracle for \eqref{gsp-d}. Now, applying an argument similar to the one used by Jain et
al.~\cite{JainMS03} shows that one can use this to (also) obtain a $\rho$-approximate
solution $(x,M)$ to \eqref{gsp-p}.   

We now use randomized rounding to round $(x,M)$ and obtain a group Steiner tree of cost at 
most $O(\log m)M$. We pick path $Q$ independently with probability 
$\min\{4\log m\cdot x_Q,1\}$. Let $\Qc'\sse\Qc$ denote the collection of paths
picked. Note that for every group $G_j$, we have 
$\sum_{Q\in\Qc:d_j(Q)\leq 2M}x_Q\geq\frac{1}{2}$. So a standard
set-cover argument shows that with probability at least $1-1/m$, for every $j$, 
there is some path $Q^\br j\in\Qc'$ such that $d_j(Q^\br j)\leq 2M$. 
We may assume that $\sum_Q x_Q=1$, so Chernoff bounds show that $|\Qc'|=O(\log m)$ with
overwhelming probability. 
The group Steiner tree $\T$ consists of the union of all the $Q^\br j_j$ (sub)paths
(deleting edges to remove cycles as necessary). Clearly, the cost of $\T$ is at most
$|\Qc'|\cdot 2M=O(\log m)M$. Note that $\T$ also yields a path of length $O(\log m)M$
starting at $r$ and visiting all groups, so the integrality gap of \eqref{gsp-p} is
$O(\log m)$.
\end{proofof}

\paragraph{Extension of Lemma~\ref{apxsep} to \textsf{MGL}.}
Notice that we did not use anything specific to the minimum-latency problem
in the proof, and so essentially the same proof also applies to \mgl. 
Recall that in \mgl, we have a set $\F$ of facilities and a $d$-metric on
$\F\cup\{r\}$, and a collection of $m$ groups $\{G_j\sse\F\}$. 
Analogous to $\mlpath{a,b}$, the LP-relaxation with path variables for \mgl and its dual
are as follows. 

\newcommand{\mglpath}[1]{\ensuremath{\bigl(\text{LP'}_{\Pcol}^{({#1})}\bigr)}}
\newcommand{\mgltree}[1]{\ensuremath{\bigl(\text{LP'}_{\Tcol}^{({#1})}\bigr)}}

\vspace{-5pt}
\noindent \hspace*{-3ex}
\begin{minipage}[t]{.47\textwidth}
\begin{alignat*}{3}
\min & \quad & \sum_{j,t}tx_{j,t} & \tag{LP'$_{\Pcol}^{(a,b)}$} \label{mgllp} \\
\text{s.t.} && \sum_{t} x_{j,t} & \geq 1 \quad\ && \forall j \\ 
&& \sum_{P\in\Pcol_{bt}} z_{P,t} & \leq a \quad\ && \forall t \\ 
&& \sum_{P\in\Pcol_{bt}: G_j\cap P\neq\es}z_{P,t} & \geq \sum_{t'\leq t}x_{j,t'} 
\quad\ && \forall j,t \\ 
&& x, z & \geq 0. \notag
\end{alignat*}
\end{minipage}
\ \rule[-29ex]{1pt}{26ex}\!\!\!
\begin{minipage}[t]{.54\textwidth}
\begin{alignat}{3}
\max & \quad & \sum_j\al_j & - a\sum_t\beta_t \tag{LD'$_{\Pcol}(a,b)$} \label{mgldlp} \\
\text{s.t.} && \al_j & \leq t+\sum_{t'\geq t}\tht_{j,t'} \quad\ && \forall j,t \notag \\
&& \sum_{j:G_j\cap P\neq\es}\tht_{j,t} & \leq \beta_t \quad\ && \forall t, P\in\Pcol_{bt}
\label{dmglineq} \\ 
&& \al, \beta, \tht & \geq 0. \notag 
\end{alignat}
\end{minipage}

\medskip \noindent
The LP-relaxation $\mgltree{a,b}$ with tree variables is obtained by replacing
$\Pcol_{bt}$ with $\Tcol_{bt}$ in \eqref{mgllp}. 
The orienteering problem that we need to solve now to separate over the constraints
\eqref{dmglineq} is {\em group orienteering}: given a reward $\tht_{j,t}$ for each group
$G_j$, we want to determine if there is a path (or tree) rooted at $r$ of length at
most $bt$ such that the total reward of the groups covered by it is more than $\beta_t$.
Given these changes, the proof that one can obtain a feasible solution $(x,y)$ to
$\mgltree{a,b}$ of cost at most the optimal value of $\mglpath{1,1}$ is as in the
proof of Lemma~\ref{apxsep}. 


\begin{thebibliography}{10}

\bibitem{ArcherLW08}
A.~Archer, A.~Levin, and D.~Williamson. 
\newblock A faster, better approximation algorithm for the minimum latency problem. 
\newblock {\em SIAM J. Comput.}, 37(5):1472--1498, 2008.

\bibitem{AzarGY09}
Y.~Azar, I.~Gamzu, and X.~Yin. 
\newblock Multiple intents re-ranking. 
\newblock In {\em Proceedings of 41st STOC}, pages 669--678, 2009.

\bibitem{BaevRS09}
I.~Baev, R.~Rajaraman, and C.~Swamy. 
\newblock Approximation algorithms for data placement problems.
\newblock {\em SIAM Journal on Computing}, 38(4):1411--1429, 2008. 

\bibitem{BangjensenFJ95}
J.~Bang-Jensen, A.~Frank, and B.~Jackson.
\newblock Preserving and increasing local edge-connectivity in mixed graphs.
\newblock {\em SIAM Journal on Discrete Mathematics}, 8(2):155--178, 1995.

\bibitem{BansalGK10}
N.~Bansal, A.~Gupta, and R.~Krishnaswamy.
\newblock A constant factor approximation algorithm for generalized min-sum set cover. 
\newblock In {\em Proceedings of 21st SODA}, pages 1539--1545, 2010.

\bibitem{BienstockGSW93}
D.~Bienstock, M.~Goemans, D.~Simchi-Levi, and D.~Williamson. 
\newblock A note on the prize collecting traveling salesman problem. 
\newblock {\em Math. Program.}, 59:413--420, 1993.

\bibitem{BC+}
A. Blum, P. Chalasani, D. Coppersmith, B. Pulleyblank, P. Raghavan, and M. Sudan.
\newblock The Minimum Latency Problem.
\newblock In {\em Proceedings of 26th STOC}, pages 163--171, 1994.

\bibitem{Byrka07}
J.~Byrka.
\newblock An optimal bifactor approximation algorithm for the metric uncapacitated
  facility location problem. 
\newblock In {\em Proceedings of the 10th APPROX}, pages 29--43, 2007. 

\bibitem{CCGG}
M.~Charikar, C.~Chekuri, A.~Goel, and S.~Guha. 
\newblock Rounding via trees: deterministic approximation algorithms for Group Steiner
Trees and k-median. 
\newblock In {\em Proeedings of the 30th STOC}, pages 114--123, 1998. 

\bibitem{CGRT}
K. Chaudhuri, P. B. Godfrey, S. Rao, and K. Talwar.
\newblock Paths, Trees and Minimum Latency Tours.
\newblock In {\em Proceedings of 44th FOCS}, pages 36--45, 2003.

\bibitem{ChekuriKP08}
C.~Chekuri, N.~Korula, and M.~P\'{a}l. 
\newblock Improved algorithms for orienteering and related problems. 
\newblock In {\em Proceedings of 19th SODA}, pages 661--670, 2008.

\bibitem{ChudakS98}
F.~Chudak and D.~Shmoys.
\newblock Improved approximation algorithms for the uncapacitated facility
  location problem.
\newblock {\em {SIAM} Journal on Computing}, 33(1):1--25, 2003.

\bibitem{EisenbrandGRS08}
F.~Eisenbrand, F.~Grandoni, T.~Rothvo\ss, and G.~Sch\"{a}fer. 
\newblock Approximating connected facility location problems via random facility sampling
and core detouring. 
\newblock In {\em Proceedings of 19th SODA}, pages 1174--1183, 2008.

\bibitem{FHR}
J.~Fakcharoenphol, C. Harrelson, and S.~Rao
\newblock The $k$-travelling repairman problem.
\newblock {\em ACM Trans. on Alg.}, Vol 3, Issue 4, Article 40, 2007.

\bibitem{FakcharoenpholRT03}
J.~Fakcharoenphol, S.~Rao, and K.~Talwar.
\newblock A tight bound on approximating arbitrary metrics by tree metrics.
\newblock In {\em Proceedings of 35th STOC}, pages 448--455, 2003.

\bibitem{FeigeLT04}
U.~Feige, L.~Lov\'{a}sz, and P.~Tetali.
\newblock Approximating min sum set cover. 
\newblock {\em Algorithmica}, 40(4):219--234, 2004.

\bibitem{GKR}
N.~Garg, G.~Konjevod, and R.~Ravi.
\newblock A polylogarithmic approximation algorithm for the group Steiner tree
problem. 
\newblock {\em Journal of Algorithms}, 37(1):66--84, 2000.

\bibitem{GoemansB90}
M.~Goemans and D.~Bertsimas. 
\newblock Survivable networks, linear programming relaxations and the parsimonious
property. 
\newblock {\em Mathematical Programming}, 60:145--166, 1993.

\bibitem{GK}
M. Goemans and J. Kleinberg.
\newblock An improved approximation ratio for the minimum latency problem.
\newblock In {\em Proceedings of 7th SODA}, pages 152--158, 1996.

\bibitem{GuptaKKRY01}
A.~Gupta, J.~Kleinberg, A.~Kumar, R.~Rastogi, and B.~Yener.
\newblock Provisioning a virtual private network: A network design problem for
  multicommodity flow.
\newblock In {\em Proceedings of the 33rd Annual {ACM} Symposium on Theory of
  Computing}, pages 389--398, 2001.

\bibitem{N10}
A.~Gupta, R. Krishnaswamy, V. Nagarajan, and R.~Ravi.
\newblock {\em Approximation Algorithms for Optimal Decision Trees and Adaptive TSP Problems.}
\newblock arXiv:1003.0722, to appear in ICALP 2010.

\bibitem{HalperinK03}
E.~Halperin and R.~Krauthgamer.
\newblock Polylogarithmic inapproximability. 
\newblock In {\em Proceedings of 35th STOC}, pages 585--594, 2003. 

\bibitem{LinV92}
J.~H. Lin and J.~S. Vitter.
\newblock $\epsilon$-approximations with minimum packing constraint violation.
\newblock In {\em Proceedings of the 24th Annual {ACM} Symposium on Theory of
  Computing}, pages 771--782, 1992.

\bibitem{JainMS03}
K.~Jain, M.~Mahdian, and M.~Salavatipour. 
\newblock Packing Steiner trees. 
\newblock In {\em Proceedings of 14th SODA}, pages 266--274, 2003. 

\bibitem{MirchandaniF90}
P.~Mirchandani and R.~Francis, eds.
\newblock {\em Discrete {L}ocation {T}heory}.
\newblock John Wiley and Sons, Inc., New York, 1990.

\bibitem{Nagarajan09}
V.~Nagarajan. 
\newblock {\em Approximation Algorithms for Sequencing Problems}. 
\newblock Ph.D. thesis, Tepper School of Business, Carnegie Mellon University, 2009.

\bibitem{RaviS99}
R.~Ravi and F.~S. Selman.
\newblock Approximation algorithms for the traveling purchaser problem and its
  variants in network design.
\newblock In {\em Proceedings of the 7th Annual European Symposium on
  Algorithms}, pages 29--40, 1999.

\bibitem{ShmoysSL04}
D.~Shmoys, R.~Levi, and C.~Swamy. 
\newblock Facility location with service installation costs.
\newblock In {\em Proceedings of 15th SODA}, pages 1081--1090, 2004.

\bibitem{ShmoysTA97}
D.~B. Shmoys, {\'E}.~Tardos, and K.~I. Aardal.
\newblock Approximation algorithms for facility location problems.
\newblock In {\em Proceedings of the 29th Annual {ACM} Symposium on Theory of 
  Computing}, pages 265--274, 1997.

\bibitem{ShmoysW90}
D.~Shmoys and D.~Williamson. 
\newblock Analyzing the Held-Karp TSP bound: a monotonicity property with application. 
\newblock {\em Inf. Process. Lett.}, 35(6):281--285, 1990.

\bibitem{SwamyK04}
C.~Swamy and A.~Kumar.
\newblock Primal-dual algorithms for connected facility location problems.
\newblock {\em Algorithmica}, 40(4):245--269, 2004. 

\bibitem{TothV02}
P.~Toth and D.~Vigo, eds. 
\newblock {\em The Vehicle Routing Problem}. 
\newblock SIAM Monographs on Discrete Mathematics and Applications, Philadelphia, 2002.

\bibitem{Wolsey80}
L. Wolsey.
\newblock Heuristic analysis, linear programming and branch and bound. 
\newblock {\em Mathematical Programming Study}, 13:121-134, 1980.

\end{thebibliography}
\end{document}